\documentclass{easychair}
\usepackage{doc}

\usepackage{graphicx}
\usepackage{underscore}       
\usepackage{tablefootnote}
\usepackage{algorithmic}
\usepackage[utf8]{inputenc}    % utf8 support       
\usepackage[T1]{fontenc}       % code for pdf file  
\usepackage{amssymb}
\usepackage{amsthm}
\usepackage{multirow}
\usepackage{pgf}
\usepackage{tikz}
\usetikzlibrary{arrows,automata,positioning}
\usepackage{stmaryrd}
\usepackage{mathtools}
\usepackage{appendix,xspace}
\usepackage{multirow,hyperref}
\usepackage{mathpartir}

\newcommand{\dui}
{\hyperref[dui]{\textbf{dui}}\xspace}

\newtheorem{definition}{Definition}
\newtheorem{remark}{Remark}
\newtheorem{example}{Example}
\newtheorem{theorem}{Theorem}
\newtheorem{lemma}{Lemma}

\usepackage{hyperref}
\hypersetup{
	colorlinks,
	citecolor=brown,
	final,
	linkcolor=blue,
}
\newcommand\calF{{\cal F}}
\newcommand\calG{{\cal G}}
\newcommand\calH{{\cal H}}
\newcommand\sloc[1]{l^{\mathrm s}_{#1}}
\newcommand\dloc[1]{l^{\mathrm d}_{#1}}
\newcommand\trans{\mathbb{T}}
\usepackage[normalem]{ulem}
\usepackage{thmtools,thm-restate}
\usepackage{enumerate}
\usepackage{wrapfig}

\def\orcidID#1{\href{http://orcid.org/#1}{\raisebox{-1.25pt}{\includegraphics{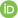}}}}

\renewcommand\arraystretch{1}
\newcommand\nstep[2]{#1^{[#2]}}
\begin{document}
	%
	%\title{Automatic Synthesis of Polynomial Probabilistic Invariants}
	\title{Polynomial Invariants for Probabilistic Transition Systems with Unbounded Support}

    \author{Anne Schreuder\inst{1}\orcidID{0009-0007-3651-9401}\and
Lorenz Winkler\inst{2}\orcidID{0009-0000-0177-2764} \and
Laura Kov\'acs\inst{2}\orcidID{0000-0002-8299-2714} \and
C.-H. Luke Ong\inst{3}\orcidID{0000-0001-7509-680X}}
	\institute{
    Aalto University, Finland\and
    TU Wien, Austria\and
    Nanyang Technological University, Singapore
}
	\titlerunning{Polynomial Invariants for PTS with Unbounded Support}
    \authorrunning{Schreuder, Winkler, Kov\'acs, Ong}
	
	\maketitle              % typeset the header of the contribution
\begin{abstract}
	We study the synthesis of polynomial invariants for probabilistic transition systems (PTS) based on martingale theory. 
	We present tractable methods to verify that such polynomials are indeed invariants, in the sense that their expected value upon termination is the same as their value at the start of the computation. 
	We do this by applying the Optional Stopping Theorem {(OST)} in the form of a specific precondition.
    This precondition requires the existence of an integrable \textbf{d}ominating function for the martingale expression, which implies \textbf{u}niform  \textbf{i}ntegrability; we refer to this condition as {\bf dui}.   
    For linear PTS we simplify the {\bf dui} property to proving finiteness of the expected value of an expression depending on the update matrix, the degree of the martingale expression, and the stopping time. 
    {Specifically, if all random samples have finite moments and we can verify a moment bound on the runtime of a linear loop, then we can automatically synthesise polynomial loop invariants that satisfy the OST.}
    Notably, {\bf dui}
    allows for the sampled distributions to have unbounded support, {which is a novel contribution to the field.}  
    %{\color{blue} we only require these sampled distributions to have finite moments.}
    %it only requires the {\color{blue} sampled distributions, i.e.~the distributions from which random samples are taken,} to have finite moments. 
%
    We  implemented and evaluated our method
    to symbolically compute
    upper and lower bounds for moments of program variables upon termination. 
\end{abstract}
	%
	%
	%
	%%%%%%%%%%%%%%%%%%%%%%%%%%%%%%%%%%%%%%%%%%%%%%%%%%%%%%%%%%%%%%%%%%%%%%%%%%%%%
	% Introduction %
	%%%%%%%%%%%%%%%%%%%%%%%%%%%%%%%%%%%%%%%%%%%%%%%%%%%%%%%%%%%%%%%%%%%%%%%%%%%%%%
	
\section{Introduction}
	
Probabilistic programs are computer programs that make random choices. 
Their increasingly widespread use in many areas of computer science --- ranging across machine learning, network protocols and robotics --- has led to a recent surge of interest in the semantics of probabilistic programs and methods for reasoning about them.
% In this paper we present useful methods and an algorithmic solution to a central verification problem: \emph{automatic extraction of program invariants}.
In this paper we present a theoretical result for invariants of probabilistic programs that sample from distributions with \emph{unbounded support}. We then show the applicability of our result by \emph{synthesising polynomial invariants} for a restricted class of loops and use them to \emph{automatically derive symbolic bounds} for the expected value of monomials after the loop has terminated.

\paragraph{Problem setting.} 
Following \cite{forsyte,g,DBLP:conf/atva/FengZJZX17,DBLP:conf/cav/ChenHWZ15,DBLP:conf/qest/GretzKM13,DBLP:conf/sas/KatoenMMM10,b}, we explore methods that automatically synthesise expressions over the program variables whose expectation is invariant over the course of the computation. 
%Similar work has been done in \cite{forsyte,g,DBLP:conf/atva/FengZJZX17,DBLP:conf/cav/ChenHWZ15,DBLP:conf/qest/GretzKM13,DBLP:conf/sas/KatoenMMM10,b}.
Mathematically, the invariants we construct are \emph{martingales}. %which are sequences of random variables that are expectation invariant over the computation steps.
Elegant and effective, martingales are commonplace in the static analysis of probabilistic programs %and have been widely studied 
\cite{a,c,d,DBLP:journals/pacmpl/HarkKGK20,DBLP:conf/tacas/KuraUH19,g,h,DBLP:journals/pacmpl/Huang0CG19,DBLP:journals/corr/abs-2010-06367,DBLP:journals/pacmpl/BatzKKM21,DBLP:journals/corr/abs-2010-03444,DBLP:conf/vmcai/FuC19,DBLP:conf/popl/ChatterjeeNZ17}.

We restrict ourselves to invariants of probabilistic programs that have the same expected value before and after the program has run.
%	For non-probabilistic programs this distinction does not exist. 
%	However, probabilistic programs may stop after different number of steps depending on the random samples made. 
In general, the runtime of a probabilistic program is not a concrete number but a probability distribution.
Consequently, the expectation of a step-wise invariant at the point of termination (\emph{qua} stopping time) may differ from its expectation after any fixed number of steps (see Example~\ref{example: gambling}). 
For non-probabilistic programs this effect does not exist.
First observed in betting strategies, this phenomenon has been analysed extensively in probability theory, leading to Doob's celebrated Optional Stopping Theorem (OST)~e.g.~\cite[Theorem 5.3.1]{cohen}. 
This theorem yields preconditions for the expectation of a martingale %random variable
taken at a stopping time to coincide with the expectation taken at the beginning.
%after the initial expression.
%

The Optional Stopping Theorem (OST) has recently been successfully applied to the static analysis of probabilistic programs~\cite{costanalysis,wang2021,DBLP:journals/pacmpl/BatzKKM21,DBLP:conf/atva/FengZJZX17,piecewise_analysis_prob2026,chatterjee2024quantitative}. 
However the phenomenon of the necessity of an ``extra step'', that is verifying a precondition of the OST, 
%required 
%to check suitable preconditions 
had already been observed in the analysis of nested loops~\cite{wp1,DBLP:journals/pacmpl/HarkKGK20}.
When reasoning about such probabilistic programs, it is common to use inductive arguments for the inner loop. 
For reasons similar to those described earlier, it is necessary to impose appropriate OST-type preconditions on the inner-loop invariants, in order to formulate sound proof rules. 
These methods have successfully been applied in
\cite{wp1,DBLP:conf/atva/FengZJZX17,DBLP:conf/cav/ChenHWZ15,DBLP:conf/qest/GretzKM13,DBLP:conf/vmcai/FuC19}. 
In particular, for nested loop analysis it is important to handle random updates with unbounded support. Otherwise, only loops that are entered at most $N$ times can be analysed. Although our work  mainly restricts to single loop programs, our approach does not require bounds on the support of the sampled distributions. 
%As  argued by~\cite{DBLP:journals/pacmpl/Huang0CG19}, great care is needed when reasoning about probabilistic programs.

\begin{wrapfigure}{r}{0.45\textwidth}
\vspace{-10pt}
    \begin{algorithmic}
        \STATE {$x := x_0$}
        \STATE{$y := y_0$}
        \STATE{$z := z_0$}
        \STATE{$k:=0$}
        \WHILE {$x \geq 0$} 
        \STATE{$k:= k+1$}
        \STATE{$u \sim \mathrm{Uniform}(-1,0)$}
        \STATE {$x := x + u$}
        \STATE {$n_1\sim\mathrm{Normal}(1,2)$}
        \STATE {$y:=y+n_1+u$}
        \STATE {$n_2\sim\mathrm{Normal}(-2,4)$}
        \STATE{$z:=z+y+n_2$}
        \ENDWHILE	
    \end{algorithmic}
    \caption{Linear PTS  with updates over distributions with unbounded support.}
    \label{fig:running-example-linear-loop}
\end{wrapfigure}

\paragraph{Our approach.}
We work with \emph{probabilistic transition systems} (PTS) \cite{a,b,h,DBLP:conf/cav/ChenHWZ15}, a general model of probabilistic imperative programs. 
Any standard imperative program with random samples and (nested) loops can be expressed as a PTS. An example of such a program is given in Figure~\ref{fig:running-example-linear-loop}.
In particular, we allow our programs to manipulate values in the field $\mathbb{R}$ of real numbers and to sample from continuous and unbounded probability distributions. Figure~\ref{fig:running-example-linear-loop}, for instance, samples from a continuous uniform and a normal distribution.

We first present a \emph{precondition of the OST (Theorem~\ref{thm: ost}), which we call {\dui}}. The {\dui} precondition requires the existence of an integrable \textbf{d}ominating function (for the martingale expressions) which implies the martingale is \textbf{u}niformly {\bf i}ntegrable.  
%To the best of our knowledge, our {\bf dui} precondition is novel. 
A key advantage of {\bf dui} is {its generality and its flexibility. 
It does not impose a constant bound on the stopping time (runtime), the program variables, or their updates.}
%
%is that it does not impose any restriction on the stopping time (runtime), nor does it require the program variables to be bounded by a constant. 
As such, our work differs from approaches requiring a constant bound on an invariant    
%using OST preconditions that do not involve a stopping time via uniform integrability \cite{c}, or bounded by a constant 
\cite{DBLP:journals/pacmpl/HarkKGK20,DBLP:conf/atva/FengZJZX17,DBLP:conf/cav/ChenHWZ15,DBLP:conf/qest/GretzKM13}, which is a special case of the {\dui} precondition, or bounded updates \cite{wang2021,costanalysis,piecewise_analysis_prob2026}.

While the {\dui} precondition  is rather abstract, {it is more concrete than the condition of uniform integrability which it implies.} 
{Hence, it can} be used to derive {more tractable} sufficient conditions for invariants satisfying the OST.
%for certain classes of loops.
We demonstrate this by deriving a sufficient condition for polynomial invariants (expressible as a martingale) of simple linear programs {consisting of a single loop} to satisfy the OST (Theorem~\ref{thm:moment-monomials}). 
This {novel} condition is twofold: 
First, the distributions over which we sample must be identically distributed in every iteration, and all moments of those distributions must exist. 
{This is a strict improvement on the state-of-the-art which requires all sample distributions to be bounded \cite{wang2021,costanalysis,piecewise_analysis_prob2026}.}
Second, the runtime must satisfy an expectation bound that depends on the degree of the desired polynomial invariant, the structure of the update matrix, and whether variables from distributions with unbounded support affect the runtime.
We prove this %Theorem~\ref{thm: moment bound}
theorem with the {\dui} precondition.
Moreover, our method can deduce upper and lower bounds on the expected values of higher moments of the program variables {and of the runtime\footnote{
    It is always possible to introduce a new variable that updates by one each time the loop is entered --- effectively storing the runtime.
    This is the role of the variable $k$ in Figure~\ref{fig:running-example-linear-loop}.
} at the end of the computation --- potentially improving the initial bound on the runtime}; bounds upon the moments of Figure~\ref{fig:running-example-linear-loop} are listed in Table~\ref{tab:experimental-results}.

\paragraph{Our contributions.}
%We work with \emph{probabilistic transition systems} (PTS) \cite{a,b,h,DBLP:conf/cav/ChenHWZ15}, a general model of probabilistic imperative programs. 
%
We bring the following main contributions\footnote{with detailed proofs in the appendix}. 
\begin{itemize}
    \item We focus on a {very flexible} precondition for the OST, which requires the existence of an integrable \textbf{d}ominating function for the martingale expression implying that the martingale is \textbf{u}niformly  \textbf{i}ntegrable -- we refer to this condition as {\dui} (Section~\ref{sec: background}). To the best of our knowledge, our {\dui} condition is new {in the study of probabilistic programs} and has the advantage that it does not impose any restrictions on the stopping time {a priori\footnote{
        This precondition is more abstract. Verifying it can sometimes require bounds on the runtime as seen in Section~\ref{subsection:constructing-martingales-linear-programs}. However, there may be other applications of this precondition where such a step is not necessary.
    }}.
    \item We present a sufficient condition for the application of the OST for polynomial martingales of simple linear loops (Section~\ref{sec: linear}). Checking the precondition then simplifies to proving finiteness of moments of the stopping time {(Theorem~\ref{thm:moment-monomials})}.
    \item Taking the number of finite moments of the stopping time as an input, we construct martingale expressions, which satisfy the OST based on the {\dui} condition (Section~\ref{subsection:constructing-martingales-linear-programs}). Those martingales can then be used to symbolically compute upper and lower bounds for the moments of program variables upon termination. {Notably this includes variables having negative, unbounded increments that depend on the program state.} 
    %\item {}
    \item While we make no claims about completeness of our method and still require (potentially manual) proofs of finiteness of stopping times in advance, we give examples where our method is applicable (Section~\ref{section:experimental-results}). We implemented our work on top of the \texttt{Polar} analyser~\cite{DBLP:moosbrugger-polar} and evaluated it on programs with distributions of unbounded support (Table~\ref{tab:experimental-results}). {To the best of our knowledge, existing approaches for cost analysis or bound computation in probabilistic programs cannot handle our examples with unbounded support.} {Our implementation and instructions for reproducing our results are publicly available\footnote{\url{https://github.com/probing-lab/polar/tree/166b275}}.}
\end{itemize}

	%%%%%%%%%%%%%%%%%%%%%%%%%%%%%%%%%%%%%%%%%%%%%%%%%%%%%%%%%%%%%%%%%%%%%%%%%%%%%
	% Syntax of Probabilistic Programs %
	%%%%%%%%%%%%%%%%%%%%%%%%%%%%%%%%%%%%%%%%%%%%%%%%%%%%%%%%%%%%%%%%%%%%%%%%%%%%%%
	
	\section{Probabilistic Transition Systems and Invariants} 
    \label{sec: program}\label{sec:invariant}

    As models of imperative programs we consider probabilistic transition systems (PTS).  
	We distinguish (deterministic) program variables 
	%$\boldsymbol{X} = \{x_1, x_2, \cdots, x_n\}$ 
	${X} = \{x_1, x_2, \cdots, x_n\}$
	which are assigned deterministically, and random sample variables ${R} = \{r_1, r_2, \cdots, r_m\}$ which are sampled from probability distributions. 
	All variables  take values in $\mathbb{R}$.
	We assume that the random samples are
	stochastically independent of each other and identically, independently distributed over time;
	{moreover we assume that the sampled distributions are known and that all of their moments are finite.}
	%moreover their distribution are i.i.d.~over time, and all moments exist.

	%Recall that a function $X: \Omega \rightarrow \mathbb{R}$ defined on a probability space $(\Omega, \calF, \mathbb{P})$ is a {\em random variable} if $X$ is $\mathbb{B}(\mathbb{R})$-$\calF$-measurable\footnote{The $\sigma$-algebra $\mathbb{B}(\mathbb{R})$ is the one generated by open intervals of $\mathbb{R}$.}, i.e.~for all $A \in \mathbb{B}(\mathbb{R})$ -- the $\sigma$-algebra generated by the open subsets of $X$, we have $X^{-1}(A) \in \calF$.	
	%The $k$-$th$ \emph{moment} of $X$ is the integral $\int_{\Omega} X^k \ d\mathbb{P}$, which exists if $\int_{\Omega} |X|^k \ d\mathbb{P}$ is finite.

	\begin{definition}[Probabilistic Transition Systems - PTS~\cite{a,b,h}]
		\label{def: pts}
		%%~\\
		A \emph{probabilistic transition system (PTS)} \index{transition system} \index{transition system!probabilistic}
		%$\Pi$
		is a tuple
		$\Pi = \langle W, X, R, \vec{x}_0, L, l_0, l_F, \trans \rangle$ where
		\begin{enumerate}[(i)]
			\item
			$W = (\Omega, \calF, \mathbb{P})$ is a probability space.
			\item
			$X = \{x_1, x_2,\dotsc , x_n\}$ are the program variables 
			and 
			$R = \{r_1, r_2, \dotsc, r_m\}$ are the random variables defined on the probability space $W$.
			{Each random variable $r_i$ represents a sampled update from a random distribution, e.g.~$r_1 \sim N(0, 1)$. We assume that all $x_i$ are updated deterministically --- based on the current values of the $r_i$'s and $x_j$'s.}
			
			\item
			%$D_0$ is the initial distribution of the program variables such that all moments exist.
			$\vec{x}_0 \in \mathbb{R}^n$ is the vector of initial values of the program variables $X = \{x_1, x_2,\dotsc , x_n\}$.
			
			\item $L$ is a finite set of (program) \emph{locations};
			$l_0 \in L$ is the initial location and $l_F \in L$ is the final location. 
			{The program terminates once $l_F$ is reached.}
			\item 
			$\trans = \{\tau_1, \tau_2, \dotsc, \tau_p\}$ is a finite set of transitions.
			Each $\tau \in \trans$ is a quadruple $\langle \sloc{\tau}, \phi_{\tau}, f_{\tau}, \dloc{\tau} \rangle$ where $\sloc{\tau}$ is the source location{, which is never $l_F$,} and $\dloc{\tau}$ the destination location; $\phi_{\tau}$ is the {\em guard} \textit{assertion}, which is a boolean formula over polynomial inequalities over $X$; 
			and $f_{\tau}: \mathbb{R}^{n} \times \mathbb{R}^{m} \rightarrow \mathbb{R}^{n}$, the {\em update function}, which takes values of program variables and random samples $(\vec{x}, \vec{r}) \in \mathbb{R}^{n} \times \mathbb{R}^{m}$
			and returns the new values $\vec{x} \in \mathbb{R}^n$ for the program variables.
			The update functions are polynomials.
		\end{enumerate}
	\end{definition}
	
	A \emph{configuration} of a PTS $\Pi$ is a pair $(l, \vec{x})$ where $l \in L$ is the current program location, and $\vec{x} \in \mathbb{R}^n$ represents the contents of the program variables.
	Moreover, we say that a transition $\tau = \langle \sloc{\tau}, \phi_{\tau}, f_{\tau}, \dloc{\tau} \rangle \in \trans$ is \textit{enabled} at a configuration $(l, \vec{x})$ if $l = \sloc{\tau}$ and $\vec{x} \models \phi_{\tau}$.
	For $l \in L$, we write $\trans_l := \{ \tau \in \trans \mid \sloc{\tau} = l\}$ for the set of transitions with $l$ as the source location.
	We assume that PTSs are \emph{non-demonic}:
	For all $l \in L$,
	$\bigwedge_{\tau \not= \tau' \in \trans_l}(\phi_{\tau} \wedge \phi_{\tau'}) = \mathit{false}$ and
	$\bigvee_{\tau \in \trans_l} \phi_\tau = \mathit{true}$.
	This amounts to \emph{determinacy} in the sense that for all $l \in L$ and $\vec{x} \in \mathbb{R}^n$, there is a unique transition that is enabled at $(l, \vec{x})$.
	
	\paragraph{{\bf PTS computation.}} %The computation of a PTS $\Pi$ proceeds as follows.
	The initial configuration of 
    a PTS $\Pi$ 
    is $(l_0, \vec{x}_0)$ for some choice of initial values $\vec{x}_0 \in \mathbb{R}^n$. 
	Suppose that $\Pi$ is at a configuration $(l, \vec{x})$ at step $k$. 
	Let $\tau$ be the unique transition that is enabled (i.e.~$l = \sloc{\tau}$ and $\vec{x} \models \phi_\tau$), and $\vec{r} \in \mathbb{R}^m$ be a fresh sample drawn from the respective random distributions.
	Then the configuration of $\Pi$ at step $k+1$ is $(\dloc{\tau}, f_{\tau}(\vec{x}, \vec{r}))$.
	
	%%%%%%%%%%%%%%%%%%%%%%%%%%%%%%%%%%%%%%%%%%%%%%%%%%%%%%%%%%%%%%%%%%%%%%%%%%%%%%%%%%%%%%
	% OPERATIONAL SEMANTICS %
	%%%%%%%%%%%%%%%%%%%%%%%%%%%%%%%%%%%%%%%%%%%%%%%%%%%%%%%%%%%%%%%%%%%%%%%%%%%%%%%%%%%%%%
		
	%In this paper we only consider non-demonic transition systems. 
	The operational semantics of a PTS $\Pi$ is given by a sequence of
	random variables $\{ L^{[k]}, X^{[k]} \}$ on $W$ representing the distributions of the location and the (contents of the) program variables at each computation step $k$. 
	{The runtime of the program is the first time the final location $l_F$ is reached, i.e.~$T = \inf \{k \in \mathbb{N} :  L^{[k]} = l_F \}$.}
    We use a very similar framework as in e.g.~\cite{a,b,h,costanalysis}; details can be found in Appendix~\ref{sec: operational semantics}.

	\paragraph{\bf Probabilistic invariants.}  
    As invariants of PTS we consider functions on the program variables~\cite{wp1} and a step counter\footnote{
    In practice it is very useful to allow the runtime to be part of an invariant; see Example~\ref{eg: running example computing martingales}~\&~Figure~\ref{fig:running-example-linear-loop}.}
    such that their expected value remains constant over the computation. 
    By the pre-expectation their expected value of the next step can be expressed by the current values of the program variables~\cite{wp1,DBLP:conf/qest/GretzKM13,a,b,d,h,DBLP:journals/pacmpl/HarkKGK20,DBLP:conf/sas/KatoenMMM10}. 

    \begin{definition}[Pre-Expectation]\rm
		\label{def: pre-expectation}
		Let $\Pi$ be a PTS.
		Let $\tau \in \trans$ and $h: L \times \mathbb{R}^n \times \mathbb{N} \rightarrow \mathbb{R}$ be a measurable function over locations, valuations of the program variables, and the step counter.
		%\\
		The {\em pre-expectation} pre$\mathbb{E}(h, \tau) : \mathbb{R}^n \times \mathbb{N} \rightarrow \mathbb{R}$ is the function
		\[\textstyle
		(\vec{x}, k) \mapsto  
		\mathbb{E}_R \left(  h(\dloc{\tau}, f_{\tau}(\vec{x}, R), k) \right)
		\]
		where $\mathbb{E}_{R}$ is the expectation w.r.t.~the distribution of the random variables $R = \{r_1, \cdots, r_m \}$. 
		The pre-expectation\index{pre-expectation} for the PTS, pre$\mathbb{E}(h): L \times \mathbb{R}^n \times \mathbb{N} \rightarrow \mathbb{R}$, is given by 
		\[
            \left( l, \vec{x}, k \right) \mapsto \sum_{\tau \in \trans} [(l = \sloc{\tau}) \wedge (\vec{x} \models \phi_\tau)] \cdot \mathrm{pre}\mathbb{E}(h,\tau)(\vec{x},k)
		\]
        where $[{-}]$ is the Iverson bracket.
	\end{definition}

	\begin{remark}\label{rem: replacing powers} 
        If all the update functions $f_{\tau}$ are polynomials and $h$ is polynomial for all $l \in L$, 
        then 
		%Note that 
        %$\mathrm{pre}\mathbb{E}(h)(L^{[k]}, X^{[k]}, k)$ is computed by replacing powers of random variables~$r_i^a$ by their respective moments $\mathbb{E} \left( r_i^a \right)$ in the expression $h(\dloc{\tau}, f_{\tau}(\vec{x}, R), k)$ for each $\tau \in \mathbb{T}$. 
        $\mathrm{pre}\mathbb{E}(h, \tau)(\vec{x}, k)$ is computed by replacing powers of random variables~$r_i^a$ by their respective moments $\mathbb{E} \left( r_i^a \right)$ in the expression $h(\dloc{\tau}, f_{\tau}(\vec{x}, R), k)$. 
	\end{remark} 

	%Intuitively, given a current value of a program variable $x$, the current step-counter $k$, and a program position $l$, the pre-expectation returns us the expected value of the function $h$ after executing one more step in the program. 
    %It only relies on the information known before this program step is executed.

    %For a fixed location $l \in L$, 
	We consider \emph{probabilistic invariants} as location-independent polynomials $p:\mathbb{R}^{n} \times \mathbb{N} \rightarrow \mathbb{R}$ such that the random variables $\{p(X^{[k]}, k)\}_{k \in \mathbb{N}}$ (see Definition~\ref{def: operational semantics PTS}) are 
    %(sub/super)martingales, i.e. 
	step-wise expectation invariant: 
    \begin{align}
        \label{eg: definition PTS invariant}
         \mathrm{pre}\mathbb{E}(p)(l,\vec{x}, k+1) = p(\vec{x}, k) \qquad \mathrm{for \ all \ } l\neq l_F, \vec{x} \in \mathbb{R}^n \mathrm{ \ and \ } k \in \mathbb{N}. 
    \end{align}

    Contrary to the deterministic case, probabilistic step-wise invariants are not necessarily global invariants. 
    As a warning example, consider the following loop. 

    \begin{example}[Martingale betting strategy]%[Fair gambling]
	 	\label{example: gambling}
	 	The gambler starts by wagering 1 unit on a fair bet. 
	 	So long as she loses, she continues by wagering twice the most recent bet on the next play.
	 	Thus, when she eventually (and inevitably) wins, the amount won will cover all her previous losses and profit her by one unit.
	
	 	\medskip
	
	 	\noindent\begin{minipage}{0.4\textwidth}
	 		\centering
	 		\begin{algorithmic}
	 			\STATE {$x_1 := 1$}
                \STATE {$x_2 := 0$} 
	 			%\STATE {$x_2 := 0$} 
	 			\WHILE {$x_2 \leq 0$}
	 			\STATE {{$r_1 \sim \mathrm{Bernoulli}(\frac{1}{2})$}}
	 			\IF {$r_1 \leq 0.5$}
	 			\STATE {$x_2 := x_2 - x_1$}
	 			\ELSE
	 			\STATE {$x_2 := x_2 + x_1$}
	 			\ENDIF
	 			\STATE {$x_1 := 2 \cdot x_1$}
	 			\ENDWHILE
	 		\end{algorithmic}
	 	\end{minipage}%
	 	\begin{minipage}{0.6\textwidth}	
	 		Consider the while loop
	 		where {$x_1$ (current bet) and $x_2$ (her stake thus far)} are the program variables and $r_1$ is the random variable.

            \smallskip

            The expression
	 		$M_k := \nstep{x_2}{k}$ gives a step-wise invariant for this loop. 
            In particular, $\mathbb{E}(M_k) = 0$ holds for all $k \in \mathbb{N}$.

            \smallskip

            However, note that $\mathbb{E}(M_0) = 0 < 1 = \mathbb{E}(M_T)$, 
            where $T := \inf\{k \in \mathbb{N} \mid \nstep{x_2}{k} > 0 \}$ is the exit time of the loop.
            Hence, $(M_k)_{k \in \mathbb{N}}$ is not a loop invariant. 
	 	\end{minipage}		
	\end{example}

    Soundness of step-wise invariants can be guaranteed by the Optional Stopping Theorem (OST) from probability theory, using the 
    %This requires the 
    concept of martingales.

    %%%%%%%%%%%%%%%%%%%%%%%%%%%%%%%%%%%%%%{}%%%%%%%%%%%%%%%%%%%%%%%%%%%%%%%%%%%%%%%
	% Mathematical background %
	%%%%%%%%%%%%%%%%%%%%%%%%%%%%%%%%%%%%%%%%%%%%%%%%%%%%%%%%%%%%%%%%%%%%%%%%%%%%%%
	
	\section{Martingales and the Optional Stopping Theorem} \label{sec: background}
	
	%\subsection{Notations and basic definitions}
	We briefly review standard notions in probability theory (see e.g.~\cite{grimmett,Williams91,durrett}) and fix notations along the way\footnote{
        In Appendix~\ref{section: appendix martingales} we elaborate on these abstract notions by studying some concrete examples. 
        %In particular, we elaborate on how filtrations can be used to represent information.
    }. 
	%Henceforth 
    We fix a probability space $(\Omega, \calF, \mathbb{P})$. 
    We use filtrations to represent information known over time. 
	Formally, a \emph{filtration} is a nested, finite or infinite sequence $\{\calF_k\}_{k \in \mathbb{N}}$ of $\sigma$-algebras over $\Omega$ such that  
    %$\forall k \in \mathbb{N} \, . \, \calF_k \subseteq \calF_{k+1} \subseteq \calF$ for all $k \in \mathbb{N}$.   
    $\calF_k \subseteq \calF_{k+1} \subseteq \calF$ for all $k \in \mathbb{N}$. 
	Intuitively, the $\sigma$-algebra $\calF_k$ represents the information that is known at time $k \in \mathbb{N}$.
	A sequence of random variables $\{X_k\}$ is \emph{adapted} to a filtration $\{\calF_k\}_{k \in \mathbb{N}}$ 
    if $X_k$ is $\mathbb{B}(\mathbb{R})$-$\calF_k$-measurable for all $k \in \mathbb{N}$, i.e.~for all open $A \subseteq \mathbb{R}$ we have 
    $X_k^{-1}(A) \in \calF_k$. 
	A random variable $T: \Omega \rightarrow \mathbb{N} \cup \{\infty\}$ is called a \emph{stopping time} with respect to a filtration $\{\calF_k\}_{k \in \mathbb{N}}$ if 
	$\{T \leq k\}\in \calF_k$ for all $k \in \mathbb{N}$. 
    %{We denote by {$\text{supp}(X) = \{ \omega \in \Omega : X(\omega) \neq 0\}$} the support of the random variable $X$ which is bounded when $\sup \{|X(\omega)|\mid \omega \in \text{supp}(X)\}<\infty$, and unbounded otherwise.}
    Moreover, we denote by $\text{supp}(X)$ the smallest closed set $S$ such that $\mathbb{P}(X\in S)=1$ the support of the random variable $X$ which is bounded when $\sup \{|x|\mid x \in \text{supp}(X)\}<\infty$, and unbounded otherwise.
    
    In this paper we consider the filtration $\{\calF_k\}_{k \in \mathbb{N}}$ such that $\calF_k$ contains the information of the initial value $X^{[0]}$ and the first $k$ random samples (see Appendix~\ref{sec: operational semantics}).  
	Let $X$ be an integrable random variable, that is 
    $\mathbb{E} \left[ |X| \right] < \infty$.
    %$\int |X| d \mathbb{P} < \infty$. 
    Further, let  $\calG$ be a $\sigma$-subalgebra of $\calF$. 
	The \emph{conditional expectation} $\mathbb{E}[X \mid \calG]$ is a $\calG$-measurable function s.t.~\(
	%\int_{A} X \ d \mathbb{P} = \int_{A} \mathbb{E}[X \mid \calG] \ d \mathbb{P}
    \mathbb{E} \left[1_A \cdot X \right] = \mathbb{E} \left[1_A \cdot \mathbb{E}[X \mid \calG] \right]
	\) for all $A \in \calG$.
	The conditional expectation is $\mathbb{P}$-almost surely ($\mathbb{P}$-a.s.) unique, linear and monotone in $X$\footnote{
        See Theorem~\ref{thm: properties conditional expectation} for more properties and computation rules of the conditional expectation. 
    }. 

    ~

	Given a filtration $\{\calF_k\}$, an integrable, $\{\calF_k\}$-adapted sequence of random variables $\{M_k\}$ is 
	\begin{enumerate}[(i)]
		\item a {\em martingale} if for all $k \in \mathbb{N}$,
		$\mathbb{E}[M_{k+1} \mid \calF_k] = M_k$ almost surely (a.s.);
		\item a	{\em supermartingale} if for all $k \in \mathbb{N}$,
		$\mathbb{E}[M_{k+1} \mid \calF_k] \leq M_k$ a.s.;
		\item a	{\em submartingale}	if for all $k \in \mathbb{N}$,
		$\mathbb{E}[M_{k+1} \mid \calF_k] \geq M_k$ a.s..
	\end{enumerate}

    {In this paper we will consider martingales as the probabilistic counterpart to (algebraic) invariants, which capture that the expected value of a given expression stays invariant throughout iterations. For a martingale $\{M_k\}$ to be an invariant, one would naturally require that the expected value after loop termination is equal to its initial expected value, i.e.~$\mathbb{E}(M_T)=\mathbb{E}(M_0)$. Because the stopping time $T$ is itself a random variable, this property unfortunately is not true in general, as shown in Example~\ref{example: gambling}. Difficulties generally arise when the tails of $T$ are too heavy, or the updates are not well-behaved. 
	However, by imposing suitable boundedness conditions on $T$ or the martingale, the equality $\mathbb{E}(M_T) = \mathbb{E}(M_0)$ follows from the  Optional Stopping Theorem (OST).
    %-- OST~\cite[Theorem~5.3.1]{cohen}.
    
    There are numerous versions of the OST in the literature, which differ in the precondition that is imposed (e.g.~\cite[\S 10.10]{Williams91}, \cite{costanalysis}), for a comprehensive list see Appendix~\ref{sec: appendix ost}.
	The OST applies, for instance, if there exists a constant upper bound for the runtime~\cite{Williams91}, or if the step-wise differences are bounded by a constant and the expected value of the runtime is finite~\cite{grimmett}. 
    Because bounding the steps or values by a constant is often not possible, alternative preconditions have been developed~\cite{wang2021,costanalysis}, accommodating bounds that are expressed as functions of the iteration counter $k$. 
    In~\cite{wang2021}, the authors show that if $|M_k|\leq p(k)$ almost surely (where $p(k)$ is non-decreasing) and $\mathbb{E}(p(T))<\infty$, then the  requirements for the OST are fulfilled. The almost-sure bound, however, prevents the application to martingales where the differences depend on unbounded-support distributions. Since in that case there always is a nonzero probability that a deterministic bound is exceeded, we allow the bound to be a random variable in the sense that it may depend on a specific $\omega\in\Omega$, rather than being uniform over all execution traces. This generalises previous approaches, which require a bound for the martingale(-differences) independent of $\omega$.
    
    A general precondition~\cite[Theorem 4.6.7]{cohen} for the OST requires $\{M_k\}$ to be uniformly integrable. In general uniform integrability is notoriously hard to check, but in case the stopped martingale can be bounded by an integrable random variable, it holds true. We formulate this as a precondition for the OST in Theorem~\ref{thm: ost}. 

    \begin{restatable}[Optional Stopping and \dui]{theorem}{thmOptionalStopping}
		\label{thm: ost}
		Let $(\Omega, \calF, \mathbb{P})$ be a probability space, and $\{M_k\}_{k \in \mathbb{N}}$ be a (super)martingale adapted to a filtration $\{\calF_k\}_{k \in \mathbb{N}}$, and $T: \Omega \rightarrow \mathbb{N} \cup \{\infty\}$ a stopping time.
		If the following precondition holds:
		\begin{description}
			\item[(\dui)]\label{dui} \emph{{\bf d}ominating {f}unction for {\bf u}niform {\bf i}ntegrability:} 
            there is a random variable $X : \Omega \rightarrow \mathbb{R}_+$ such that $\mathbb{E}(X) < \infty$ and $\sup_{ k \in \mathbb{N}} |M_{\min(k, T) }| \leq X$ almost surely, %\footnote{{See Appendix~\ref{proof idf} for a proof.}}
		\end{description}
		then $M_{T}$ is integrable and $\mathbb{E}(M_{T}) = \mathbb{E}(M_0)$, if $M$ is a martingale; 
        resp.~$\mathbb{E}(M_{T}) \leq \mathbb{E}(M_0)$, if $M$ is a supermartingale.
    \end{restatable}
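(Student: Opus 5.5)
The plan is to reduce the claim to the elementary optional stopping result for \emph{bounded} stopping times, and then let the bound go to infinity using the dominating function $X$ from (\dui) together with the dominated convergence theorem.

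\textbf{Step 1 (stopped process).} Write $T\wedge k := \min(k,T)$. I first show that the stopped process $\{M_{T\wedge k}\}_{k\in\mathbb{N}}$ is again a (super)martingale with respect to $\{\calF_k\}$. Use the telescoping decomposition
\[
M_{T\wedge (k+1)} = M_{T\wedge k} + 1_{\{T > k\}}\,(M_{k+1}-M_k).
\]
Since $\{T>k\} = \Omega\setminus\{T\le k\}\in\calF_k$, the indicator can be pulled out of the conditional expectation. This gives
\[
\mathbb{E}[M_{T\wedge(k+1)}\mid\calF_k] = M_{T\wedge k} + 1_{\{T>k\}}\bigl(\mathbb{E}[M_{k+1}\mid\calF_k]-M_k\bigr).
\]
The bracket is $=0$ for a martingale and $\le 0$ for a supermartingale. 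Adaptedness holds because $M_{T\wedge k}=\sum_{j<k}1_{\{T=j\}}M_j + 1_{\{T\ge k\}}M_k$. Integrability holds because $|M_{T\wedge k}|\le \sum_{j\le k}|M_j|$ (or directly because it is dominated by $X$). Iterating the conditional-expectation identity and taking expectations yields
\[
\mathbb{E}(M_{T\wedge k}) = \mathbb{E}(M_0) \quad\text{(resp. } \le\text{)} \qquad\text{for every } k.
\]

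\textbf{Step 2 (pointwise limit).} I need $M_{T\wedge k}\to M_T$ almost surely as $k\to\infty$.
\begin{itemize}
\item On $\{T<\infty\}$ this is immediate, since the sequence is eventually constant, equal to $M_T$.
\item On $\{T=\infty\}$ the value $M_T$ must be interpreted as a limit. This is the step I expect to be the only real subtlety. Two routes are available:
\begin{itemize}
\item If the intended setting has $T<\infty$ almost surely, as in the runtime applications, the null set $\{T=\infty\}$ is irrelevant.
\item Otherwise, observe that $\sup_k\mathbb{E}|M_{T\wedge k}|\le\mathbb{E}(X)<\infty$. Doob's (super)martingale convergence theorem then gives an almost-sure limit $M_\infty$ of the stopped process, and one sets $M_T:=M_\infty$ on $\{T=\infty\}$.
\end{itemize}
\end{itemize}
In either case, $M_{T\wedge k}\to M_T$ almost surely.

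\textbf{Step 3 (dominated convergence).} By (\dui), $|M_{T\wedge k}|\le X$ almost surely for all $k$, hence also $|M_T|\le X$. Therefore $M_T$ is integrable. Dominated convergence gives $\mathbb{E}(M_{T\wedge k})\to\mathbb{E}(M_T)$. Combining this with Step 1 yields
\[
\mathbb{E}(M_T)=\mathbb{E}(M_0)
\]
in the martingale case. In the supermartingale case the inequalities $\mathbb{E}(M_{T\wedge k})\le\mathbb{E}(M_0)$ pass to the limit, giving $\mathbb{E}(M_T)\le\mathbb{E}(M_0)$.

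Note that uniform integrability is implied by (\dui) but is never needed explicitly: domination by the integrable $X$ is exactly what dominated convergence requires.
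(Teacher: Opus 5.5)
Your proof is correct and follows essentially the same route as the paper. Both show that the stopped process $\{M_{T\wedge k}\}$ is a (super)martingale with $\mathbb{E}(M_{T\wedge k})=\mathbb{E}(M_0)$ (resp.\ $\le$), obtain $M_{T\wedge k}\to M_T$ almost surely, and pass to the limit in expectation. The only difference is the tool for exchanging limit and expectation: the paper deduces uniform integrability from the domination by $X$ and invokes the $L^1$-convergence theorem for uniformly integrable (super)martingales, whereas you apply dominated convergence with $X$ directly. You use Doob's convergence theorem only to define $M_T$ on $\{T=\infty\}$; this is slightly more elementary and makes the identification $M_T=\lim_k M_{T\wedge k}$ explicit.
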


    \emph{We apply Theorem~\ref{thm: ost} to reason about invariants synthesised from probabilistic programs}.
	The random variable $M_k$ in Theorem~\ref{thm: ost} is an expression built up from the program variables and locations in the $k$-th step of the computation. 
	Thus, the  {\dui} precondition requires that the value of $\sup_{k \in \mathbb{N}} |M_{\min(k, T)}|$ is bounded above by an integrable random variable $X$. The random variable $X$ can be seen as a function mapping from the input variables and the actual values of the samples drawn to a real number. Because $X$ is a random variable instead of a deterministic function of the iteration counter, we can overcome the limitations of~\cite{wang2021} with respect to unbounded-support distributions.}

    \begin{remark}
        \label{remark: infinite stopping time}
        %In Theorem \ref{thm: ost} we prove  that $\{ M_{ \min(n, T) } \}_{n \in \mathbb{N}}$ is uniformly integrable. 
        {
        In Theorem~\ref{thm: ost} we do not assume that the stopping time $T$ is finite. 
        In the proof we show that $\{ M_{ \min(k, T) } \}_{k \in \mathbb{N}}$ is uniformly integrable.
        It follows from probability theory, e.g.~\cite{klenke}, that on the event of diverging runs $\{ T = \infty\}$, the martingales $M_k$ converge to a limit value $M_{\infty}$ (which is an integrable random variable).
        Hence, the random variable $M_T$ is in fact defined as \[
            M_T = \lim_{k \rightarrow \infty} M_{\min(k, T)} 
            = M_T \cdot 1_{\{ T < \infty \}} 
            + M_{\infty} \cdot 1_{\{ T = \infty \}}.
            \]
        We can illustrate this by the following example. 	
        
        \noindent\begin{minipage}{0.4\textwidth}
	 		\centering
        \begin{algorithmic}
	 			\STATE {$x := 1$}
	 			\WHILE {$x \geq 0$}
	 			\STATE {$x := 1$}
	 			\ENDWHILE
	 	\end{algorithmic}
        	\end{minipage}%
	 	\begin{minipage}{0.6\textwidth}	
            Clearly, $\mathbb{P}\left(T = \infty\right) = 1$, where $T := \inf\{k \in \mathbb{N} : x^{[k]} < 0\}$ is the exit time of the loop. 
            Nonetheless, $\{x^{[k]}\}_{k \in \mathbb{N}}$ is a martingale, hence an invariant, for this loop and we have $\mathbb{E}(x^{[T]}) = 1 = \mathbb{E} (x^{[0]})$.
	 	\end{minipage}}
        %Hence, Theorem~\ref{thm: ost} also holds if the stopping time $T$ is not almost surely finite.
    \end{remark}
	%\smallskip

    %\begin{remark} 
%        In the setting of Theorem~\ref{thm: ost} and by \cite[Theorem~10.9]{Williams91}, $\mathbb{E} [ M_{\min(k, T)} ] = \mathbb{E} \left[ M_{0} \right]$ for all $k \in \mathbb{N}$. 
%        So, it is natural to assume that $\mathbb{E} [ M_{T} ] = \mathbb{E} \left[ M_{0} \right]$,  
%        which can be false by Example~\ref{example: gambling}.
%        %We have seen in Example~\ref{example: gambling} that this is not always the case. 
%        Generally when intuitive result fail in probability theory, then because something is getting too large. 
%        In Example~\ref{example: gambling}, we have $x_2^{[\min(k, T)]} = 1$, if $k \leq T$, and $x_2^{[\min(k, T)]} = -(2^{k} - 1)$ otherwise. 
%        Hence, $\mathbb{E} [ \sup_{k \in \mathbb{N}} |x_2^{[\min(k, T)]}| ] = \sum_{k = 1}^{\infty} (1/2)^{k-1} \cdot (2^{k} - 1)  = \infty$. 
%    \end{remark}

    {While Theorem~\ref{thm: ost} is general and does not directly give rise to automatically verifiable preconditions, we use it to derive Theorem~\ref{thm:moment-monomials}. 
    In the case of programs with linear updates without exponential growth this theorem gives novel preconditions that are much more tractable, forming the basis of our implementation.}

	\paragraph{\bf Probabilistic martingale invariants.} 

    By the following lemma our (step-wise) probabilistic invariants are martingales.
    %By the following lemma we can construct martingales via the pre-expectation. 
    This is proven in Appendix~\ref{sec: operational semantics}.

    \begin{restatable}[Invariants and Pre-Expectations]{lemma}{lemmatilde}
		\label{lemma tilde}
		Let $\Pi$ be a PTS with operational semantics $\{L^{[k]}, X^{[k]}\}_{k \in \mathbb{N}}$, and 
        let $\calF_k$ be the smallest $\sigma$-algebra such that the initial distribution and the first $k$ random samples are $\calF_k$-measurable.  
        Let $h: L \times \mathbb{R}^n \times \mathbb{N} \rightarrow \mathbb{R}$ be such that for all $l \in L$, $h(l, -, -) : \mathbb{R}^n \times \mathbb{N} \rightarrow \mathbb{R}$ is a polynomial.
		\begin{enumerate}[(i)]
			\item\label{item: lemma tilde i} 
            Then $\mathbb{E}[h(L^{[k+1]}, X^{[k+1]}, k+1) \mid \calF_k] = \mathrm{pre}\mathbb{E}(h)(L^{[k]}, X^{[k]}, k+1)$ for all $k \in \mathbb{N}$.
			
			\item\label{item: lemma tilde ii} If $\mathrm{pre}\mathbb{E}\left(h \right)(l, \vec{x}, k+1) \leq h(l, \vec{x}, k)$ for all $l \in L, \vec{x} \in \mathbb{R}^n, k \in \mathbb{N}$,		
			then $\{h(L^{[k]}, X^{[k]}, k)\}_{k \in \mathbb{N}}$ is a supermartingale adapted to $\{\calF_k\}_{k \in \mathbb{N}}$.
		\end{enumerate}
	\end{restatable}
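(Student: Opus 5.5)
The proof rests on the recursive structure of the operational semantics: $L^{[k+1]}$ and $X^{[k+1]}$ are obtained from $(L^{[k]},X^{[k]})$ by applying the unique enabled transition to a fresh sample $R^{[k+1]}$. That sample is independent of $\calF_k$, while $(L^{[k]},X^{[k]})$ is $\calF_k$-measurable (Appendix~\ref{sec: operational semantics}).

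For (\ref{item: lemma tilde i}), I first write
\[
h(L^{[k+1]},X^{[k+1]},k+1)=\sum_{\tau\in\trans}[(L^{[k]}=\sloc{\tau})\wedge(X^{[k]}\models\phi_\tau)]\cdot h\bigl(\dloc{\tau},f_\tau(X^{[k]},R^{[k+1]}),k+1\bigr).
\]
This identity holds because the PTS is non-demonic, so exactly one indicator equals $1$. The indicators are $\calF_k$-measurable: the guards are boolean combinations of polynomial inequalities, hence Borel sets. I therefore pull them out of the conditional expectation.

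For each summand I apply the freezing (independence) lemma, Theorem~\ref{thm: properties conditional expectation}. Take $Y=X^{[k]}$, which is $\calF_k$-measurable, and $Z=R^{[k+1]}$, which is independent of $\calF_k$. For measurable $g$ with $g(Y,Z)$ integrable, this lemma gives $\mathbb{E}[g(Y,Z)\mid\calF_k]=G(Y)$, where $G(y)=\mathbb{E}_R\, g(y,R)$. With $g(\vec x,\vec r)=h(\dloc{\tau},f_\tau(\vec x,\vec r),k+1)$ we get $G=\mathrm{pre}\mathbb{E}(h,\tau)(\cdot,k+1)$. Summing over $\tau$ yields the claim by Definition~\ref{def: pre-expectation}.

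The main obstacle is integrability: the conditional expectation must exist, and the freezing lemma needs it. The composition $h(\dloc{\tau},f_\tau(\vec x,\vec r),k+1)$ is a polynomial in $(\vec x,\vec r)$, since $h$ and $f_\tau$ are polynomials. I would prove by induction on $k$ that every polynomial expression in $X^{[k]}$ has finite absolute moments. The base case holds because $\vec x_0$ is deterministic, or has all moments if an initial distribution is used. For the inductive step, $X^{[k+1]}$ is a finite sum of indicators times polynomials in $X^{[k]}$ and $R^{[k+1]}$. Bounding the indicators by $1$, each monomial is a product $q(X^{[k]})\cdot r(R^{[k+1]})$. By independence its absolute expectation factors as $\mathbb{E}|q(X^{[k]})|\cdot\mathbb{E}|r(R^{[k+1]})|$. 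Both factors are finite, by the induction hypothesis and by the assumption that the $r_i$ are mutually independent with all moments finite. Alternatively, one can avoid this step by working with nonnegative parts, where conditional expectations always exist, but the moment induction is cleaner. This also shows $\mathrm{pre}\mathbb{E}(h,\tau)$ is finite and polynomial, as in Remark~\ref{rem: replacing powers}.

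For (\ref{item: lemma tilde ii}), adaptedness holds because $h(L^{[k]},X^{[k]},k)$ is a measurable function of $\calF_k$-measurable variables. Integrability follows from the moment induction above. Combining (\ref{item: lemma tilde i}) with the hypothesis gives
\[
\mathbb{E}[h(L^{[k+1]},X^{[k+1]},k+1)\mid\calF_k]=\mathrm{pre}\mathbb{E}(h)(L^{[k]},X^{[k]},k+1)\le h(L^{[k]},X^{[k]},k)
\]
pointwise, hence almost surely. This is exactly the supermartingale property.

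At $l_F$ no transition has source $l_F$, so the pre-expectation sum is empty there. I would therefore follow the appendix's convention that the terminal configuration self-loops with identity update, so that the decomposition still holds. Under that convention the same argument, applied with equality in the hypothesis, gives the martingale case.
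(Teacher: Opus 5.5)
Your proposal is correct and follows essentially the same route as the paper. You expand over the unique enabled transition with Iverson brackets, pull out the $\calF_k$-measurable indicators, evaluate the remaining conditional expectation by freezing $X^{[k]}$ and integrating out the independent sample $R^{[k+1]}$ (the paper cites Williams \S 9.10 for this step), and obtain (ii) directly from (i). Your moment induction for integrability and your treatment of $l_F$ make explicit what the paper leaves implicit, though the identity self-loop at $l_F$ is your own patch rather than a convention the paper actually states.
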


    \begin{remark}
        \label{rem: stopped martingale} 
        By e.g.~\cite[Theorem~10.9]{Williams91}, if $(M_n)_{n \in \mathbb{N}}$ is a martingale and $T: \Omega \rightarrow \mathbb{N} \cup \{\infty\}$ is a stopping time, then $(M_{\min(n, T)})_{n \in \mathbb{N}}$ is a martingale. 
       Hence, we do not need to consider the loop guard when computing loop invariants. 
       We will use this fact throughout the rest of the paper.
    \end{remark}

	\section{Polynomial Invariants of Linear Programs}
	\label{sec: linear}
	
	This section focuses on the class of linear probabilisitc transition systems (PTS) and shows that the {\dui} precondition can be used to obtain non-linear invariants of linear PTS. 
	We provide a sufficient condition under which polynomial martingale expressions satisfy the Optional Stopping Theorem (OST) for programs consisting of a single loop (Theorem~\ref{thm:moment-monomials}). 
    For the remainder of this paper Definition~\ref{def:triangular-linear-update-loop} imposes additional restrictions on the updates. 

    %We denote by $\vec{x}^{[k]}$, resp.~$\vec{r}^{[k]}$, the values of the program variables and random samples in the $k$-th step of the computation.

    \begin{definition}[Single-Loop Linear Programs]
    \label{def:triangular-linear-update-loop}
        A \emph{simple single-loop linear program} consists of a vector of variables $\vec{x}$, which is updated in the loop body through the following update function:
        \begin{align}
        \label{eq: linear update}
            \vec{x}^{[k+1]}=A\vec{x}^{[k]}+B\vec{r}^{[k+1]}
        \end{align}
        where $A \in \mathbb{R}^{n \times n}$, $B \in \mathbb{R}^{n \times m}$ and $\vec{x}^{[k]}$ denotes the value of $\vec{x}$ in iteration $k$. 
        %We impose that the largest eigenvalue of $A$, which we call $\lambda_\text{max}$, satisfies $|\lambda_\text{max}|\leq1$. 
        We impose that all eigenvalues $\lambda$ of $A$ satisfy $|\lambda| \leq 1$.
        Moreover, $\vec{r}^{[k+1]}\sim\vec{R}$ is a vector of fresh draws from random distributions,  
        %(for the $k+1$-th step), 
        where all moments are finite, i.e.~$\forall i\in\{1,2,\dots,m\},\forall N\in\mathbb{N}\mathpunct.\mathbb{E}(|R_i|^N)<\infty$. 
        Samples are assumed to be independent.
    \end{definition}

    The interactions between variables must be represented through a matrix with eigenvalues with absolute value at most~$1$. Current techniques~\cite{wang2021} for verifying the applicability of the OST work for those programs, when random updates sample only from distributions with bounded support. Importantly, we allow sampling from distributions with unbounded support, only requiring that all moments are finite for those distributions. Examples for such programs are given in Figures~\ref{fig:running-example-linear-loop}~\&~\ref{fig:example-unbounded-update-for-lg}.
    %--- see also Example~\ref{ex:pts:rec:def}. 
    While the program class is restrictive, our approach can be adapted to loops with branching, by overapproximating the updates made in each iteration.

    \begin{example}\label{ex:pts:rec:def}
        The loop in Figure~\ref{fig:running-example-linear-loop} is a program as in Definition~\ref{def:triangular-linear-update-loop} with an update matrix with eigenvalues $1$. Concretely we can formulate the following recurrence: $$\begin{pmatrix}
            z^{[k+1]}\\
            y^{[k+1]}\\
            x^{[k+1]}\\
            k^{[k+1]}\\
            1\\
        \end{pmatrix}=\begin{pmatrix}
            1 & 1 & 0&0&0\\
            0&1&0&0&0\\
            0&0&1&0&0\\
            0&0&0&1&1\\
            0&0&0&0&1
        \end{pmatrix}\begin{pmatrix}
            z^{[k]}\\
            y^{[k]}\\
            x^{[k]}\\
            k^{[k]}\\
            1\\
        \end{pmatrix}+\begin{pmatrix}
            1&1&1\\
            1&1&0\\
            1&0&0\\
            0&0&0\\
            0&0&0
        \end{pmatrix}\begin{pmatrix}
            u^{[k+1]}\\
            n_1^{[k+1]}\\
            n_2^{[k+1]}
        \end{pmatrix}.$$
        The variables $u^{[k+1]}, n_1^{[k+1]}$ and $n_2^{[k+1]}$ are fresh samples from $\mathrm{Uniform}(-1,0)$, $\mathrm{Normal}(1,2)$, and $\mathrm{Normal}(-2,4)$ respectively.
    \end{example}

    Moreover, linear loops are guarded by a loop guard $\phi$ and as such, in what follows, the runtime $T := \inf\{k\in\mathbb{N}\mid \lnot \phi(\vec{x}^{[k]})\}$ is defined as the first time this condition is violated. We furthermore write $x\in \phi$, when the program variable $x$ occurs in $\phi$ syntactically.

    {We next show how dominating functions (up to the runtime $T$) can be constructed for monomials over the program variables. As integration is a linear operator, 
    %it distributes over addition and 
    a polynomial martingale candidate is uniformly integrable when the same is true for its individual monomials (e.g.~\cite[Theorem 6.18]{klenke}). Since the construction of the dominating function for a monomial depends on the structure of the Jordan decomposition {of the update (sub)matrix of} the program variables it depends on, we first make the notion of dependence precise in Definition~\ref{def:dependence}.}

    \begin{definition}[Variable Dependence]
        \label{def:dependence}
        For two program variables $x_i, x_j$, % \in \{x_1, \cdots, x_n\}$,  
        we write $x_i\preceq x_j$ when \emph{$x_i$ depends on $x_j$}, which is the case when $A_{ij}\neq 0$.
        Similarly, a program variable $x_i$ depends on a probability distribution $R_j$, when $B_{ij}\neq 0$.
        Furthermore, $\preceq$ is transitive and reflexive.
    \end{definition}

    %As a notation 
    Let $I_i=\{j\mid x_i\preceq x_j\}$ be the indices of the variables that the variable $x_i$ depends on. We denote by $A_i$ the sub-matrix of $A$, consisting only of the rows and columns with indices in $I_i$. Similarly, $B_i$ is the sub-matrix of $B$ consisting only of the rows with index in $I_i$. By definition of dependence, only considering $A_i,B_i$ instead of $A,B$ has no effect on the value of $x_i$.

    In order to apply the {\dui} precondition, for a fixed program variable $x_i$ we first need to construct an (integrable) random variable $Z_i$ that is greater than $|x_i^{[\min(k,T)]}|$ almost surely {for all $k \in \mathbb{N}$} simultaneously. The following example illustrates how such a random variable can be constructed, while Theorem~\ref{thm:dominating_function_for_alpha_1}  establishes that for our programming model (Definition~\ref{def:triangular-linear-update-loop}) this is always possible.

    \begin{example}\label{ex:bound}
        Consider the program in Figure~\ref{fig:running-example-linear-loop}. We denote by $u^{[k]}$, $n_1^{[k]}$ and $n_2^{[k]}$ the random samples drawn from $\text{Uniform}(-1,0)$, $\text{Normal}(1,2)$ and $\text{Normal}(-2,4)$ in the $k$-th iteration. We denote by $u^{[k]}(\omega)$ and $n_i^{[k]}(\omega)$ the value for a given element from the sample space $\omega \in \Omega$. 
        %Using those samples, 
        {Then} we construct a random variable $Z_y:\Omega\to\mathbb{R}$, such that $\forall\omega\in\Omega, \forall k \in \mathbb{N} : |y^{[\min(k,T) ]}(\omega)| \leq Z_y(\omega)$.

        By Figure~\ref{fig:running-example-linear-loop}, 
        %the definition of the program 
        $y^{[k]}(\omega)=y^{[k-1]}(\omega) + n_1^{[k]}(\omega) + u^{[k]}(\omega)$. 
        Hence, we arrive at the closed form expression $y^{[k]}(\omega) = y^{[0]}+ \sum_{i=1}^{k} (n_1^{[i]}(\omega) + u^{[i]}(\omega))$.
        Because the absolute value is sub-additive, $|y^{[k]}(\omega)| \leq |y^{[0]}|+ \sum_{i=1}^{k} (|n_1^{[i]}(\omega)|+|u^{[i]}(\omega)|)$. 
        Since that bound is monotonically increasing, we can choose
        $Z_y(\omega) := |y^{[0]}|+ \sum_{i=1}^{T(\omega)} (|n_1^{[i]}(\omega)|+|u^{[i]}(\omega)|)$ 
        %it applies that $|y^{[\min(k,T)]}(\omega)|\leq |y^{[T]}(\omega)|\leq |y^{[0]}|+ \sum_{i=1}^T |r_1^{[i]}(\omega)|+|r_2^{[i]}(\omega)|:= Z_y$.
        and so 
        %have {indeed} that 
        $|y^{[\min(k, T(\omega))]}(\omega)| \leq Z_y(\omega)$ for all 
        $k \in \mathbb{N}$.
    \end{example}
    
    The construction of  bounds similar to Example~\ref{ex:bound} can be done in general --- as stated in Theorem~\ref{thm:dominating_function_for_alpha_1} and detailed next. 
    This plays a crucial role in proving Theorem~\ref{thm:moment-monomials} using Theorem~\ref{thm: ost}. 
    We will prove Theorem~\ref{thm:dominating_function_for_alpha_1} in Appendix~\ref{sec: proofs linear} (see Theorem~\ref{thm:dominating_function_for_alpha_1:appendix}).

    \begin{restatable}[Almost Sure Bounds]{theorem}{thmdominatingfunctionlinearpts}
        \label{thm:dominating_function_for_alpha_1}
        Assume that all eigenvalues $\lambda$ of $A$ satisfy $|\lambda| \leq 1$.
        For a program variable $x_i$, let $A_i, B_i$ be the update (sub-)matrices for the program variables that $x_i$ depends on.
        Let $m_i$ be the size of the largest Jordan block of the Jordan normal form of $A_i$. 
        Then there are constants $c_{i, 1}, c_{i, 2} \in \mathbb{R}$ that depend only on the matrices $A_i, B_i$, s.t.~if we define
        \begin{align*}
            Z_i:= c_{i, 1}(T^{m_i-1}+1)\|\vec{x}^{[0]}\|_2 + c_{i, 2}(T^{m_i-1}+1)\sum_{l=1}^T \|\vec{r}^{[l]}\|_2,
        \end{align*}
        %
        %$c_{i, 1} := a^{(m_i-1)}_i m_i \|P_i\|_2\|P_i^{-1}\|_2(1 + \gamma_i)$
        %with  $\delta_i := a^{(m_i-1)}_i m_i \|P_i\|_2\|P_i^{-1}\|_2$, $\beta_i=\|B_i\|_2$, $\gamma_i:=(m_i-1)^{m_i-1}$ and $a_i$ being the largest absolute value within $J_i$.
        %}
        %\begin{align*}
        %    Z_i:= \delta_i(T^{m_i-1}+\gamma_i)\|\vec{x}^{[0]}\|_2+\delta_i(T^{m_i-1}+\gamma_i)\beta_i\sum_{l=1}^T \|\vec{r}^{[l]}\|_2
        %\end{align*}
        %with  $\delta_i := a^{(m_i-1)}_i m_i \|P_i\|_2\|P_i^{-1}\|_2$, $\beta_i=\|B_i\|_2$, $\gamma_i:=(m_i-1)^{m_i-1}$ and $a_i$ being the largest absolute value within $J_i$.
        {then $0 \leq \big| x_i^{[\min(k, T)]} \big| \leq Z_i$ holds almost surely for all $k \in \mathbb{N}$.}
        %\begin{align*}
        %    0 \leq \big| x_i^{[\min(k, T)]} \big| \leq Z_i.
        %\end{align*}
        %given that the maximum eigenvalue of $A$ is $1$. 
        %Notably, $\beta_i=0$ when $x_i$ does not depend on any random variable.
    \end{restatable}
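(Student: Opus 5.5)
We start from the closed form of the linear recurrence \eqref{eq: linear update}, restricted to the variables that $x_i$ depends on. By the definition of $I_i$, the block $(x_j)_{j\in I_i}$ evolves autonomously, so iterating gives
\[
\vec{x}_{I_i}^{[k]} = A_i^{k}\vec{x}_{I_i}^{[0]} + \sum_{l=1}^{k} A_i^{k-l} B_i \vec{r}^{[l]} .
\]
Consequently $|x_i^{[k]}| \leq \|\vec{x}_{I_i}^{[k]}\|_2 \leq \|A_i^k\|_2\|\vec{x}^{[0]}\|_2 + \sum_{l=1}^k \|A_i^{k-l}\|_2\|B_i\|_2\|\vec{r}^{[l]}\|_2$. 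The whole proof therefore reduces to a uniform polynomial bound on the matrix powers $\|A_i^j\|_2$.

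For that bound we use the Jordan decomposition $A_i = P_i J_i P_i^{-1}$, so that $\|A_i^j\|_2 \leq \|P_i\|_2\|P_i^{-1}\|_2\|J_i^j\|_2$. Take a Jordan block $J=\lambda I + N$ of size $s\leq m_i$, with $|\lambda|\leq 1$ and $N$ nilpotent, $N^s=0$. The binomial expansion gives $J^j=\sum_{t=0}^{s-1}\binom{j}{t}\lambda^{j-t}N^t$. Since $|\lambda|\le 1$ and $\|N\|_2\le 1$, each entry is bounded by $\binom{j}{t}\le j^t$. This yields $\|J^j\|_2 \leq s\, (j^{s-1}+1)\leq m_i (j^{m_i-1}+1)$, where a factor $(m_i-1)^{m_i-1}$-type constant may be absorbed if one bounds more crudely.

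One caveat: the power $\lambda^{j-t}$ for $t>j$ does not appear, and $\lambda=0$ causes no trouble. Taking the maximum over the blocks gives
\[
\|A_i^j\|_2 \leq C_i (j^{m_i-1}+1)\quad\text{with}\quad C_i = m_i\|P_i\|_2\|P_i^{-1}\|_2 ,
\]
which depends only on $A_i$. Handling the complex Jordan form only requires working in $\mathbb{C}^n$ with the $2$-norm, and all quantities remain real norms.

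Now fix $\omega$ and $k$, and set $k' = \min(k,T(\omega))$. Since $j^{m_i-1}+1$ is monotone in $j$, every $j\le k'\le T$ satisfies $\|A_i^j\|_2\le C_i(T^{m_i-1}+1)$. Hence
\[
|x_i^{[k']}| \leq C_i(T^{m_i-1}+1)\|\vec{x}^{[0]}\|_2 + C_i\|B_i\|_2 (T^{m_i-1}+1)\sum_{l=1}^{k'}\|\vec{r}^{[l]}\|_2 .
\]
Extending the nonnegative sum from $k'$ up to $T$ only increases it. Thus we can set $c_{i,1}=C_i$ and $c_{i,2}=C_i\|B_i\|_2$.

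On $\{T=\infty\}$, $Z_i=\infty$ and the inequality is trivial, interpreting $Z_i$ as extended-valued. The case $m_i=1$ gives the factor $T^0+1=2$, which is harmless.

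The main obstacle is the Jordan-block power estimate. The key point is that we need the eigenvalue condition $|\lambda|\le1$ precisely so that the $\lambda^{j-t}$ factors do not grow. The polynomial growth then comes only from the binomial coefficients, with degree at most $m_i-1$. Everything else is sub-additivity of norms and monotonicity in the step index.
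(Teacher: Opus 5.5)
Your proof is correct and has the same skeleton as the paper's: the closed form of the recurrence restricted to the variables $x_i$ depends on, the decomposition $A_i=P_iJ_iP_i^{-1}$ with sub-multiplicativity, a polynomial bound on $\|J_i^j\|_2$, and then monotonicity in $j\le\min(k,T)\le T$ together with extending the sum up to $T$. You differ in the Jordan-block estimate, which is the technical core of both proofs.

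The paper dominates $J_i$ entrywise by $I+a_iN$, where $a_i\ge1$ is the largest absolute entry of $J_i$. It then passes to the Frobenius norm and bounds $\binom{n}{j}\le\binom{\max(n,2(m_i-1))}{m_i-1}$. This gives $\|J_i^n\|_2\le a_i^{m_i-1}m_i\bigl(n^{m_i-1}+(2(m_i-1))^{m_i-1}\bigr)$, which must still be rescaled into the $(T^{m_i-1}+1)$ form of the statement.

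You instead apply the operator-norm triangle inequality directly to $J^j=\sum_{t<s}\binom{j}{t}\lambda^{j-t}N^t$, using $|\lambda|\le1$ and $\|N\|_2\le1$. This yields $\|J^j\|_2\le\sum_{t<s}\binom{j}{t}\le m_i(j^{m_i-1}+1)$ in one line. Your route is shorter and gives the stated form immediately, with $c_{i,1}=m_i\|P_i\|_2\|P_i^{-1}\|_2$ and $c_{i,2}=c_{i,1}\|B_i\|_2$. It also makes visible that the eigenvalue hypothesis is used exactly once; under it, the paper's $a_i$ is always $1$. The paper carries its explicit $\delta_i,\gamma_i$ into the proof of Theorem~\ref{thm:moment-monomials}, but your constants (effectively $\gamma_i=1$) would serve there equally well.

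Two small remarks. First, you apply $|\lambda|\le1$ to the Jordan blocks of $A_i$, while the hypothesis is stated for $A$, so add one line. Since $I_i$ is closed under $\preceq$, the rows of $A$ indexed by $I_i$ vanish outside the columns in $I_i$. After reordering, $A$ is block triangular with $A_i$ as a diagonal block, so every eigenvalue of $A_i$ is an eigenvalue of $A$. The paper instead appeals to $A$ being upper triangular, which Definition~\ref{def:triangular-linear-update-loop} does not impose. Second, your claim that $Z_i=\infty$ on $\{T=\infty\}$ is not literally true: with $m_i=1$ and $B_i=0$, $Z_i$ is finite. You do not need that claim, though, since your monotonicity argument goes through verbatim for $T=\infty$.
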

    
    In the following, we use a multi-index notation for monomials over the program variables,  
    i.e.~$\alpha \in \{(\alpha_1, \alpha_2, \cdots, \alpha_n) : \alpha_i \in \mathbb{N}\}$ and $\vec{x}^\alpha=\prod_{\alpha_i\in\alpha} x_i^{\alpha_i}$. 
    Moreover, $\left(\vec{x}^\alpha\right)^{[k]}$ denotes the value of $\vec{x}^\alpha$ in iteration $k$.

    \begin{restatable}[Random Variables with Unbounded Support]{theorem}{theoremuniformintegrabilitylts}
    \label{thm:moment-monomials} 
        Assume the setting of Definition~\ref{def:triangular-linear-update-loop}. 
        For all program variables $x_i$, let $A_i, B_i$ be the update (sub-)matrices for the variables that $x_i$ depends on.  
        Let $m_i$ be the size of the largest Jordan block of the matrix $A_i$ and let $\beta_i=\|B_i\|_2$. 
        Fix a multi-index $\alpha = (\alpha_1, \dots, \alpha_n)$ and set
        \[
            N := N(\alpha) = \sum_{i = 1}^n (m_i-1+1_{\{\beta_i>0\}})\alpha_i.
        \]
        Then the random variable sequence $\{(\vec{x}^\alpha)^{[\min(k,T)]}\}_{k\in\mathbb{N}}$ is uniformly integrable if 
        \begin{enumerate}[(i)]
            \item $\mathbb{E}(T^N)<\infty$ and $x\in\phi \land x\preceq R_j\implies \text{supp}(R_j)\text{ is bounded}$, or
            \item $\mathbb{E}(T^{N+1})<\infty$.
        \end{enumerate}
    \end{restatable}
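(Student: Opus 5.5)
The plan is to verify the \dui{} precondition of Theorem~\ref{thm: ost} directly for the monomial $\vec{x}^\alpha$, using the almost sure bounds of Theorem~\ref{thm:dominating_function_for_alpha_1}. This gives uniform integrability of $\{(\vec{x}^\alpha)^{[\min(k,T)]}\}_k$, since a family dominated by a single integrable random variable is uniformly integrable.

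\emph{Step 1: a product dominating variable.} By Theorem~\ref{thm:dominating_function_for_alpha_1}, for all $k$ almost surely
\[
\bigl|(\vec{x}^\alpha)^{[\min(k,T)]}\bigr| \leq \prod_i Z_i^{\alpha_i}.
\]
Write $Y_l := \|\vec{r}^{[l]}\|_2$ and $S_T := \sum_{l=1}^T Y_l$. If $\beta_i = 0$, then $x_i$ depends on no sample, and the $c_{i,2}$-term can be dropped, so $Z_i \leq C(T+1)^{m_i-1}$. Otherwise $Z_i \leq C(T+1)^{m_i-1}(1+S_T)$. Multiplying and expanding $(1+S_T)^b$ gives
\[
\prod_i Z_i^{\alpha_i} \leq C'(T+1)^{a}(1+S_T)^{b},
\qquad a=\sum_i (m_i-1)\alpha_i,\quad b=\sum_i 1_{\{\beta_i>0\}}\alpha_i .
\]
Hence $a+b = N$. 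It therefore suffices to show $\mathbb{E}\bigl[(T+1)^{a} S_T^{b'}\bigr]<\infty$ for each $0\le b'\le b$.

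\emph{Step 2, case (i).} Split $\vec{r}$ into the components with bounded support and those with unbounded support. The bounded components contribute at most $K\cdot T$ to $S_T$. This part is fine because $(T+1)^a T^{b'}$ has degree at most $N$ in $T$. By the hypothesis, the guard variables do not depend on any unbounded-support sample. So $T$ is a function of $\vec{x}^{[0]}$ and the bounded samples only, and is therefore independent of the unbounded samples $Y'_l$. Conditioning on $T=t$ and using $\bigl(\sum_{l\le t}Y'_l\bigr)^{b'} \le t^{b'-1}\sum_{l\le t}Y_l'^{\,b'}$ yields
\[
\mathbb{E}\Bigl[(T+1)^a \Bigl(\sum_{l\le T}Y'_l\Bigr)^{b'}\Bigr] \le \mathbb{E}\bigl[(T+1)^{a+b'}\bigr]\,\mathbb{E}\bigl[Y'^{\,b'}\bigr],
\]
which is finite by $\mathbb{E}(T^N)<\infty$ and the finiteness of all moments. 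The mixed terms from splitting $S_T$ are handled by the same bounds via the binomial expansion.

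\emph{Step 3, case (ii); this is the main obstacle.} Now $T$ may depend on every $Y_l$, so the independence argument is unavailable. By the same convexity bound we reduce to showing
\[
\sum_{l\ge1}\mathbb{E}\bigl[(T+1)^{N-1}1_{\{T\ge l\}}\,Y_l^{b'}\bigr]<\infty .
\]
I would apply Hölder's inequality to each summand with exponents $p,q$, where $p>1$ is close to $1$:
\[
\mathbb{E}\bigl[(T+1)^{N-1}1_{\{T\ge l\}}Y_l^{b'}\bigr] \le \mathbb{E}\bigl[(T+1)^{(N-1)p}1_{\{T\ge l\}}\bigr]^{1/p}\,\mathbb{E}\bigl[Y^{b'q}\bigr]^{1/q}.
\]
The second factor is a finite constant, independent of $l$, since the samples are identically distributed with all moments finite. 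For the first factor, use the Markov-type bound $1_{\{T\ge l\}}\le (T/l)^{s}$ with $s=N+1-(N-1)p\ge 0$. This gives
\[
\mathbb{E}\bigl[(T+1)^{(N-1)p}1_{\{T\ge l\}}\bigr]\le C\,\mathbb{E}\bigl[(T+1)^{N+1}\bigr]\,l^{-s}.
\]
The series then converges when $s/p>1$, i.e.\ when $p<(N+1)/N$. Choosing any $p\in\bigl(1,(N+1)/N\bigr)$ and the conjugate $q<\infty$ gives a finite sum. This is exactly where the extra moment $\mathbb{E}(T^{N+1})<\infty$ is used. The edge case $N=0$ is trivial, as is the one-step $T=0$ situation.

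Combining Steps 1--3, $\prod_i Z_i^{\alpha_i}$ is an integrable dominating variable. Hence the stopped monomial sequence is uniformly integrable, as claimed.
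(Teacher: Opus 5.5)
Your proof is correct. It uses the same skeleton as the paper's: dominate the stopped monomial by $\prod_i Z_i^{\alpha_i}$ from Theorem~\ref{thm:dominating_function_for_alpha_1}, reduce to terms of the form $T^{a}\bigl(\sum_{l\le T}\|\vec r^{[l]}\|_2\bigr)^{b'}$ with $a+b'\le N$, then use independence for (i) and H\"older plus Markov for (ii). The two cases are carried out differently, and the differences matter.

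\textbf{Case (i).} The paper invokes Lemma~\ref{lemmaboundrandomsumindependent}, which assumes that \emph{all} samples are independent of $T$. Hypothesis (i) does not give this, because bounded-support samples may drive the guard; in Figure~\ref{fig:running-example-linear-loop}, $u$ affects both $x$ (the guard variable) and $y,z$. Your split $\|\vec r^{[l]}\|_2\le K+Y'_l$ uses independence only for the unbounded-support components, of which $T$ genuinely is independent. That is exactly what makes the step rigorous, so your argument is more careful than the paper's here.

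\textbf{Case (ii).} The paper's Lemma~\ref{lemmaboundrandomsumdependent} sums over the value $k$ of $T$ and applies H\"older to $\mathbb{E}\bigl(1_{\{T=k\}}\|\vec r^{[l]}\|_2^{N}\bigr)$, which gives the series $\sum_k k^{N+M}\,\mathbb{P}(T=k)^{1/p}$. It then needs a second H\"older inequality on that series, Markov's inequality, and specifically tuned exponents.

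You instead sum over the sample index $l$, keep the tail indicator $1_{\{T\ge l\}}$ inside the expectation, and bound it by $(T/l)^s$ before integrating. This yields a $p$-series $l^{-s/p}$ directly, for any $p\in\bigl(1,(N+1)/N\bigr)$. Your route is shorter, avoids the second H\"older step, and leaves the exponent free. Both arguments use exactly the extra moment $\mathbb{E}(T^{N+1})<\infty$ and only finitely many moments of the samples.
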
	
    
    For a fixed multi-index $\alpha$, 
    we use the variables $Z_i$ from Theorem~\ref{thm:dominating_function_for_alpha_1} to construct an almost sure upper bound for the monomial $(\vec{x}^\alpha)^{[\min(k, T)]}$ over all $k \in \mathbb{N}$. 
    In Theorem~\ref{thm:moment-monomials} we show that those monomials are uniformly integrable, when the expected value of the $N$-th power of the stopping time is finite, where $N$ depends on the update matrices, and $T$ does not depend on distributions with unbounded support. {Naturally this is the case, when variables in the loop guard $\phi$ depend (Definition~\ref{def:dependence}) only on random variables drawn from bounded support distributions.} 
    If $T$ does depend on distributions with unbounded support, instead finiteness of the $(N+1)$-th moment is required. 
    We prove Theorem~\ref{thm:moment-monomials} in  Appendix~\ref{sec: proofs linear}.

        \begin{example}\label{ex:pts:rec}
        The loop in Figure~\ref{fig:running-example-linear-loop} is a simple single-loop linear program as in Definition~\ref{def:triangular-linear-update-loop} with an update matrix with eigenvalue $1$, 
        {which we have written out in Example~\ref{ex:pts:rec:def}}.
        
%        Concretely we can formulate the following recurrence: $$\begin{pmatrix}
%            z^{[k+1]}\\
%            y^{[k+1]}\\
%            x^{[k+1]}\\
%            k^{[k+1]}\\
%            1\\
%        \end{pmatrix}=\begin{pmatrix}
%            1 & 1 & 0&0&0\\
%            0&1&0&0&0\\
%            0&0&1&0&0\\
%            0&0&0&1&1\\
%            0&0&0&0&1
%        \end{pmatrix}\begin{pmatrix}
%            z^{[k]}\\
%            y^{[k]}\\
%            x^{[k]}\\
%            k^{[k]}\\
%            1\\
%        \end{pmatrix}+\begin{pmatrix}
%            1&1&1\\
%            1&1&0\\
%            1&0&0\\
%            0&0&0\\
%            0&0&0
%        \end{pmatrix}\begin{pmatrix}
%            u^{[k+1]}\\
%            n_1^{[k+1]}\\
%            n_2^{[k+1]}
%        \end{pmatrix}.$$
%        The variables $u^{[k+1]}, n_1^{[k+1]}$ and $n_2^{[k+1]}$ are fresh samples from $\mathrm{Uniform}(-1,0)$, $\mathrm{Normal}(1,2)$, and $\mathrm{Normal}(-2,4)$ respectively.
%
 %       
        The update matrix ${A}$ is already in Jordan normal form. The variables  $x,y$ are only affected by Jordan blocks with size $1$, and have random updates, while $z$ is affected by a Jordan block of size $2$ and has random updates. 
        The variable $k$ is affected by a Jordan block of size $2$, but is not updated by the random samples. 
        Because the runtime depends only on bounded support distributions, 
        by Theorem~\ref{thm:moment-monomials} 
        %Hence, by Theorem~\ref{thm:moment-monomials} and the fact that the runtime depends only on bounded support distributions, 
        through ensuring that $E(T^N)<\infty$, every monomial $x^ay^bz^ck^d$ is uniformly integrable, when $a+b+2c+d\leq N$.
    \end{example}

    In conclusion, 
    a polynomial over random variables is uniformly integrable, if every monomial in it is. If each monomial of a polynomial martingale satisfies the prerequisites of Theorem~\ref{thm:moment-monomials}, then it satisfies the OST by the {\dui} precondition (Theorem~\ref{thm: ost}). 
    %In the next sections we will detail how this approach can be automated.

\subsection{Constructing Martingales for Linear Programs}
\label{subsection:constructing-martingales-linear-programs}

    We use Theorem~\ref{thm:moment-monomials} to identify uniformly integrable monomials up to the stopping time, which can be used for \emph{constructing martingales that enable bounding loop variables with unbounded support}. 
    By Definition~\ref{def:triangular-linear-update-loop} we represent the loop body by a linear update 
    %This can be done by representing the loop as a linear PTS of form 
    $\vec{x}^{[k+1]}={A}\vec{x}^{[k]}+{B}\vec{r}^{[k+1]}$. 
    Given a finite $N$, for which $\mathbb{E}(T^N)<\infty$ holds, we check which monomials $\vec{x}^\alpha$ are uniformly integrable. 
    Thereby we consider the size of the Jordan blocks affecting its variables and potential dependencies of the runtime from unbounded support distributions.
    %(see Theorem~\ref{thm:moment-monomials} and Example~\ref{ex:pts:rec}). 
    %Throughout this section we use $\vec{x}^\alpha_T=(\vec{x}^\alpha)^{[T]}$ to refer to the value of a monomial after the loop.

    % moved example to just after theorem 3

    ~

    By~\eqref{eg: definition PTS invariant} our goal is to find polynomials $m: \mathbb{R}^n \times \mathbb{N} \rightarrow \mathbb{R}$ such that 
    $\mathrm{pre}\mathbb{E}(m)(\cdot,\vec{x}, k + 1) = m(\vec{x}, k)$ holds for all $\vec{x}$ and $k$. 
    It then follows by Lemma~\ref{lemma tilde} that $\{ m(\vec{x}^{[k]}, k) \}_{k \in \mathbb{N}}$ is a martingale\footnote{
        Note that by Remark~\ref{rem: stopped martingale} we do not need to consider the loop guard at this stage.
    }. 

    ~

    As is shown in Example~\ref{ex:pts:rec:def}, the update matrix ${A}$ of a  PTS resembles a recurrence relation for every variable $x_i^{[k+1]} = p_{x_i}(\vec{x}^{[k]}, \vec{r}^{[k+1]})$, where $p_{x_i}$ is a (linear) polynomial over the current variable values and the current samples. 
    Hence, $q_{x_i} := \mathrm{pre}\mathbb{E}(x_i)(\cdot, \vec{x}, k + 1)$ is a (linear) polynomial over the variable values and the step counter\footnote{
        Often it is easier to create a program variable that stores the step counter, i.e.~$k$ in Figure~\ref{fig:running-example-linear-loop}.
    }. 

    \begin{example}
    \label{example-pre-expectation}
        Consider again the example in Figure~\ref{fig:running-example-linear-loop} and the monomial $kx$. 
        We get $p_x=x^{[k+1]}=x^{[k]}+u^{[k+1]}$ and $p_k=k^{[k+1]}=k^{[k]}+1$. Then $(kx)^{[k+1]}=p_kp_x$, and so 
        \begin{align*}
            q_{kx} 
            &:= \mathrm{pre}\mathbb{E}(kx) \left(\cdot, \vec{x}, k + 1  \right)
            %= \mathbb{E} \big( x^{[k+1]} \mid \mathcal{F}_k \big) 
            %= \mathbb{E} \big( (x^{[k]}+u^{[k+1]})(k^{[k]}+1) \mid \mathcal{F}_k \big) 
            = \mathbb{E}_u \big( (x+u)(k+1) \big)
            = x k + x + \mathbb{E} (u) k + \mathbb{E}(u) 
            \\
            &= k x - (1/2) k +x - (1/2), 
        \end{align*}
        where $u \sim \mathrm{Uniform}(-1,0)$ denotes a fresh random sample. 
        %where we have used that $x^{[k]}, k^{[k]}$ are $\calF_k$-measurable and $u^{[k+1]}$ is independent of $\calF_k$. 
        %%$q_{kx} = \mathbb{E}((x^\alpha)^{[k+1]}|\mathcal{F}_k)=\mathbb{E}(p_kp_x|\mathcal{F}_k)=k^{[k]}x^{[k]}-\frac{1}{2}k^{[k]}+x^{[k]}-\frac{1}{2}$.
        Similarly, by allowing more complex monomials and by replacing the moments of the random samples, we get
        $$\begin{array}{ll}
            q_k=k+1 \qquad&q_{kx}=k x - (1/2) k + x - (1/2)  \\
            q_x=x-(1/2)& q_{ky}=k y +(1/2)k + y + (1/2)\\
            q_y=y+(1/2)& q_{x^2}= x^2 - x + (1/3)\\
            q_z=z + y -(3/2)\qquad& q_{xy} = x y +(1/2) x - (1/2) y - (1/6)\\
            q_{k^2}=k^2+2 k +1\qquad& q_{y^2}= y^2 + y + (7/3).
        \end{array}$$
        %$$\begin{array}{ll}
        %    q_k=k^{[k]}+1 \qquad&q_{kx}=k^{[k]}x^{[k]}-\frac{1}{2}k^{[k]}+x^{[k]}-\frac{1}{2}  \\
        %    q_x=x^{[k]}-\frac{1}{2}& q_{ky}=k^{[k]}y^{[k]}+\frac{1}{2}k^{[k]}+y^{[k]}+\frac{1}{2}\\
        %    q_y=y^{[k]}+\frac{1}{2}& q_{x^2}=(x^{[k]})^2-x^{[k]}+\frac{1}{3}\\
        %    q_z=z^{[k]}+y^{[k]}-\frac{3}{2}\qquad& q_{xy}=x^{[k]}y^{[k]}+\frac{1}{2}x^{[k]}-\frac{1}{2}y^{[k]}-\frac{1}{6}\\
        %    q_{k^2}=(k^{[k]})^2+2k^{[k]}+1\qquad& q_{y^2}=(y^{[k]})^2+y^{[k]}+\frac{7}{3}.
        %\end{array}$$
    \end{example}

    By linearity and as the moments of the random samples are known, 
    this allows us to compute the pre-expectation for any monomial $\vec{x}^{\alpha}$, 
    %$\text{pre}\mathbb{E}(\vec{x}^\alpha)=\mathbb{E}\left(\prod_{ i=1}^{n} p_{x_i}(\vec{x}^{[k]}, \vec{r}^{[k+1]})^{\alpha_i}|\mathcal{F}_k\right)$ (by Lemma~\ref{lemma tilde}), 
    which is a polynomial to which we also refer as $q_\alpha$.
    %for a multi-index $\alpha$. 
    %Because by Lemma~\ref{lemma tilde} 
    %$\text{pre}\mathbb{E}((\vec{x}^\alpha)^{[k]})=\mathbb{E}\left(\prod_{ i=1}^{n} p_{x_i}(\vec{x}^{[k]}, \vec{r}^{[k+1]})^{\alpha_i}|\mathcal{F}_k\right)$, 

    \paragraph{\bf Martingales of linear PTS.} Recall that $\{M^{[k]}\}_{k\in\mathbb{N}}$ is a martingale, 
    whenever for all $k \in \mathbb{N}$ we have 
     $\mathbb{E}(M^{[k+1]}\mid \mathcal{F}_k) = M^{[k]}$. 
     For a polynomial $m(x_1,\dots,x_n)$ over the program variables to be a martingale,
     it must hence hold that  
    %\footnote{
     %using Lemma~\ref{lemma tilde}
     %} 
     $m(\vec{x}^{[k]}) = \mathbb{E}(m(\vec{x}^{[k+1]})\mid\mathcal{F}_k) = \mathrm{pre}\mathbb{E}(m)(\cdot, \vec{x}^{[k]}, k+1)$  
     %$\mathrm{pre}\mathbb{E}(m)(\vec{x}^{[k]}, k+1) = \mathbb{E}(m(\vec{x}^{[k+1]})\mid\mathcal{F}_k)=m(\vec{x}^{[k]})$, 
     where we have used Lemma~\ref{lemma tilde}.
     This can be captured by a system of linear equations.   

     \begin{example} 
     \label{eg: running example computing martingales - computing martingales}
        Consider the loop in Figure~\ref{fig:running-example-linear-loop}. 
        If we choose the template 
        $m(\vec{x}) =a_x x +a_{kx} k x +a_{k^2} k^2 + a_{x^2} x^2$, 
        then by the computations from Example~\ref{example-pre-expectation} and by Lemma~\ref{lemma tilde}, 
        %we have
        \begin{align*}
            \mathbb{E} \left( m(\vec{x}^{[k+1]}) \mid \mathcal{F}_k \right)
            = m(\vec{x}^{[k]})
            &+ (a_{kx} - a_{x^2}) x^{[k]} + \left( 2a_{k^2} - \frac{a_{kx}}{2} \right) k^{[k]} 
            - \frac{a_x}{2} - \frac{a_{kx}}{2}  + a_{k^2} + \frac{a_{x^2}}{3}.
        \end{align*}
        The equation $0=a_{kx}-a_{x^2}$ ensures that the coefficient of $x^{[k]}$ in $m(\vec{x}^{[k]})$, and the coefficient of $x^{[k+1]}$ in $\mathbb{E}(m(\vec{x}^{[k+1]})|\mathcal{F}_k)$ are the same.
        We repeat this for all coefficients and get a system of linear equations. One solution is 
        %$m(\vec{x}, k) = 3 (k^{[k]})^2 + 12 k^{[k]} x^{[k]} + 12 (x^{[k]})^2 + 2 x^{[k]}$, 
        $m(\vec{x}) = 3 k^2 + 12 k x + 12 x^2 + 2 x$, 
        which is thus a martingale. 
     \end{example}
     
    To construct martingales of linear PTS, we proceed as follows.
    We fix a set of monomials $\mathcal{M}$ which are uniformly integrable, and denote by $\mathcal{A} \subseteq \{(\alpha_1, \alpha_2, \cdots, \alpha_n) : \alpha_i \in \mathbb{N}\}$ the corresponding set of multi-indices, i.e.~$\mathcal{M} = \{\vec{x}^\alpha\mid \alpha\in\mathcal{A}\}$. 
    Moreover, each monomial $\vec{x}^\alpha$ gets assigned a coefficient variable $a_\alpha \in \mathbb{R}$. We define the following set of linear equations, with $[\vec{x}^\alpha]q(\vec{x})$ denoting the coefficient of $\vec{x}^\alpha$ in the polynomial~$q(\vec{x})$:

    \begin{equation}\label{eq:linear-system-of-equations}\left\{ a_\alpha = \sum_{\beta\in\mathcal{A}} a_\beta [\vec{x}^\alpha]q_\beta(\vec{x})  ~\Big\vert~ \alpha\in\mathcal{A}\right\}.\end{equation}

    Every solution of Equation~\eqref{eq:linear-system-of-equations} corresponds to a polynomial $m(\vec{x}):=\sum_{\alpha\in\mathcal{A}} a_\alpha\vec{x}^\alpha$, which satisfies the martingale property. Intuitively, this equation enforces that for all monomials, the coefficient in $m(\vec{x}^{[k]})$ is the same as in $\mathbb{E}(m(\vec{x}^{[k+1]})|\mathcal{F}_k)$.

    Combined with Theorem~\ref{thm:moment-monomials}, when these martingales satisfy the OST, the expected value after loop termination equals its initial value. 
    Hence, throughout we will use $\vec{x}^\alpha_T=(\vec{x}^\alpha)^{[T]}$ to refer to the value of a monomial after the loop.

    \begin{example}
        \label{eg: running example computing martingales}
        In  Figure~\ref{fig:running-example-linear-loop} we have $\mathbb{E}(T^2)<\infty$. 
        Hence, by Example~\ref{ex:pts:rec} we can consider the set of uniformly integrable monomials $\{x,k,kx,k^2,x^2\}$, for which we have computed $q_\alpha = \mathrm{pre}\mathbb{E}(\vec{x}^{\alpha})$ in Example~\ref{example-pre-expectation}. 
        So, by Example~\ref{eg: running example computing martingales - computing martingales}, $m(\vec{x}) = 3 k^2 + 12 k x + 12 x^2 + 2 x$ is a martingale for this loop. 
        In particular, this martingale satisfies the OST by Theorem~\ref{thm:moment-monomials} and by Example~\ref{ex:pts:rec}, i.e.
        \begin{equation}
            \mathbb{E}(3k_T^2 + 12k_Tx_T + 12x_T^2 + 2x_T) = 12x_0^2 + 2x_0
            \label{eq:martingale-for-deriving-k2}.
        \end{equation}
        %
        % $$\begin{array}{ll}
        %     \mathbb{E}(k^{[k+1]}|\mathcal{F}_k)=k^{[k]}+1 \qquad&\mathbb{E}(k^{[k+1]}x^{[k+1]}|\mathcal{F}_k)=k^{[k]}x^{[k]}-\frac{1}{2}k^{[k]}+x^{[k]}-\frac{1}{2}  \\
        %     \mathbb{E}(x^{[k+1]}|\mathcal{F}_k)=x^{[k]}-\frac{1}{2}& \mathbb{E}(k^{[k+1]}y^{[k+1]}|\mathcal{F}_k)=k^{[k]}y^{[k]}+\frac{1}{2}k^{[k]}+y^{[k]}+\frac{1}{2}\\
        %     \mathbb{E}(y^{[k+1]}|\mathcal{F}_k)=y^{[k]}+\frac{1}{2}& \mathbb{E}((x^{[k+1]})^2|\mathcal{F}_k)=(x^{[k]})^2-x^{[k]}+\frac{1}{3}\\
        %     \mathbb{E}(z^{[k+1]}|\mathcal{F}_k)=z^{[k]}+y^{[k]}-\frac{3}{2}\qquad& \mathbb{E}(x^{[k+1]}y^{[k+1]}|\mathcal{F}_k)=x^{[k]}y^{[k]}+\frac{1}{2}x^{[k]}-\frac{1}{2}y^{[k]}-\frac{1}{6}\\
        %     \mathbb{E} ((k^{[k+1]})^2|\mathcal{F}_k)=(k^{[k]})^2+2k^{[k]}+1\qquad& \mathbb{E}((y^{[k+1]})^2|\mathcal{F}_k)=(y^{[k]})^2+y^{[k]}+\frac{7}{3}.
        % \end{array}$$
        %
       Similarly by considering $\mathcal{M}=\{k,x,y,z,k^2,kx,ky,x^2,xy,y^2\}$ we get:
        \begin{align}
            \mathbb{E}(k_T + 2x_T) &= 2x_0 \label{eq:martingale-k-x}\\
            \mathbb{E}(x_T + y_T) &= x_0 + y_0\label{eq:martingale-x-y}\\
            \mathbb E(14x_T + 3 x_T^2 + 6 x_Ty_T + 3 y_T^2)&=14 x_0 + 3 x_0^2 + 6 x_0y_0 + 3 y_0^2\label{eq:martingale-for-square-extraction}
        \end{align}  
    \end{example}

    \subsection{Deriving Bounds for Moments}
     {In Section~\ref{subsection:constructing-martingales-linear-programs} we derived martingale expressions for linear PTS with a single loop that satisfy the OST. These martingales, together with the negated loop guard and a bound on its value after termination, can be used to derive bounds for moments of program variables after termination. For that, we implement a saturation algorithm that systematically applies predefined rules in order to \emph{automatically derive bounds on the expected value of monomials after PTS termination}, as follows. While our rules exploit  well-known mathematical properties, to the best of our knowledge our symbolic propagation approach has not been implemented elsewhere.}

    The computation of bounds works by maintaining a set of bounds of monomials $\vec{x}_T^\alpha$ and their expected value $\mathbb{E}(\vec{x}_T^\alpha)$ at the stopping time, and using a set of derivation rules to derive new bounds. The set of derivation rules consists of predefined rules, which apply in general to (random) variables, as well as rules which are obtained from the previously generated martingales. We implemented such derivation of bounds in extension to the probabilistic loop analysis tool \texttt{Polar}~\cite{DBLP:moosbrugger-polar}, allowing us to automatically derive bounds for linear PTS, including Figure~\ref{fig:running-example-linear-loop}. We note that our work for bounding (expected) values of random variables is not specific to linear loops, or to loop analysis in general, and can further be extended with  %There is a lot of room for improvement, for instance adding 
    new derivation rules.
    %, or developing ways to remove/avoid generating redundant bounds.
    {Similarly, Theorem~\ref{thm: ost} is not bound to the application to linear programs. It may be possible to derive analogous results to Theorem~\ref{thm:moment-monomials} for different classes of programs.}

    \paragraph{{\bf Multiplication of monomial bounds.}} For bounds of monomials, we  use  algebraic properties based on the signs of existing bounds. In total, there are 5 rules for deriving new upper bounds, and 7 rules for lower bounds, some of which are shown below.

        \begin{mathpar}
      \inferrule*[right=(square-positive)]
        {~}
        {\vec{x}^{2\alpha} \geq 0}
    \and
      \inferrule*[right=(mul-ub-1)]
        {a\leq \vec{x}^\alpha\leq b\\ c\leq \vec{x}^\beta \leq d\\a\geq 0\\c\geq 0}
        {\vec{x}^{\alpha+\beta} \leq bd}
    %\and
    %  \inferrule*[right=(mul-lb-1)]
    %    {\vec{x}^\alpha\leq b\\ \vec{x}^\beta \leq d\\ b\leq 0\\d\leq 0}
    %    {bd\leq \vec{x}^{\alpha+\beta}}
    \end{mathpar}

    \paragraph{{\bf Bounds for moments.}} Bounds for moments can either be derived directly from a bound of the value of a monomial, or through Jensen's inequality~\cite{klenke}. Additionally, we provide three multiplication rules. Those rules exploit, in addition to bounds of moments, the signs of the random variables in question. Appendix~\ref{appendix:derivation-rules} shows the soundness of our derivation rules. 

        \begin{mathpar}
      \inferrule*[right=(moment-lb)]
        {a\leq \vec{x}^{\alpha}}
        {a\leq \mathbb{E}(\vec{x}^{\alpha})}
    \and
      \inferrule*[right=(moment-ub)]
        {\vec{x}^{\alpha}\leq a}
        {\mathbb{E}(\vec{x}^{\alpha})\leq a}
    \and
        \inferrule*[right=(jensen-lb-1)]
        {a\leq \mathbb{E}(\vec{x}^\alpha)\\ 0 \leq a}
        {a^2\leq \mathbb{E}(\vec{x}^{2\alpha})}
        \and
        %\inferrule*[right=(jensen-lb-2)]
        %{\mathbb{E}(\vec{x}^\alpha)\leq a\\ a\leq 0}
        %{a^2\leq \mathbb{E}(\vec{x}^{2\alpha})}
        %\and
        \inferrule*[right=(rv-mul-1)]
        {a\leq \vec{x}^\alpha \\ a\leq 0 \\ 0\leq \vec{x}^\beta\\ \mathbb{E}(\vec{x}^\beta) \leq b}
        {ab\leq \mathbb{E}(\vec{x}^{\alpha+\beta})}
% \and
%         \inferrule*[right=(rv-mul-2)]
%         {\vec{x}^\alpha\leq a \\ a\geq 0 \\ 0\leq \vec{x}^\beta\\ \mathbb{E}(\vec{x}^\beta) \leq b\\ 0\leq b}
%         {\mathbb{E}(\vec{x}^{\alpha+\beta})\leq ab}
%         \and
%         \inferrule*[right=(rv-mul-3)]
%         {a\leq \vec{x}^\alpha \\ a\geq 0 \\ 0\leq \vec{x}^\beta\\ b\leq\mathbb{E}(\vec{x}^\beta)\\ 0\leq b}
%         {ab\leq\mathbb{E}(\vec{x}^{\alpha+\beta})}
    
    \end{mathpar}

    \paragraph{{\bf Derivation through martingales.}}  Derivation rules are also created based on the synthesised polynomial martingales which satisfy the OST via the {\dui} precondition. 
    For such a martingale $m=\sum_{\alpha\in\mathcal{A}} a_\alpha \vec{x}^\alpha$, we set $\mathcal{A}^+:=\{\alpha\in\mathcal{A}\mid a_\alpha~>~0~\}$ and $\mathcal{A}^-:=\{\alpha\in\mathcal{A}\mid a_\alpha<0\}$. 
    We denote by $m_0:=\sum_{\alpha\in\mathcal{A}} a_\alpha \vec{x}_0^\alpha$ the (potentially symbolic) initial value of the polynomial. 
    As $m$ satisfies the OST, $\mathbb{E}(m_T) =m_0$. From upper/lower bounds of the expected value of all but one monomial $\vec{x}_T^\alpha$ of the polynomial, we derive a new bound for $\mathbb{E}(\vec{x}_T^\alpha)$. With $u_\alpha$ and $l_\alpha$ being existing upper and lower bounds on monomials, we use the following rules (ub-martingale) and (lb-martingale), assuming w.l.o.g. $\alpha\in\mathcal{A}^+$ (otherwise we multiply the martingale by $-1$).

    \begin{mathpar}
        \inferrule*[right=(ub-martingale)]
        {\alpha\in\mathcal{A}^+\\ \forall_{\beta\in\mathcal{A}^+\setminus\{\alpha\}}: l_\beta\leq \mathbb{E}(\vec{x}^\beta)\\\forall_{\beta\in\mathcal{A}^-}:  \mathbb{E}(\vec{x}^\beta)\leq u_\beta}
    {\textstyle \mathbb{E}(\vec{x}^\alpha) \leq \frac{1}{a_\alpha} \left(m_0 - \sum_{\beta\in\mathcal{A}^+\setminus\{\alpha\}} l_\beta a_\beta+ \sum_{\beta\in\mathcal{A}^-} u_\beta a_\beta\right)}
    \and
    \inferrule*[right=(lb-martingale)]
        {\alpha\in\mathcal{A}^+\\ \forall_{\beta\in\mathcal{A}^+\setminus\{\alpha\}}: \mathbb{E}(\vec{x}^\beta)\leq u_\beta\\\forall_{\beta\in\mathcal{A}^-}: l_\beta\leq \mathbb{E}(\vec{x}^\beta)}
    {\textstyle \frac{1}{a_\alpha} \left(m_0 - \sum_{\beta\in\mathcal{A}^+\setminus\{\alpha\}} u_\beta a_\beta+ \sum_{\beta\in\mathcal{A}^-} l_\beta a_\beta\right)\leq \mathbb{E}(\vec{x}^\alpha)}
    \end{mathpar}

    \begin{example}
        \label{example:bound-derivation}
         In Figure~\ref{fig:running-example-linear-loop}, by the negated loop guard and by knowing the support of 
        $\text{Uniform}(-1,0)$, we have $-1 \leq x_T \leq 0$. %$x_T\leq 0$ and $x_T\geq -1$. 
        So, by using the (moment-lb) and (moment-ub) rules, %it follows that 
        $-1\leq\mathbb{E}(x_T)\leq0$. Also by standard interval propagation $0\leq x_T^2\leq 1$, and thus, $0\leq \mathbb{E}(x_T^2)\leq 1$. 
        Because the loop is entered at least once, we know that $k_T\geq1$. 
        Moreover, by using the (ub-martingale) and (lb-martingale) rules for the martingale in Eq.~\eqref{eq:martingale-k-x}, we derive that $2x_0\leq\mathbb{E}(k_T)\leq 2 x_0 + 2$. 
        Similarly, by using Eq.~\eqref{eq:martingale-x-y} and the same rules, 
        we obtain that 
        $x_0+y_0\leq \mathbb{E}(y_T)\leq x_0+y_0+1$. % {holds}.

        Furthermore, 
        using the (rv-mul-1) rule we derive the bound $-2x_0-2 \leq\mathbb{E}(k_Tx_T)$, since we know that $-1\leq x_T$ and $1\leq k_T$ as well as $\mathbb{E}(k_T)\leq 2x_0+2$. This is useful, as it enables us to derive the upper bound $\mathbb{E}(k_T^2)\leq 4x_0^2+\frac{26}{3}x_0 + \frac{26}{3}$ through the (ub-martingale) rule and the martingale in Eq.~\eqref{eq:martingale-for-deriving-k2}.
    \end{example}

    \paragraph{{\bf Cauchy-Schwarz.}} Since the polynomial martingales are not capable of tracking all relations between variables, purely linear propagation of bounds via the rules (ub-martingale) and (lb-martingale) is usually insufficient. In some cases, instead of obtaining a bound on $\mathbb{E}(\vec{x}^\alpha)$ directly, it is however possible to obtain one for $\mathbb{E}((a\vec{x}^{\beta}+b\vec{x}^{\gamma})^2)$, such that $\alpha=\gamma+\beta$. In that case, we can use the rules (cs-lb) and (cs-ub), to derive bounds for $\mathbb{E}(\vec{x}^\alpha)$. Those rules are direct consequences of the Cauchy-Schwarz inequality~\cite{durrett}.

    \begin{mathpar}
        \inferrule*[right=(cs-lb)]
        {\mathbb{E}((a\vec{x}^\alpha+b\vec{x}^\beta)^2) \leq g\\ \mathbb{E}(\vec{x}^{2\alpha})\leq h}
        {-\frac{\sqrt{gh}}{|b|}-\frac{h}{2}\left(\left|\frac{a}{b}\right|+\frac{a}{b}\right)\leq \mathbb{E}(\vec{x}^{\alpha+\beta})}
        \and
        \hspace{-0.1em}
        \inferrule*[right=(cs-ub)]
        {\mathbb{E}((a\vec{x}^\alpha+b\vec{x}^\beta)^2) \leq g\\ \mathbb{E}(\vec{x}^{2\alpha})\leq h}
        {\mathbb{E}(\vec{x}^{\alpha+\beta})\leq \frac{\sqrt{gh}}{|b|}+\frac{h}{2}\left(\left|\frac{a}{b}\right|-\frac{a}{b}\right)}
    \end{mathpar}

    To exploit this, we perform square completion by pattern matching, to find expressions of the form $\mathbb{E}((a\vec{x}^\alpha+b\vec{x}^\beta)^2)$ that yield martingales with fewer summands. The other rules are then also instantiated with those expressions.

    \begin{example}
        \label{example:square-extraction-cauchy-schwarz}
        We aim to derive an upper bound for the expression $\mathbb{E}(x_Ty_T)$. Such a bound cannot be directly derived from the martingale in Eq.~\eqref{eq:martingale-for-square-extraction}, since we are lacking an appropriate bound for $\mathbb{E}(y_T^2)$. We can, however, rewrite this martingale as $14\mathbb E(x_T) + 3\mathbb{E}((x_T+y_T)^2)=14x_0 + 3x_0^2 + 6 x_0y_0 + 3y_0^2$. Since we have a lower bound for $\mathbb{E}(x_T)$ (Example~\ref{example:bound-derivation}), we get that $\mathbb{E}((x_T+y_T)^2)\leq x_0^2+2x_0y_0+ y_0^2 +\frac{14}{3}x_0+ \frac{14}{3}$. Then we can use the (cs-ub) rule, since we already obtained a bound for $\mathbb{E}(x_T^2)$ in Example~\ref{example:bound-derivation}:
        $\mathbb{E}(x_T y_T)\leq \sqrt{x_0^2+2x_0y_0 + y_0^2 +\frac{14}{3}x_0+ \frac{14}{3}}$.
    \end{example}

    \paragraph{{\bf Limitations of automatic bound generation.}} The martingale rule is applied for all possible combinations of bounds. To avoid such computationally expensive tasks, we employ aggressive subsumption checking. To facilitate this, we restrict square roots to only contain one single initial variable or numeric values, and only compare the highest degree monomials. After applying the rules based on Cauchy-Schwarz, we use the sub-additivity of square roots.
    The equation system defined in Eq.~\eqref{eq:linear-system-of-equations} is usually under-determined, and enumeration of all rules is too costly. 
    We therefore select a number of sparsest solutions of the equation system (see Appendix~\ref{appendix:limitations-of-implementation}).

    \section{Examples of Bound Derivation}
    \label{section:experimental-results}

    We show that our approach is suitable for computing symbolic bounds for the expected value of monomials after loop termination. To this end, we implement the automatic bound derivation in \texttt{Polar} and analyse Figures~\ref{fig:running-example-linear-loop}~\&~\ref{fig:example-unbounded-update-for-lg}, for both of which the computation of tight symbolic bounds for moments after termination is out of reach for existing tools.
 
    In Table~\ref{tab:experimental-results} we list symbolic bounds computed for different inputs. Column~1 specifies which moment of the stopping time of the analysed program must be finite in order for the derivation to be sound. For the analysis to be sound, we must verify that $\mathbb{E}(T^2)$ is finite for the bounds derived for Figure~\ref{fig:running-example-linear-loop}, and $\mathbb{E}(T^3)<\infty$ for the bounds for Figure~\ref{fig:example-unbounded-update-for-lg}, since the stopping time depends on variables with unbounded support (see Theorem~\ref{thm:moment-monomials}). We currently require those conditions to be verified externally. The condition $\mathbb{E}(T^2)<\infty$ for Figure~\ref{fig:running-example-linear-loop} can be verified using state-of-the-art tools~\cite{wang2021}, since the variable in the loop guard is not affected by unbounded support updates. To the best of our knowledge, this is however not the case for Figure~\ref{fig:example-unbounded-update-for-lg}, as existing methods  struggle with unbounded updates. 
    We show,  through an application of tail bounds in Appendix~\ref{appendix:stopping-times-of-examples}, that the required bounds for the moments of the stopping times hold. 
    
    While we still require manual intervention, our implementation demonstrates the benefits of our two-step approach, as the tail bounds in the manual proof are powerful for asymptotic analysis but usually do not yield useful symbolic bounds. Those symbolic bounds can then be computed as a second step by our approach\footnote{It is always possible to introduce a program variable that simply updates by one in every step of the program, effectively storing the runtime. Hence, our approach is equally capable of deriving upper and lower bounds on moments of the runtime. This is the variable $k$ in Figures~\ref{fig:running-example-linear-loop}~\&~\ref{fig:example-unbounded-update-for-lg}.}.

    \begin{wrapfigure}{r}{0.30\textwidth}%{0.42\textwidth}
		\begin{algorithmic}
			\STATE{$z := z_0$}
			\STATE{$x := x_0$}
            \STATE{$k:=0$}
			\WHILE{$x \geq 0$} 
			\STATE{$k:= k+1$}
			\STATE{$c \sim \mathrm{Bernoulli}(\frac{7}{10})$}
            \IF{$c=1$}
			    \STATE {$z\sim\mathrm{Normal}(-1,1)$}
			         \STATE {$x:=x+z$}
            \ELSE
			      \STATE {$z\sim\mathrm{Normal}(1,1)$}
                \STATE{$x:=x+z$}
            \ENDIF
			\ENDWHILE	
		\end{algorithmic}
        \caption{Linear loop with unbounded update affecting the loop guard.}
        \label{fig:example-unbounded-update-for-lg}
    \end{wrapfigure}
    
    Column~1 of Table~\ref{tab:experimental-results} states the set of initially known inequalities. 
    The bound for $\mathbb{E}(x_T)$ in Figure~\ref{fig:example-unbounded-update-for-lg} is obtained by reasoning about the last step: $\mathbb{E}(x_T) = \mathbb{E}(x+Y\mid x>0\land x+Y <0)$, where $Y$ is the distribution of the last update. This update $Y$ is lower bounded in expectation by $Y'=\text{Normal}(-1,1)$, hence $\mathbb{E}(x_T) \geq\mathbb{E}(Y'\mid Y'<0)$, which is equal to the mean of a truncated random variable, which for these parameters is lower bounded by $-1.3$. By analogous reasoning using a truncated random variable, we get that $\mathbb{E}(x_T^2)\leq \mathbb{E}((Y')^2\mid Y'<0)\leq2.3$. {We additionally assume that the variable $x_0$ is positive in both examples; this is equivalent to assuming that the loop is executed at least once.
    
    Column~3 of Table~\ref{tab:experimental-results} shows  that, in many cases, the obtained bounds are asymptotically sharp, in a sense that the leading terms of upper and lower bounds are equal. This holds also for the bounds for $\mathbb{E}(z_T)$ in Figure~\ref{fig:running-example-linear-loop}, although the increments of $z$ are not bounded from below, and are affected by the value of $y$. To the best of our knowledge, no other approach for cost analysis or bound computation is able to handle those properties simultaneously.
    }
\renewcommand{\arraystretch}{1.6}

\newcommand{\FigOneLBk}{\ensuremath{2x_0}}
\newcommand{\FigOneUBk}{\ensuremath{2x_0+2}}
\newcommand{\FigOneLBy}{\ensuremath{x_0+y_0}}
\newcommand{\FigOneUBy}{\ensuremath{x_0+y_0+1}}

\newcommand{\FigOneLBz}{\ensuremath{x_0^2 + 2x_0y_0 - 5x_0 + z_0 - 2|y_0|-\sqrt{\frac{56}{3}}\sqrt{x_0} - \sqrt{\frac{56}{3}} - 4}}
\newcommand{\FigOneUBz}{\ensuremath{x_0^2 + 2x_0y_0 - x_0 + z_0 + 2|y_0|+\sqrt{\frac{56}{3}}\sqrt{x_0} + \sqrt{\frac{56}{3}}+2}}

\newcommand{\FigOneLBkx}{\ensuremath{-2x_0-2}}
\newcommand{\FigOneUBkx}{\ensuremath{0}}
\newcommand{\FigOneLBky}{\ensuremath{2x_0^2 + 2x_0y_0 - \frac{4}{3}x_0 - 2|y_0|-\sqrt{\frac{56}{3}}\sqrt{x_0} - \sqrt{\frac{56}{3}} - 2}}
\newcommand{\FigOneUBky}{\ensuremath{2x_0^2 + 2x_0y_0 + \frac{14}{3}x_0 + 2|y_0| +\sqrt{\frac{56}{3}}\sqrt{x_0}+ \sqrt{\frac{56}{3}} + \frac{14}{3}}}

\newcommand{\FigOneLBxy}{\ensuremath{- x_0 -|y_0| -\sqrt{\frac{14}{3}}\sqrt{x_0}- \sqrt{\frac{14}{3}} - 1}}
\newcommand{\FigOneUBxy}{\ensuremath{x_0 + |y_0|+\sqrt{\frac{14}{3}}\sqrt{x_0} + \sqrt{\frac{14}{3}}}}
\newcommand{\FigOneLBkk}{\ensuremath{4x_0^2 + \frac{2}{3}x_0 - 4}}
\newcommand{\FigOneUBkk}{\ensuremath{4x_0^2 + \frac{26}{3}x_0 + \frac{26}{3}}}
\newcommand{\FigOneLByy}{\ensuremath{x_0^2 + 2x_0y_0 + y_0^2+ \frac{8}{3}x_0  - 2|y_0| -\sqrt{\frac{56}{3}}\sqrt{x_0}- \sqrt{\frac{56}{3}}-1}}
\newcommand{\FigOneUByy}{\ensuremath{x_0^2 + 2x_0y_0+ y_0^2 + \frac{20}{3}x_0  + 2|y_0| +\sqrt{\frac{56}{3}}\sqrt{x_0}+ \sqrt{\frac{56}{3}} + \frac{20}{3}}}

\newcommand{\FigTwoLBk}{\ensuremath{\frac{5}{2}x_0}}
\newcommand{\FigTwoUBk}{\ensuremath{\frac{5}{2}x_0 + \frac{13}{4}}}
    
\newcommand{\FigTwoLBkk}{\ensuremath{\frac{25}{4}x_0^2 + \frac{115}{4}x_0 - \frac{115}{8}}}
\newcommand{\FigTwoUBkk}{\ensuremath{\frac{25}{4}x_0^2 + \sqrt{\frac{2875}{8}}x_0 + \frac{115}{4}x_0 +\sqrt{\frac{13225}{8}}\sqrt{x_0}+ \sqrt{\frac{34385}{16}} + \frac{529}{8}}}

\newcommand{\FigTwoLBkx}{\ensuremath{- \sqrt{\frac{115}{8}}x_0-\sqrt{\frac{529}{8}}\sqrt{x_0} - \sqrt{\frac{6877}{80}} - \frac{23}{4}}}
\newcommand{\FigTwoUBkx}{\ensuremath{0}}

  \begin{table}
    \centering
    \begin{tabular}{l|l|l|l}
        \shortstack[c]{[program]\\assumptions}& moment & \multicolumn{2}{c}{bound}\\ \hline
        \multirow{12}{*}{\shortstack[l]{[Fig.~\ref{fig:running-example-linear-loop}]\\$\mathbb{E}(T^2)<\infty$\\
        $x_T\leq0$\\$x_T\geq -1$\\$x_0>0$}}
          &\multirow{1}{*}{$\mathbb{E}(k_T)$} & $\geq\FigOneLBk$ & $\leq \FigOneUBk$\\\cline{2-4}

          &\multirow{1}{*}{$\mathbb{E}(y_T)$} & $\geq\FigOneLBy$ & $\leq \FigOneUBy$\\\cline{2-4}

          &\multirow{2}{*}{$\mathbb{E}(z_T)$} & \multicolumn{2}{l}{$\geq\FigOneLBz$} \\\cline{3-4}
          & & \multicolumn{2}{l}{$\leq \FigOneUBz$}\\\cline{2-4}

          &\multirow{1}{*}{$\mathbb{E}(k_Tx_T)$} & $\geq\FigOneLBkx$ & $\leq \FigOneUBkx$\\\cline{2-4}

          &\multirow{2}{*}{$\mathbb{E}(k_Ty_T)$} & \multicolumn{2}{l}{$\geq\FigOneLBky$} \\\cline{3-4}
          & & \multicolumn{2}{l}{$\leq \FigOneUBky$}\\\cline{2-4}

          &\multirow{2}{*}{$\mathbb{E}(x_Ty_T)$} & \multicolumn{2}{l}{$\geq\FigOneLBxy$} \\\cline{3-4}
          & & \multicolumn{2}{l}{$\leq \FigOneUBxy$}\\\cline{2-4}

          &\multirow{1}{*}{$\mathbb{E}(k_T^2)$} & $\geq\FigOneLBkk$ \;\;\;\;\;\;\;\;\;\;\;\;\;\;\;\;\;& $\leq \FigOneUBkk$\\\cline{2-4}

          &\multirow{2}{*}{$\mathbb{E}(y_T^2)$} & \multicolumn{2}{l}{$\geq\FigOneLByy$} \\\cline{3-4}
          & & \multicolumn{2}{l}{$\leq \FigOneUByy$}\\\hline
    \multirow{5}{*}{\shortstack[l]{
        [Fig.~\ref{fig:example-unbounded-update-for-lg}]\\
        $\mathbb{E}(T^3)<\infty$\\ $\mathbb{E}(x_T)\geq-1.3$\\$\mathbb{E}(x_T^2)\leq 2.3$\\$x_0>0$
        }}
        &\multirow{1}{*}{$\mathbb{E}(k_T)$} & $\geq\FigTwoLBk$ & $\leq \FigTwoUBk$\\\cline{2-4}
        &\multirow{2}{*}{$\mathbb{E}(k_T^2)$} & \multicolumn{2}{l}{ $\geq\FigTwoLBkk$} \\\cline{3-4}
        & & \multicolumn{2}{l}{$\leq \FigTwoUBkk$}\\\cline{2-4}
        &\multirow{2}{*}{$\mathbb{E}(k_Tx_T)$} & \multicolumn{2}{l}{$\geq\FigTwoLBkx$} \\\cline{3-4}
        & & \multicolumn{2}{l}{$\leq \FigTwoUBkx$}\\\hline
        
    \end{tabular}
    \caption{Symbolic bounds on moments of PTS.}
    \label{tab:experimental-results}
\end{table}
    
\renewcommand{\arraystretch}{1}

	\section{Related Work} \label{sec: relatedwork}

	Recently, there have been significant advances in two related problems: \emph{synthesis of probabilistic invariants} and \emph{reasoning about (positive) almost sure termination}. A deep link between these two problems is evident in the preceding, especially in our setting of linear PTS from Section~\ref{sec: linear}.
	A telling perspective is provided by the Optional Stopping Theorem; the OST % (which underpins a recent approaches to synthesising invariants), %\ref{thm: OST}, 
	typically requires the runtime, the martingale expression or both to be bounded in some sense. 
	Other examples of a mixed precondition for the OST are given in~\cite{costanalysis,wang2021,piecewise_analysis_prob2026,chatterjee2024quantitative}; for an overview, see Appendix~\ref{sec: appendix ost}. 
    The precondition of~\cite{wang2021} uses martingales $\{Y_n\}_{n \in \mathbb{N}}$ that are polynomially bounded in $n$ and expects that  sufficiently high moments of the runtime are finite. 
    Hence, the setting of~\cite{wang2021} only supports  sampling from bounded distributions. Similarly, in~\cite{costanalysis} such a polynomial bound is required for the step-wise difference of a martingale, together with exponential tail decay of the stopping time. In~\cite{piecewise_analysis_prob2026} a similar bound for the stopping time is required, but the step-wise difference can grow exponentially. 
    Nonetheless, all of these (almost sure) step-wise difference bounds are incompatible with sampling from distributions with unbounded support. 
    Moreover, in~\cite{chatterjee2024quantitative} such a step-wise difference bound is not required, but the martingales are required to be non-negative, which restricts their applicability for unbounded support distributions. 
    Unlike this, our result from Theorem~\ref{thm:moment-monomials} handles unbounded distributions and martingale invariants without a sign constraint for a (restricted) class of PTS programs.

    \smallskip
    
	This link between probabilistic invariants and runtime bounds is further investigated  in \cite{DBLP:journals/pacmpl/HarkKGK20}. Here, it is shown that for computing lower bounds on the runtime of probabilistic programs,  it is necessary to satisfy the OST.
	As a natural counterpart to upper bounds, the synthesis of lower bounds of the expected runtime has been studied in \cite{DBLP:journals/toplas/FrohnNBG20,DBLP:conf/vmcai/FuC19,feng2023lower}.

	Conversely, upper bounds on the expected runtime can be improved via probabilistic invariants. 
    Upper bounds of the expected runtime can be found by finding suitable non-negative supermartingales~\cite{e,DBLP:conf/tacas/KuraUH19,d,a,c,wang2021}. 
	Once the OST is satisfied,  our probabilistic invariants can take both negative and positive values, extending thus~\cite{chatterjee2024quantitative} where  moments of polynomials of program costs are derived provided a fixed lower bound exists. 
    
	Higher-order moments of program variables are inferred in~\cite{forsyte,DBLP:conf/tacas/BartocciKS20,costanalysis}. 
    Importantly, in~\cite{feng2023lower} the authors provide lower bounds for expected program quantities, even if the program in question is not almost surely terminating. This is achieved by reducing the question to a well-chosen almost surely terminating program. 
	The general approach of checking preconditions of the OST for  polynomial invariants of higher-order is further studied in~\cite{wp1}, providing  sufficient conditions that have been successfully implemented in  \cite{DBLP:conf/atva/FengZJZX17,DBLP:conf/cav/ChenHWZ15,DBLP:conf/qest/GretzKM13}. 

    \section{Conclusion}
    We study the invariant synthesis problem for probabilistic transition systems, by leveraging the Optional Stopping Theorem~e.g.~\cite{cohen}, for which we present a general sufficient precondition, coined as \dui. Our \dui yields a criterion under which a given polynomial expression qualifies as an invariant. We use this criterion, to automatically synthesise polynomial probabilistic invariants for a certain class of linear loops, and use the invariants to automatically derive bounds for higher- and mixed-moments of program variables {and the runtime} after loop termination. Notably, our approach handles distributions with unbounded support, enabling thus the analysis of loops that cannot be handled by other techniques. Future work includes handling {nonlinear} loops for which our \dui precondition can be used or adjusted, thus further improving the automatic derivation of moment bounds. 
    {These methods are especially promising regarding nested loops. Such an approach would necessarily involve unbounded updates, as both termination time and variable updates of inner loops can scarcely be expected to be bounded by constants. 
    The flexibility of the \dui precondition could provide tractable sufficient conditions for the applicability of the OST under compositional reasoning.}
    
   \paragraph{Acknowledgments.} 

    A.S.~is supported by the Cantab Capital Institute for the Mathematics of Information and by the Academy of Finland Centre of Excellence Programme grant number
346315 ''Finnish centre of excellence in Randomness and STructures (FiRST)``.
   
   This research was partially funded by the Austrian Science Fund (FWF) [Grant ID 10.55776/DOC1345324] and the Vienna Science and Technology Fund (WWTF), State of Lower Austria [Grant ID 10.47379/ICT25017].

	\newpage

	\bibliographystyle{splncs04}
	\bibliography{references}
	
	%%%%%%%%%%%%%%%%%%%%%%%%%%%%%%%%%%%%%%%%%%%%%%%%%%%%%%%%%%%%%%%%%%%%%%%%%%%%%
	% Future work and open problems %
	%%%%%%%%%%%%%%%%%%%%%%%%%%%%%%%%%%%%%%%%%%%%%%%%%%%%%%%%%%%%%%%%%%%%%%%%%%%%%%
	
	%	\section{Future work and open problems} \label{sec: future}
	%	
	%	There are several interesting directions of further research.
	%	Firstly, one question is whether this method can be generalised for probabilistic transition system with labels. %as presented in \cite{a, d}.
	%	Secondly, the key part of this method is to find invariants in the form of martingale expression 
	%	%is to verify that such expressions 
	%	that fulfil the second condition Optional Stopping Theorem.
	%	It is interesting to study whether other conditions of this theorem can be automatically verified.
	%	%Thirdly, this method is still very basic. It would be worthwhile to see whether it can be improved. In particular, 		
	%	Lastly, a very important direction of the further research is to take the numerical aspects
	%	% of sum-of-squares optimisation 
	%	into account. The paper \cite{numerics} presents many of the common issues of using sos optimisation in static analysis.
	
	%Refinement, computational effort.
	%scalability
	%transition systems with labels, polynomial stochastic systems
	
	\clearpage
	
	%%%%%%%%%%%%%%%%%%%%%%%%%%%%%%%%%%%%%%%%%%%%%%%%%%%%%%%%%%%%%%%%%%%%%%%%%%%%%
	% Appendices %
	%%%%%%%%%%%%%%%%%%%%%%%%%%%%%%%%%%%%%%%%%%%%%%%%%%%%%%%%%%%%%%%%%%%%%%%%%%%%%%
	
	\appendix
	
    \section{A crash course in martingale theory} 
    \label{section: appendix martingales}

    In this section we review the definitions of martingales and related results by virtue of some concrete examples. 
    We hope this is helpful for readers with less familiarity with these topics. 
    It is not necessary for the main sections of this paper and it will not be referenced outside of this section. 
    Nonetheless, it should hopefully give some insight on the intuition behind some of the more abstract concepts of probability theory.

    \paragraph{$\sigma$-algebra} 
    Let $\Omega$ be a set. Then a subset $\calF \subseteq \mathcal{P}(\Omega)$\footnote{Here $\mathcal{P}$ denotes the power set.} 
    is a $\sigma$-algebra, if 
    \begin{enumerate}
        \item $\Omega \in \calF$,
        \item for all $A \in \calF$ the complement $(\Omega \setminus A) \in \calF$,
        \item for all $A_1, A_2, \cdots \in \calF$ the countable union $\bigcup_{k = 1}^{\infty} A_k \in \calF$.
    \end{enumerate}
    If $B \subseteq \mathcal{P}(\Omega)$ is arbitrary, then we write $\sigma(B)$ for the smallest $\sigma$-algebra containing $B$.

    \paragraph{Probability measure}
    Let $\Omega$ be a set and let $\calF$ be a $\sigma$-algebra over $\Omega$. 
    Then a probability measure $\mathbb{P}: \calF \rightarrow [0, 1]$ is a function such that 
    $\mathbb{P}(\emptyset) = 0$, $\mathbb{P}(\Omega) = 1$ and 
    \begin{align*}
        \mathbb{P} \left( \bigcup_{k = 1}^{\infty} A_k \right) 
        = \sum_{k = 1}^{\infty}  \mathbb{P} \left( A_k \right) 
        \qquad \qquad 
        \mathrm{for \ all \ disjoint \ } A_1, A_2, \cdots \in \calF.
    \end{align*}
    
    \paragraph{Probability space}
    For this example we fix a probability space $(\Omega, \calF, \mathbb{P})$ as follows. 
    The sample space is $\Omega = [0, 1)$ and as $\sigma$-algebra $\calF$  we choose the Borel $\sigma$-algebra on $[0, 1)$ denoted as $\mathbb{B}([0, 1))$, which is the $\sigma$-algebra generated by open subsets of $[0, 1)$. 
    Lastly, we let the probability measure~$\mathbb{P}$ be the uniform measure on $[0, 1)$:
    \begin{align*}
        \mathbb{P} \left( A \right) = \int_A 1 \ dx  
        \qquad \qquad \mathrm{for \ all \ (Borel) \ measurable \ } A \in \mathbb{B}([0, 1)).
    \end{align*}

    \paragraph{Random variable}
    Moreover, we fix a random variable $Z: \Omega \rightarrow [0, 1]$ that is uniformly distributed.
    Specifically, we set $Z: [0, 1) \rightarrow [0, 1]$ as the identity, i.e.~$Z(\omega) = \omega$ for all $\omega \in [0, 1)$.
    This implies that for all (Borel) measurable $A \subseteq [0, 1)$,
    \begin{align*}
        \mathbb{P} \left( \left\lbrace \omega \in \Omega : Z(\omega) \in A \right\rbrace \right) 
        %= \mathbb{P} \left( \left\lbrace \omega \in [0, 1) : Z(\omega) \in A \right\rbrace \right)  
        = \int_A 1 \ dx.
    \end{align*}

    \paragraph{Filtration}
    Now we equip this probability space $(\Omega, \calF, \mathbb{P})$ with a filtration $(\calF_n)_{n \in \mathbb{N}}$ generated by dyadic intervals $D_n := \left\lbrace [k \cdot 2^{-n}, (k+1) \cdot 2^{-n}) : k \in \{0, 1, \cdots, 2^n - 1 \} \right\rbrace$:
    \begin{itemize}
        \item $\calF_0 = \sigma \left(D_0 \right) = \left\lbrace \emptyset, [0, 1) \right\rbrace$, 
        \item $\calF_1 = \sigma \left(D_1 \right) = \left\lbrace \emptyset, [0, 1/2), [1/2, 1), [0, 1) \right\rbrace$, 
        \item $\calF_n = \sigma \left(D_n \right) = \left\lbrace \bigcup_{k \in I} [k \cdot 2^{-n}, (k+1) \cdot 2^{-n}) : I \subseteq  \{0, 1, \cdots, 2^n - 1 \} \right\rbrace$ for all $n \in \mathbb{N}$.
    \end{itemize}
    In particular, we have  $\calF_0 \subseteq \calF_n \subseteq \calF_{n + 1} \subseteq \calF$ for all $n \in \mathbb{N}$, thus 
    $([0, 1), \calF, \{\calF_n\}_{n \in \mathbb{N}}, \mathbb{P})$ is indeed a filtered probability space. 

    \paragraph{Conditional expectation} 
    We want to compute the conditional expectation $Z_n := \mathbb{E} \left[ Z \mid \calF_n \right]$ for all $n \in \mathbb{N}$. 
    The {\em conditional expectation} is characterised by two defining properties: 
    \begin{enumerate}
        \item The random variable $Z_n = \mathbb{E} \left[ Z \mid \calF_n \right]$ is $\calF_n$-measurable. 
        \item We have 
        %For all $A \in \calF_n$,
        \begin{align}
            \label{eq: second characteristic conditional expecatation}
            \int_A Z_n(x) \ dx = \mathbb{E} \left[1_A \cdot Z_n \right] = \mathbb{E} \left[1_A \cdot Z \right] = \int_A x \ dx 
        \qquad \qquad \mathrm{for \ all \ } A \in \calF_n.
        \end{align}
    \end{enumerate}
%    In particular, by the definition of $Z$ and $\mathbb{P}$ we have 
%    \begin{align*}
%        \mathbb{E} \left[1_A \cdot Z \right] = \int_A Z(x) \ dx = \int_A x \ dx.
%    \end{align*}
%    Therefore, the conditional expectation $Z_n = \mathbb{E} \left[ Z \mid \calF_n \right]$ is defined s.t.~$Z_n$ is $\calF_n$-measurable and 
%    \begin{align}
%        \label{eq: definition Zn}
%        \int_A Z_n(x) \ dx = \int_A x \ dx 
%        \qquad \qquad \mathrm{for \ all \ } A \in \calF_n.
%    \end{align}
%   \noindent
    Hence, let's compute $Z_0$. First $Z_0$ is $\calF_0$-measurable, i.e.~for all open $A \subseteq [0, 1]$ we have 
    \begin{align*}
        Z_n^{-1}(A) \in \calF_0  = \left\lbrace \emptyset, [0, 1) \right\rbrace.
    \end{align*}
    This is only possible, if the function $Z_0 : [0, 1) \rightarrow [0, 1]$ is constant. 
    So, there exists a $c_0 \in [0, 1]$ such that $Z_0(\omega) = c_0$ for all $\omega \in \Omega = [0, 1)$. 
    Moreover, by \eqref{eq: second characteristic conditional expecatation} and by choosing $A = [0, 1) \in \calF_0$, 
    \begin{align*}
        c_0 = \int_{[0, 1)} Z_n(x) \ dx = \int_{[0, 1)} x \ dx = \frac{1}{2} = \mathbb{E} \left[ Z \right].
    \end{align*}
    In conclusion, $Z_0: [0, 1) \rightarrow [0, 1]$ is defined by $Z_0(\omega) = 1/2$ for all $\omega \in [0, 1)$.

    \begin{remark}
    \label{remark: first intuition conditional expectation}
        This example works more generally.  
        Let $X: \Omega \rightarrow \mathbb{R}$ be some random variable defined on some probability space $(\Omega, \calF, \mathbb{P})$. 
        Because every $\sigma$-algebra contains the empty set and the whole sample space, the trivial $\sigma$-algebra $\calF_0 := \{ \emptyset, \Omega \}$ is a subset of $\calF$. 
        In particular, by the same argument the random variable $\mathbb{E} \left[ X \mid \calF_0 \right]$ is constant 
        and hence for (almost) all $\omega \in \Omega$, 
        \begin{align*}
            \mathbb{E} \left[ X \mid \calF_0 \right](\omega) 
            = \mathbb{E} \left[ \mathbb{E} \left[ X \mid \calF_0 \right] \right] 
            = \mathbb{E} \left[ 1_{\Omega} \cdot \mathbb{E} \left[ X \mid \calF_0 \right] \right] 
            = \mathbb{E} \left[ 1_{\Omega} \cdot X  \right] 
            %= \int_{\Omega} \mathbb{E}[X \mid \calF_0] \ d \mathbb{P} 
            %= \int_{\Omega} X \ d \mathbb{P}  
            = \mathbb{E} \left[X\right].
        \end{align*}
    The intuition here is that a sub-$\sigma$-algebra $\calG \subseteq \calF$ represents information that can be known. 
    Then the full $\sigma$-algebra $\calF$ corresponds to having full information on $X: \Omega \rightarrow \mathbb{R}$, whereas a sub-$\sigma$-algebra $\calG \subseteq \calF$ represents a restriction of information (expressed by the $\calG$-measurability). 
    Consequently, the random variable $\mathbb{E} \left[ X \mid \calG \right]$ is the projection/best approximation of $X: \Omega \rightarrow \mathbb{R}$ under this restriction (expressed by the second property of the conditional expectation~e.g.~\eqref{eq: second characteristic conditional expecatation}). 

    In particular, the trivial $\sigma$-algebra $\calF_0 = \{ \emptyset, \Omega \}$ denotes having no information at all and the expected value $\mathbb{E} \left[X\right]$ is the best approximation of $X: \Omega \rightarrow \mathbb{R}$ without any additional information.
    \end{remark}

    Similarly, we can compute the conditional expectations $Z_n = \mathbb{E} \left[ Z \mid \calF_n \right]$ for every $n \in \mathbb{N}$.

    \begin{lemma}
        \label{lemma: appendix example martingale}
        Let $Z: [0, 1) \rightarrow [0, 1]$ be defined as above and let $\{\calF_n\}_{n \in \mathbb{N}}$ be the filtration generated by dyadic intervals. 
        For $n \in \mathbb{N}$, let $Z_n := \mathbb{E} \left[ Z \mid \calF_n \right]$ be the conditional expectation. 
        Then for all $n \in \mathbb{N}$, $Z_n: [0, 1) \rightarrow [0, 1]$ is given by
        \begin{align*}
            Z_n(\omega) = 2^{-(n+1)} \sum_{k = 0}^{2^n - 1} (2k + 1) \cdot 1_{[k  2^{-n}, (k+1) 2^{-n})}(\omega).
        \end{align*}
        Note that $2^{-(n+1)} (2k + 1)$ is the midpoint of the interval $[k  2^{-n}, (k+1) 2^{-n})$.
     \end{lemma}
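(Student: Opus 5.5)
The plan is to guess the answer and verify it against the two defining properties of the conditional expectation, exactly as was done for $Z_0$ above. Call the right-hand side $Y_n$:
\[
Y_n := 2^{-(n+1)} \sum_{k=0}^{2^n-1} (2k+1)\, 1_{[k 2^{-n},(k+1)2^{-n})}.
\]
We show that $Y_n$ is $\calF_n$-measurable and satisfies \eqref{eq: second characteristic conditional expecatation}. Since the conditional expectation is $\mathbb{P}$-a.s.\ unique, this identifies $Y_n$ as (a version of) $Z_n$. It also takes values in $[0,1]$, because each value $2^{-(n+1)}(2k+1)$ is a midpoint of a subinterval of $[0,1)$.

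\textbf{Measurability.} The dyadic intervals $I_k := [k2^{-n},(k+1)2^{-n})$, $k \in \{0,\dots,2^n-1\}$, partition $[0,1)$, and $Y_n$ is constant on each $I_k$. Hence for any open (indeed any) set $A \subseteq [0,1]$, the preimage $Y_n^{-1}(A)$ is the union of those $I_k$ whose midpoint lies in $A$. That is a set of the form $\bigcup_{k\in I} I_k$ with $I \subseteq \{0,\dots,2^n-1\}$, which belongs to $\calF_n$ by the explicit description of $\calF_n$.

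\textbf{Integral identity.} Every $A \in \calF_n$ is a finite disjoint union of intervals $I_k$. Both sides of \eqref{eq: second characteristic conditional expecatation} are additive over disjoint unions, so it suffices to treat $A = I_k$. On the left we get
\[
\int_{I_k} Y_n(x)\,dx = 2^{-n}\cdot 2^{-(n+1)}(2k+1).
\]
On the right we get
\[
\int_{I_k} x\,dx = \tfrac12\bigl((k+1)^2-k^2\bigr)2^{-2n} = (2k+1)\,2^{-(2n+1)}.
\]
These agree, and the case $A=\emptyset$ is trivial.

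The only mildly delicate point is the reduction from general $A \in \calF_n$ to the atoms $I_k$, which relies on the finite structure of $\calF_n$. The rest is the elementary computation above. We should also remark that $Z_n$ is determined only almost surely, and the formula in the lemma is the canonical version.
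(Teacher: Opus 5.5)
Your proof is correct and rests on the same computation as the paper's: both reduce to the atoms $I_k=[k2^{-n},(k+1)2^{-n})$ and match $2^{-n}$ times the constant value on $I_k$ against $\int_{I_k}x\,dx=(2k+1)2^{-(2n+1)}$. The difference is only the direction of the argument. The paper uses $\calF_n$-measurability to force $Z_n$ to be constant on each $I_k$ and then solves for the constants, whereas you verify your candidate against both defining properties and invoke a.s.\ uniqueness. That uniqueness is in fact pointwise here, since the only $\mathbb{P}$-null set in $\calF_n$ is $\emptyset$.
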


     \begin{proof}
         Fix $n \in \mathbb{N}$. 
         Then by definition $Z_n$ is $\calF_n$-measurable, i.e.~for all open $A \subseteq [0, 1]$, 
         \begin{align*}
             Z_n^{-1}(A) \in \calF_n 
             = \sigma \left(D_n \right) = \left\lbrace \bigcup_{k \in I} [k \cdot 2^{-n}, (k+1) \cdot 2^{-n}) : I \subseteq \{0, 1, \cdots, 2^n - 1 \} \right\rbrace.
         \end{align*}
         This implies that $Z_n$ is constant on every dyadic interval $[k 2^{-n}, (k+1) 2^{-n})$. 
         Therefore, there exist $c_{n, 0}, c_{n, 1}, \cdots, c_{n, 2^n - 1} \in \mathbb{R}$ such that for all $\omega \in [0, 1)$,
         \begin{align*}
             Z_n(\omega) = \sum_{k = 0}^{2^n - 1} c_{n, k} \cdot 1_{[k  2^{-n}, (k+1) 2^{-n})}(\omega). 
         \end{align*}
         Moreover, by \eqref{eq: second characteristic conditional expecatation} and by choosing $A = [k  2^{-n}, (k+1) 2^{-n}) \in \calF_n$ we obtain that 
         \begin{align*}
             c_{n, k} \cdot 2^{-n} 
             %= \int_{[k  2^{-n}, (k+1) 2^{-n})} c_{n, k} \ dx 
             &= \int_{[k  2^{-n}, (k+1) 2^{-n})} Z_n(x) \ dx       
             %= \int_{[k  2^{-n}, (k+1) 2^{-n})} Z(x) \ dx   
             = \int_{[k  2^{-n}, (k+1) 2^{-n})} x \ dx 
             = \frac{1}{2} \cdot 2^{-2n} \left((k+1)^2 - k^2 \right).
         \end{align*}
         This implies the claim.
     \end{proof}

     \begin{remark}
     \label{remark: second intuition conditional expectation}
     By Remark~\ref{remark: first intuition conditional expectation}, the trivial $\sigma$-algebra $\calF_0 = \{\emptyset, [0, 1)\}$ represents no information. 
     Likewise, if we view $Z$ as a random number generator on $[0, 1]$, then for every $n \in \mathbb{N}$, the $\sigma$-algebra $\calF_n = \sigma(D_n)$ restricts the information as to only being able to know in which dyadic interval 
     $[0, 2^{-n}), [2^{-n}, 2^{1-n}), \cdots, [1 - 2^{-n}, 1)$ the sampled value $Z(\omega)$ lies. 
     \end{remark}

     Furthermore, there are some useful computational rules when working with conditional expectations. 
     We recommend for instance \cite{klenke,Williams91} for further reading.

     \begin{theorem}
     \label{thm: properties conditional expectation}
         Let $(\Omega, \calF, \mathbb{P})$ be a probability space and let $X, Y: \Omega \rightarrow \mathbb{R}$ be two integrable random variables. 
         Moreover, let $\calG \subseteq \calH \subseteq \calF$ be $\sigma$-algebras and let $a, b, c \in \mathbb{R}$. 
         Then the following holds. 
         \begin{enumerate}
             \item\label{item: ce 1} 
             Preservation of expectation: $\mathbb{E} \left[ \mathbb{E} \left[X \mid \calG \right] \right] = \mathbb{E} \left[X\right]$. 
             
             \item\label{item: ce 1,5} 
             Trivial $\sigma$-algebra: $\mathbb{E} \left[ \mathbb{E} \left[X \mid \{\emptyset, \Omega \} \right] \right] = \mathbb{E} \left[X\right]$.

             \item\label{item: ce 2} 
             Preservation of constants: $\mathbb{E} \left[c \mid \calG \right] = c$ almost surely.

             \item\label{item: ce 3} 
             Linearity: $\mathbb{E} \left[a X + b Y \mid \calG \right] = a \cdot \mathbb{E} \left[X \mid \calG \right] + b \cdot \mathbb{E} \left[Y \mid \calG \right]$ almost surely.

             \item\label{item: ce 4} 
             Taking out what is known: 
             If the product $XY$ is integrable and $X$ is $\calG$-measurable, then we have $\mathbb{E} \left[X Y \mid \calG \right] = X \cdot \mathbb{E} \left[Y \mid \calG \right]$ almost surely. 

             \item\label{item: ce 5} 
             Tower property: $\mathbb{E} \left[\mathbb{E} \left[X \mid \calG \right] \mid \calH \right] = \mathbb{E} \left[X \mid \calG \right] = \mathbb{E} \left[\mathbb{E} \left[X \mid \calH \right] \mid \calG \right]$.

            \item\label{item: ce 6} 
            Independence: If $\sigma(X)$ and $\calG$ are independent\footnote{
                This means that $\mathbb{P} \left( X^{-1}(A) \cap B \right) = \mathbb{P} \left(X^{-1}(A)\right) \cdot \mathbb{P} \left(B\right)$ for all (Borel) measurable $A \subseteq \mathbb{R}$ and all $B \in \calG$. 
                This holds, if $\calG = \sigma(Y)$ is the smallest $\sigma$-algebra such that $Y$ is $\calG$-measurable and $X, Y$ are independent random variables. This is generally the case in this paper, as we assume that the random samples are i.i.d.~over time.
             }, 
             then $\mathbb{E} \left[X \mid \calG \right] = \mathbb{E} \left[X\right]$. 

            \item\label{item: ce 7} 
            Monotonicity: If $X \leq Y$ almost surely, then $\mathbb{E} \left[X \mid \calG \right] \leq \mathbb{E} \left[Y \mid \calG \right]$ almost surely.

            \item\label{item: ce 8} 
             Triangle inequality: $\left| \mathbb{E} \left[X \mid \calG \right] \right| \leq \mathbb{E} \left[|X| \mid \calG \right]$.
         \end{enumerate}
     \end{theorem}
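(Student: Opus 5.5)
The plan is to derive every item from the defining characterisation of $\mathbb{E}[X \mid \calG]$ given in Section~\ref{sec: background}: it is the $\mathbb{P}$-a.s.\ unique $\calG$-measurable integrable random variable $W$ with $\mathbb{E}[1_A W] = \mathbb{E}[1_A X]$ for all $A \in \calG$. Existence (e.g.\ via Radon--Nikodym) and a.s.\ uniqueness are assumed as standard. For each claimed identity I will exhibit the right-hand side, check that it is $\calG$-measurable and integrable, and verify the integral identity on every $A \in \calG$. Uniqueness then gives the equality almost surely.

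\textbf{Items \ref{item: ce 1}--\ref{item: ce 3}.} Item~\ref{item: ce 1} follows by taking $A = \Omega \in \calG$ in the defining identity. Item~\ref{item: ce 1,5}, as stated, is the special case $\calG = \{\emptyset, \Omega\}$ of item~\ref{item: ce 1}. Remark~\ref{remark: first intuition conditional expectation} additionally shows that $\mathbb{E}[X \mid \{\emptyset, \Omega\}]$ is the constant $\mathbb{E}[X]$. Item~\ref{item: ce 2} holds because a constant is $\calG$-measurable and trivially satisfies the integral identity. Item~\ref{item: ce 3} holds because $a\mathbb{E}[X \mid \calG] + b\mathbb{E}[Y \mid \calG]$ is $\calG$-measurable, and linearity of the ordinary expectation gives the identity.

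\textbf{Items \ref{item: ce 5}--\ref{item: ce 8}.} For the tower property, note that $\mathbb{E}[X \mid \calG]$ is $\calH$-measurable because $\calG \subseteq \calH$. Any $\calH$-measurable integrable $W$ satisfies $\mathbb{E}[W \mid \calH] = W$, since $W$ itself is a valid candidate; this gives the first equality. For the second, for $A \in \calG \subseteq \calH$ we have $\mathbb{E}[1_A \mathbb{E}[X \mid \calH]] = \mathbb{E}[1_A X] = \mathbb{E}[1_A \mathbb{E}[X \mid \calG]]$.

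For independence, the constant $\mathbb{E}[X]$ is $\calG$-measurable. For $A \in \calG$, the variables $1_A$ and $X$ are independent, so $\mathbb{E}[1_A X] = \mathbb{P}(A)\,\mathbb{E}[X] = \mathbb{E}[1_A \mathbb{E}[X]]$.

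For monotonicity, set $D := \mathbb{E}[X \mid \calG] - \mathbb{E}[Y \mid \calG]$ and $A := \{D > 0\} \in \calG$. Then $\mathbb{E}[1_A D] = \mathbb{E}[1_A (X - Y)] \le 0$. A nonnegative variable with nonpositive integral vanishes a.s., so $1_A D = 0$ a.s., hence $\mathbb{P}(A) = 0$.

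The triangle inequality follows from $\pm X \le |X|$ together with monotonicity and linearity.

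\textbf{Item \ref{item: ce 4} (taking out what is known).} This is the main obstacle, because of the integrability and limit bookkeeping. I would first handle $X = 1_B$ with $B \in \calG$: for $A \in \calG$ we have $\mathbb{E}[1_A 1_B Y] = \mathbb{E}[1_{A \cap B} Y] = \mathbb{E}[1_{A \cap B}\mathbb{E}[Y \mid \calG]]$, since $A \cap B \in \calG$. Linearity then extends this to simple $\calG$-measurable $X$.

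Next, assume $X, Y \ge 0$, so that $\mathbb{E}[Y \mid \calG] \ge 0$ by monotonicity. Take simple $\calG$-measurable $0 \le X_n \uparrow X$. Monotone convergence applied on both sides gives $\mathbb{E}[1_A X Y] = \mathbb{E}[1_A X \mathbb{E}[Y \mid \calG]]$. Choosing $A = \Omega$ shows that $X\,\mathbb{E}[Y \mid \calG]$ is integrable whenever $XY$ is.

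Finally, split $X = X^+ - X^-$ and $Y = Y^+ - Y^-$; each $X^\pm$ is $\calG$-measurable. Since each product $X^\pm Y^\pm$ is dominated by $|XY|$, all four terms are integrable, and the four nonnegative cases combine by linearity (item~\ref{item: ce 3}).
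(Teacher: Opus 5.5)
Your proposal is correct, and your arguments for preservation of expectation and preservation of constants match the paper's (take $A=\Omega$; constants are $\calG$-measurable and satisfy the defining identity). For the trivial-$\sigma$-algebra item you note that it is a special case of the first item. The paper instead appeals to Remark~\ref{remark: first intuition conditional expectation}, which proves the stronger statement that $\mathbb{E}[X\mid\{\emptyset,\Omega\}]$ is a.s.\ the constant $\mathbb{E}[X]$.

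The real difference is in the other six items. The paper does not prove linearity, taking out what is known, the tower property, independence, monotonicity or the triangle inequality; it cites \cite[Theorem 8.14]{klenke} for all of them. You derive each one from the a.s.\ uniqueness of the defining characterisation. For taking out what is known you use the indicator--simple--monotone-convergence bootstrap followed by the $X^{\pm}Y^{\pm}$ splitting.

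Your version makes the theorem self-contained and isolates its one delicate point: integrability of $X\,\mathbb{E}[Y\mid\calG]$ is a consequence, not an extra hypothesis. You obtain it by taking $A=\Omega$ after monotone convergence, which needs $\mathbb{E}[Y\mid\calG]\geq 0$, so monotonicity must be proved first, as you do. Beyond that you rely only on standard facts: existence of conditional expectations, monotone convergence, and the product rule for independent integrable variables.

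The paper's citation buys brevity, which suits an expository appendix, but leaves most of the listed properties unproved in the text.
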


     \begin{proof}
        For property~\eqref{item: ce 1} note that because $\Omega \in \calG$, as $\calG$ is a $\sigma$-algebra, and by the definition of the (conditional) expectation:
        \begin{align*}
            \mathbb{E} \left[ \mathbb{E} \left[X \mid \calG \right] \right] 
            = \mathbb{E} \left[ 1_{\Omega} \cdot \mathbb{E} \left[X \mid \calG \right] \right] 
            %= \int_{\Omega} \mathbb{E} \left[X \mid \calG \right] \ d\mathbb{P}
            %= \int_{\Omega} X \ d\mathbb{P} 
            = \mathbb{E} \left[ 1_{\Omega} \cdot X \right] 
            = \mathbb{E} \left[X \right]. 
        \end{align*}
        This proves~\eqref{item: ce 1}. 
        We have proven \eqref{item: ce 1,5} in Remark~\ref{remark: first intuition conditional expectation}. 
         It is also true by \cite[Theorem 8.14]{klenke}.

        For \eqref{item: ce 2} note that constant functions are $\{ \emptyset, \Omega \}$-measurable. 
        Hence as $\{ \emptyset, \Omega \} \subseteq \calG$, constant functions are also $\calG$-measurable. 
        Therefore, in order to prove~\eqref{item: ce 2} it is sufficient to show that 
        \begin{align*}
            \mathbb{E} \left[ 1_A \cdot \mathbb{E} \left[ c \mid \calG \right] \right]
            = 
            \mathbb{E} \left[ 1_A \cdot c \right] = c \cdot \mathbb{P}(A) 
            %\int_A c \ d\mathbb{P} \overset{!}{=} \int_A \mathbb{E} \left[c \mid \calG \right] \ d\mathbb{P} = \int_A c \ d\mathbb{P} = c \cdot \mathbb{P}(A) 
            \qquad \mathrm{for \ all \ } A \in \calG.
        \end{align*}
        This is trivial. Thus, property~\eqref{item: ce 2} holds.
     
        Lastly, the properties \eqref{item: ce 3}, \eqref{item: ce 4}, \eqref{item: ce 5}, \eqref{item: ce 6}, \eqref{item: ce 7}~\&~\eqref{item: ce 8} hold by \cite[Theorem 8.14]{klenke}. 
     \end{proof}

     \paragraph{Martingales} 
     Recall that a sequence of integrable random variables $\{M_n\}_{n \in \mathbb{N}}$ is \emph{adapted} to a filtration $\{\calF_n\}_{n \in \mathbb{N}}$, if $M_n$ is $\calF_n$-measurable for all $n \in \mathbb{N}$.
     Given a filtration $\{\calF_n\}_{n \in \mathbb{N}}$, an integrable, $\{\calF_n\}_{n \in \mathbb{N}}$-adapted sequence of random variables $\{M_n\}_{n \in \mathbb{N}}$ is 
	\begin{enumerate}[(i)]
		\item a {\em martingale} if for all $n \in \mathbb{N}$,
		$\mathbb{E}[M_{n+1} \mid \calF_n] = M_n$ almost surely (a.s.);
		\item a	{\em supermartingale} if for all $n \in \mathbb{N}$,
		$\mathbb{E}[M_{n+1} \mid \calF_n] \leq M_n$ a.s.;
		\item a	{\em submartingale}	if for all $n \in \mathbb{N}$,
		$\mathbb{E}[M_{n+1} \mid \calF_n] \geq M_n$ a.s..
	\end{enumerate}
    In particular, if $\{M_n\}_{n \in \mathbb{N}}$ is a martingale, then $\mathbb{E} \left[M_n \right] = \mathbb{E} \left[M_0 \right]$ for every $n \in \mathbb{N}$ by Theorem~\ref{thm: properties conditional expectation}. 
    Similarly, if $\{M_n\}_{n \in \mathbb{N}}$ is a supermartingale, resp.~submartingale, then we have $\mathbb{E} \left[M_n \right] \leq \mathbb{E} \left[M_0 \right]$, resp.~$\mathbb{E} \left[M_n \right] \geq \mathbb{E} \left[M_0 \right]$, for all $n \in \mathbb{N}$. 

    \begin{remark}
        By the Remarks~\ref{remark: first intuition conditional expectation}~\&~\ref{remark: second intuition conditional expectation}, 
        intuitively a martingale $\{M_n\}_{n \in \mathbb{N}}$ is information preserving under the filtration $\{\calF_n\}_{n \in \mathbb{N}}$. 
        This makes them natural candidates for probabilistic invariants.
        Specifically, the random variable $M_n: \Omega \rightarrow \mathbb{R}$ is the best approximation of $M_{n+1}: \Omega \rightarrow \mathbb{R}$ when restricted to the information represented by $\calF_n \subseteq \calF_{n+1}$. 
        Similarly, for a supermartingale, resp.~a submartingale, the random variable $M_{n}: \Omega \rightarrow \mathbb{R}$ is allowed to overapproximate, resp.~underapproximate, $M_{n+1}: \Omega \rightarrow \mathbb{R}$ given the information represented by $\calF_n$. 
    \end{remark}

    \begin{lemma}
        The family of random variables $\{Z_n\}_{n \in \mathbb{N}}$ from Lemma~\ref{lemma: appendix example martingale} is a martingale 
        with respect to the filtration $\{\calF_n\}_{n \in \mathbb{N}}$ generated by dyadic intervals.
    \end{lemma}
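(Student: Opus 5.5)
To prove that $\{Z_n\}_{n \in \mathbb{N}}$ is a martingale, I will check the three defining properties in turn: adaptedness, integrability, and the martingale identity.

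\emph{Adaptedness.} By definition of the conditional expectation, $Z_n = \mathbb{E}[Z \mid \calF_n]$ is $\calF_n$-measurable. Alternatively, this can be read off the explicit formula of Lemma~\ref{lemma: appendix example martingale}: $Z_n$ is a finite linear combination of indicators of dyadic intervals in $D_n \subseteq \calF_n$.

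\emph{Integrability.} The formula shows $0 \leq Z_n \leq 1$, since $Z_n$ takes the midpoints of the dyadic intervals as values. Hence $\mathbb{E}|Z_n| \leq 1 < \infty$.

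\emph{Martingale identity.} The cleanest route is the tower property, Theorem~\ref{thm: properties conditional expectation}\eqref{item: ce 5}. Since $\calF_n \subseteq \calF_{n+1}$, we get almost surely
\[
\mathbb{E}[Z_{n+1} \mid \calF_n] = \mathbb{E}\big[\mathbb{E}[Z \mid \calF_{n+1}] \mid \calF_n\big] = \mathbb{E}[Z \mid \calF_n] = Z_n .
\]
This is the general fact that $\{\mathbb{E}[Z \mid \calF_n]\}_n$ is a martingale for any integrable $Z$. Integrability of $Z$ holds trivially, as $0 \leq Z \leq 1$.

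As an optional sanity check, I would also verify the identity directly from the explicit formula, since that is in the spirit of the appendix. The right-hand side $Z_n$ is $\calF_n$-measurable, so it suffices to show $\mathbb{E}[1_A Z_{n+1}] = \mathbb{E}[1_A Z_n]$ for every $A \in \calF_n$. By additivity it is enough to take $A = [k2^{-n},(k+1)2^{-n})$. This interval is the disjoint union of the two dyadic intervals of level $n+1$ with indices $2k$ and $2k+1$. On these, $Z_{n+1}$ takes the values $2^{-(n+2)}(4k+1)$ and $2^{-(n+2)}(4k+3)$, each on a set of measure $2^{-(n+1)}$. The integral is therefore
\[
2^{-(n+1)} \cdot 2^{-(n+2)}(8k+4) = 2^{-n} \cdot 2^{-(n+1)}(2k+1),
\]
which equals the integral of $Z_n$ over $A$.

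There is no real obstacle here. The only step needing care is in the direct computation, where one must reduce to the generating intervals $D_n$ and index the child intervals correctly. The tower-property argument avoids even that.
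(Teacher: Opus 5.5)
Your proposal is correct, but your main argument differs from the paper's. The paper does not use the tower property. It observes that $\mathbb{E}[Z_{n+1} \mid \calF_n]$, being $\calF_n$-measurable, is constant on each atom $[k2^{-n},(k+1)2^{-n})$. It then computes that constant by integrating the explicit formula for $Z_{n+1}$ over the two child intervals, obtaining $2^{-(2n+3)}(8k+4)$ for the integral. Finally it checks that the constant equals the value $2^{-(n+1)}(2k+1)$ of $Z_n$ on that atom.

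That is exactly your ``optional sanity check''. The only difference is that the paper computes the conditional expectation on the atoms, whereas you verify the defining integral identity for the candidate $Z_n$; the two are equivalent here.

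Your tower-property argument is shorter and never uses the explicit formula from the preceding lemma. It also proves the more general fact that $\{\mathbb{E}[Z \mid \calF_n]\}_{n}$ is a martingale for any integrable $Z$ and any filtration. The paper's computation is more elementary and concrete, which fits the pedagogical purpose of that appendix, and it incidentally cross-checks the formula for $Z_n$ derived just before.

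You also spell out adaptedness and integrability. The paper leaves these implicit by pointing to the explicit form of $Z_n$.
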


    \begin{proof}
        By Lemma~\ref{lemma: appendix example martingale} it is sufficient to show that $\{Z_n\}_{n \in \mathbb{N}}$ satisfies the martingale property, i.e.
        \begin{align*}
            \mathbb{E}[Z_{n+1} \mid \calF_n] = Z_n
        \end{align*}
        almost surely for all $n \in \mathbb{N}$. 

        Hence, let $n \in \mathbb{N}$ be fixed. As in the proof of Lemma~\ref{lemma: appendix example martingale} the random variable 
        $\mathbb{E}[Z_{n+1} \mid \calF_n]$ is constant on all dyadic intervals $[k 2^{-n}, (k+1) 2^{-n})$ for $k \in \{0, 1, \cdots, 2^n - 1\}$. 
        Therefore, there exist numbers $c_{n, 0}, c_{n, 1}, \cdots, c_{n, 2^n - 1} \in \mathbb{R}$ such that for all $\omega \in [0, 1)$,
         \begin{align*}
             \mathbb{E}[Z_{n+1} \mid \calF_n](\omega) = \sum_{k = 0}^{2^n - 1} c_{n, k} \cdot 1_{[k  2^{-n}, (k+1) 2^{-n})}(\omega). 
         \end{align*}
         So, by choosing $A = [k  2^{-n}, (k+1) 2^{-n}) \in \calF_n$ and by Lemma~\ref{lemma: appendix example martingale} we obtain that 
         \begin{align*}
             c_{n, k} \cdot 2^{-n} 
             %= \int_{[k  2^{-n}, (k+1) 2^{-n})} c_{n, k} \ dx 
             &= \int_{[k  2^{-n}, (k+1) 2^{-n})} Z_{n+1}(x) \ dx    
             \\ 
             &= 2^{-(n+2)} \int_{k  2^{-n}}^{(2k+1) 2^{-(n+1)}} (4k + 1) \ dx 
             + 2^{-(n+2)} \int_{(2k+1) 2^{-(n+1)}}^{(k+1) 2^{-n}} (4k + 3) \ dx 
             \\
             &= 2^{-(2n+3)} \cdot \left( 8k + 4 \right) = 2^{-(2n+1)} \cdot \left( 2k + 1 \right).
         \end{align*}
         This implies that $c_{n, k} = 2^{-(n+1)} \cdot \left( 2k + 1 \right)$. 
         This concludes the proof by Lemma~\ref{lemma: appendix example martingale}.
    \end{proof}

    \clearpage

    \section{Known versions of the Optional Stopping Theorem}
    \label{sec: appendix ost}

    In this section we present different versions of the Optional Stopping Theorem that are known in the mathematical literature or have been proven in more recent Computer Science papers. 

    ~

    First, the Optional Stopping Theorem holds, if either the stopping time or the martingale is bounded.
    
    \begin{theorem}[{e.g.~\cite[Theorem 10.10]{Williams91}}]
    \label{thm: ost for bounded martingales}
        Let $(\Omega, \calF, \mathbb{P})$ be a probability space, and $M = \{M_n\}_{n \in \mathbb{N}}$ be a supermartingale adapted to a filtration $\{\calF_n\}_{n \in \mathbb{N}}$. 
        Let $T: \Omega \rightarrow \mathbb{N} \cup \{\infty\}$\footnote{
            \label{footnote: diverging runs appendix}
            Note that as in Remark~\ref{remark: infinite stopping time}, if $\mathbb{P}(T = \infty) > 0$, 
            then on the event $\{T = \infty\}$ there exists an integrable limit function $M_{\infty}$ such that $\lim_{k \rightarrow \infty} M_k = M_{\infty}$ almost surely. 
        } be a stopping time. 
        Then $M_T$ is integrable and 
        \begin{align*}
            \mathbb{E} \left( M_T \right) \leq \mathbb{E} \left(M_0 \right)
        \end{align*}
        in each of the following situations:
        \begin{enumerate}
            \item $T$ is bounded, i.e.~$\mathbb{P} \left( T \leq N \right) = 1$ for some $N \in \mathbb{N}$;
            \item $M$ is bounded, i.e.~$\mathbb{P} \left( \left| M_n \right| \leq K \right) = 1$ holds for some $K > 0$ and all $n \in \mathbb{N}$;
            %\item $\mathbb{E} \left( T \right) < \infty$
        \end{enumerate}
    \end{theorem}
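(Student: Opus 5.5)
Both situations are special cases of the {\dui} precondition, so I would prove the statement by exhibiting in each case an integrable random variable $X \geq 0$ with $\sup_{k \in \mathbb{N}} |M_{\min(k,T)}| \leq X$ almost surely, and then invoking Theorem~\ref{thm: ost}. That theorem yields both that $M_T$ is integrable and that $\mathbb{E}(M_T) \leq \mathbb{E}(M_0)$ for a supermartingale. The convention for $M_T$ on $\{T = \infty\}$ is the limit one of Remark~\ref{remark: infinite stopping time} and the footnote in the statement.

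\emph{Case 1 ($T \leq N$ a.s.).} For almost every $\omega$ we have $\min(k, T(\omega)) \in \{0, 1, \dots, N\}$ for all $k$. Hence
\[
\sup_{k \in \mathbb{N}} |M_{\min(k,T)}| \leq \sum_{j=0}^{N} |M_j| =: X .
\]
Each $M_j$ is integrable by the definition of a supermartingale, so $X$ is a finite sum of integrable random variables and $\mathbb{E}(X) < \infty$. Thus {\dui} holds. Here $M_T = M_{\min(N,T)}$ a.s., so no limiting argument is even needed. Alternatively, one can use directly that the stopped process is a supermartingale (Remark~\ref{rem: stopped martingale}) and evaluate it at time $N$.

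\emph{Case 2 ($|M_n| \leq K$ a.s. for all $n$).} Take the constant $X := K$, which is trivially integrable. Since $\min(k,T)$ ranges over indices in $\mathbb{N}$, we get $\sup_k |M_{\min(k,T)}| \leq \sup_n |M_n| \leq K$ almost surely. The null sets involved are countably many, one per $n$, so their union is still null. Again {\dui} holds, and Theorem~\ref{thm: ost} gives the claim, with $|M_T| \leq K$ in addition.

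The only point I expect to need care is the meaning of $M_T$ on $\{T = \infty\}$, which matters in Case 2 since Case 1 excludes it. There I rely on the limit convention already fixed in Remark~\ref{remark: infinite stopping time}: a bounded supermartingale converges a.s. by the supermartingale convergence theorem. Theorem~\ref{thm: ost} is stated for that same convention, so the reduction goes through without further work.

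If one preferred a self-contained argument, I would proceed in three steps. First, $\mathbb{E}(M_{\min(n,T)}) \leq \mathbb{E}(M_0)$ holds for every $n$, because the stopped process is a supermartingale. Second, $M_{\min(n,T)} \to M_T$ almost surely. Third, dominated convergence with dominating function $X$ passes the inequality to the limit.
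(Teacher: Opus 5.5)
Your argument is correct, but it takes a different route from the one the paper relies on. The paper gives no proof of this statement; it is quoted from \cite[Theorem~10.10]{Williams91}. That proof is the direct one you sketch at the end: the stopped process $M_{\min(n,T)}$ is a supermartingale, so $\mathbb{E}(M_{\min(n,T)}) \leq \mathbb{E}(M_0)$. In case 1 one takes $n = N$, and in case 2 one lets $n \to \infty$ by bounded convergence.

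Your main route instead obtains both cases as corollaries of the paper's {\dui} theorem (Theorem~\ref{thm: ost}), with dominating functions $\sum_{j=0}^{N}|M_j|$ and $K$. This makes precise the paper's claim that constant-bound preconditions are special cases of {\dui}. It is not circular, since the proof of Theorem~\ref{thm: ost} only uses the stopped-supermartingale property. For case 1, however, it is a detour: that proof already uses $\mathbb{E}(M_{\min(n,T)}) \leq \mathbb{E}(M_0)$ for each fixed $n$, which is exactly case 1 for the bounded stopping time $\min(n,T)$.

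Your care about $\{T = \infty\}$ pays off in case 2. Williams additionally assumes $T < \infty$ almost surely, whereas the statement here drops that assumption in favour of the limit convention of the footnote. So some convergence result is needed on $\{T = \infty\}$. Your reduction supplies it through the uniform-integrability convergence theorem inside the proof of Theorem~\ref{thm: ost}. Your self-contained version supplies it through Doob's convergence theorem for the $L^1$-bounded supermartingale. Both work.

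One small citation fix: Remark~\ref{rem: stopped martingale} states only the martingale case. For the supermartingale version of the stopped-process fact, cite \cite[Theorem~10.9]{Williams91} directly.
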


    Second, the Optional Stopping Theorem also holds, if both the stopping time and the martingale are bounded in some sense. 
    
    \begin{theorem}[{e.g.~\cite[Theorem 12.5.9]{grimmett}}]
    \label{thm: first mixed ost}
        Let $(\Omega, \calF, \mathbb{P})$ be a probability space, and $M = \{M_n\}_{n \in \mathbb{N}}$ be a martingale adapted to a filtration $\{\calF_n\}_{n \in \mathbb{N}}$. 
        Let $T: \Omega \rightarrow \mathbb{N} \cup \{\infty\}$ be a stopping time. 
        Then $\mathbb{E} \left( M_T \right) = \mathbb{E} \left(M_0 \right)$, if the following conditions hold.
        \begin{enumerate}
            \item $\mathbb{E} \left( T \right) < \infty$, 
            \item there is a constant $c \in (0, \infty)$ 
            such that $\mathbb{E} \left( \left| M_{n + 1} - M_{n} \right| \mid \calF_n \right) \leq c$ holds almost surely on $\{T \geq n\}$ for all $n \in \mathbb{N}$. 
        \end{enumerate}
    \end{theorem}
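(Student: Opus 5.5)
The statement to prove is Theorem~\ref{thm: first mixed ost}, the classical mixed OST: $\mathbb{E}(T)<\infty$ together with conditionally bounded increments. I would prove it by reducing to the \dui{} precondition of Theorem~\ref{thm: ost}. The idea is to build an integrable dominating random variable for the stopped process from the telescoping sum of increments.

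\emph{Step 1: telescoping bound.} For every $k$,
\[
|M_{\min(k,T)}| \le |M_0| + \sum_{n=0}^{\min(k,T)-1} |M_{n+1}-M_n| \le |M_0| + \sum_{n=0}^{\infty} 1_{\{T > n\}}\,|M_{n+1}-M_n| =: X.
\]
Since $X$ does not depend on $k$, it gives $\sup_k |M_{\min(k,T)}| \le X$ almost surely. Note that $|M_0|$ is integrable because $M$ is a martingale.

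\emph{Step 2: integrability of $X$.} This is the main point. By monotone convergence,
\[
\mathbb{E}(X) = \mathbb{E}|M_0| + \sum_{n\ge 0} \mathbb{E}\big(1_{\{T>n\}}|M_{n+1}-M_n|\big).
\]
The event $\{T>n\} = \Omega\setminus\{T\le n\}$ lies in $\calF_n$, so we may condition on $\calF_n$ and take out what is known (Theorem~\ref{thm: properties conditional expectation}; for nonnegative variables this holds without integrability caveats). This gives
\[
\mathbb{E}\big(1_{\{T>n\}}\,\mathbb{E}(|M_{n+1}-M_n|\mid\calF_n)\big) \le c\,\mathbb{P}(T>n).
\]
The hypothesis only gives the increment bound on $\{T\ge n\}$, and $\{T>n\}$ is contained in it, so the bound applies. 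Summing over $n$,
\[
\mathbb{E}(X) \le \mathbb{E}|M_0| + c\sum_{n\ge 0}\mathbb{P}(T>n) = \mathbb{E}|M_0| + c\,\mathbb{E}(T) < \infty .
\]

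\emph{Step 3: conclude.} Now \dui{} holds with this $X$, so Theorem~\ref{thm: ost} yields that $M_T$ is integrable and $\mathbb{E}(M_T)=\mathbb{E}(M_0)$. Since $\mathbb{E}(T)<\infty$, we have $T<\infty$ almost surely, so $M_T$ is the ordinary stopped value and Remark~\ref{remark: infinite stopping time} is not needed.

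The only delicate step is the conditioning in Step 2. One must check that the indicator is $\calF_n$-measurable, which requires using $\{T>n\}$ rather than $\{T\ge n\}$. One must also justify exchanging the infinite sum and the expectation, which monotone convergence handles because all terms are nonnegative.
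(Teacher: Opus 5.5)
Your proof is correct. The paper does not prove this statement itself: it only lists it among the known versions of the OST and cites Grimmett.

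Your argument is essentially the standard textbook proof. That proof dominates $M_{\min(k,T)}$ by the same variable $W := |M_0| + \sum_{n\ge 0} 1_{\{T>n\}}|M_{n+1}-M_n|$. It then bounds $\mathbb{E}(W) \le \mathbb{E}|M_0| + c\,\mathbb{E}(T)$ by conditioning on $\calF_n$, exactly as in your Step~2. As you correctly note, this step uses $\{T>n\}\in\calF_n$ and $\{T>n\}\subseteq\{T\ge n\}$.

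The only difference is the last step. The textbook concludes by dominated convergence: $\mathbb{E}(M_{\min(k,T)}) = \mathbb{E}(M_0)$ for all $k$, $M_{\min(k,T)}\to M_T$ a.s., and $|M_{\min(k,T)}|\le W$. Equivalently, it checks the hypotheses of Theorem~\ref{thm: weird mixed ost}. You instead hand $W$ to the \dui precondition of Theorem~\ref{thm: ost}.

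The textbook route is self-contained and elementary. Yours reuses the paper's machinery, and it has the side benefit of exhibiting this classical mixed OST as a special case of \dui, which is exactly the kind of subsumption the paper claims for its precondition.

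One cosmetic point: $W$ may equal $+\infty$ on a null set. To match the requirement $X:\Omega\to\mathbb{R}_+$ in \dui, take $X := W\cdot 1_{\{W<\infty\}}$, which changes nothing almost surely.
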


    \begin{theorem}[{e.g.~\cite[Theorem 12.5.1]{grimmett}}]
    \label{thm: weird mixed ost}
        Let $(\Omega, \calF, \mathbb{P})$ be a probability space, and $M = \{M_n\}_{n \in \mathbb{N}}$ be a martingale adapted to a filtration $\{\calF_n\}_{n \in \mathbb{N}}$. Let $T: \Omega \rightarrow \mathbb{N} \cup \{\infty\}$ be a stopping time. 
        Then $\mathbb{E} \left( M_T \right) = \mathbb{E} \left(M_0 \right)$, if the following conditions hold.
        \begin{enumerate}
            \item $\mathbb{P} \left( T < \infty \right)  = 1$, 
            \item $\mathbb{E} \left( \left| M_T \right| \right) < \infty$,
            \item and 
                $\displaystyle \lim_{n \rightarrow \infty} \mathbb{E} \left( M_n \cdot 1_{\{ T > n \}} \right) = 0$.
        \end{enumerate}
    \end{theorem}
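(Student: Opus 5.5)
The plan is to work with the stopped process $\{M_{\min(n,T)}\}_{n\in\mathbb{N}}$, split it on the events $\{T\le n\}$ and $\{T>n\}$, and pass to the limit $n\to\infty$ in each piece separately. The three hypotheses of Theorem~\ref{thm: weird mixed ost} correspond to the three ingredients needed:
\begin{enumerate}[(i)]
\item almost sure finiteness, so that the indicator $1_{\{T\le n\}}$ tends to $1$;
\item integrability of $M_T$, which supplies a dominating function;
\item the tail condition, which disposes of the remaining term.
\end{enumerate}

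First, by Remark~\ref{rem: stopped martingale} (i.e.~\cite[Theorem~10.9]{Williams91}), $\{M_{\min(n,T)}\}_{n\in\mathbb{N}}$ is again a martingale. Since $\min(0,T)=0$, its expectation is constant and $\mathbb{E}(M_{\min(n,T)})=\mathbb{E}(M_0)$ for every $n\in\mathbb{N}$. Next, for each fixed $n$ we have the pointwise decomposition
\[
M_{\min(n,T)} \;=\; M_T\,1_{\{T\le n\}} + M_n\,1_{\{T>n\}} .
\]
Both summands are integrable. The first is bounded in absolute value by $|M_T|$, which is integrable by the second hypothesis. The second is bounded by $|M_n|$, which is integrable because $M$ is a martingale. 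Taking expectations therefore gives, for all $n$,
\[
\mathbb{E}(M_0) \;=\; \mathbb{E}\big(M_T\,1_{\{T\le n\}}\big) + \mathbb{E}\big(M_n\,1_{\{T>n\}}\big).
\]

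Now let $n\to\infty$. The second term tends to $0$ by the third hypothesis. For the first term, $\mathbb{P}(T<\infty)=1$ implies $1_{\{T\le n\}}\to 1$ almost surely. Hence $M_T\,1_{\{T\le n\}}\to M_T$ almost surely, with the integrable dominating function $|M_T|$. Dominated convergence then yields $\mathbb{E}(M_T\,1_{\{T\le n\}})\to\mathbb{E}(M_T)$. Since the left-hand side $\mathbb{E}(M_0)$ does not depend on $n$, we conclude $\mathbb{E}(M_T)=\mathbb{E}(M_0)$.

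No step here is really hard. The only points needing care are the following:
\begin{itemize}
\item the use of the stopped-martingale property, which I would take directly from Remark~\ref{rem: stopped martingale};
\item the remark that $M_T$ is well defined as a random variable, because $T<\infty$ almost surely, so no limit $M_\infty$ as in Remark~\ref{remark: infinite stopping time} is needed;
\item checking integrability of each summand before splitting the expectation, which is exactly where the second hypothesis is used.
\end{itemize}
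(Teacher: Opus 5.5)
Your argument is correct and is essentially the standard proof behind the paper's citation \cite[Theorem 12.5.1]{grimmett}; the paper itself gives no proof of this statement. That proof writes $M_T = M_{\min(n,T)} + (M_T - M_n)\,1_{\{T>n\}}$, uses $\mathbb{E}(M_{\min(n,T)}) = \mathbb{E}(M_0)$, disposes of $\mathbb{E}(M_n 1_{\{T>n\}})$ by the third condition, and gets $\mathbb{E}(M_T 1_{\{T>n\}}) \to 0$ as the tail of the convergent series $\sum_k \mathbb{E}(M_T 1_{\{T=k\}})$, which is a cosmetic variant of your dominated convergence step with dominating function $|M_T|$.
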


    An important special case is that non-negative supermartingales satisfy the Optional Stopping Theorem automatically. 
    
    \begin{theorem}[{e.g.~\cite[Theorem 4.6.7]{cohen}}]
    \label{thm: ost for non-negative supermartingales}
        Let $(\Omega, \calF, \mathbb{P})$ be a probability space, let $M = \{M_n\}_{n \in \mathbb{N}}$ be a non-negative supermartingale adapted to a filtration $\{\calF_n\}_{n \in \mathbb{N}}$, and let $T: \Omega \rightarrow \mathbb{N} \cup \{\infty\}$\footref{footnote: diverging runs appendix} be a stopping time. 
        Then $\mathbb{E} \left( M_T \right) \leq \mathbb{E} \left(M_0 \right)$.
    \end{theorem}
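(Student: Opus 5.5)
The plan is to combine three standard ingredients: the stopped process is again a supermartingale, non-negative supermartingales converge almost surely, and Fatou's lemma passes the inequality to the limit. Non-negativity is what makes Fatou's lemma applicable in place of the dominating function required by the \dui precondition of Theorem~\ref{thm: ost}.

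\emph{Step 1 (stopped supermartingale).} I first show that $M^T_n := M_{\min(n,T)}$ is a non-negative supermartingale with respect to $\{\calF_n\}_{n\in\mathbb{N}}$. Each $M^T_n$ is a finite sum $\sum_{j\le n} 1_{\{T=j\}} M_j + 1_{\{T>n\}} M_n$. Every term is $\calF_n$-measurable and integrable, so $M^T_n$ is adapted and integrable. For the supermartingale property I write
\[
M^T_{n+1} - M^T_n = 1_{\{T>n\}}\,(M_{n+1}-M_n).
\]
The event $\{T>n\} = \Omega\setminus\{T\le n\}$ lies in $\calF_n$. By \emph{taking out what is known} (Theorem~\ref{thm: properties conditional expectation}) this gives
\[
\mathbb{E}[M^T_{n+1}-M^T_n \mid \calF_n] = 1_{\{T>n\}}\bigl(\mathbb{E}[M_{n+1}\mid\calF_n]-M_n\bigr) \le 0
\]
almost surely. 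This is the supermartingale analogue of Remark~\ref{rem: stopped martingale}. Iterating the inequality and taking expectations yields $\mathbb{E}(M_{\min(n,T)}) \le \mathbb{E}(M_0)$ for every $n\in\mathbb{N}$.

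\emph{Step 2 (making sense of $M_T$).} On $\{T<\infty\}$ the sequence $M_{\min(n,T)}$ is eventually constant and equal to $M_T$. On $\{T=\infty\}$ I need the limit $M_\infty$ mentioned in Remark~\ref{remark: infinite stopping time} and the footnote of Theorem~\ref{thm: ost for bounded martingales}. Since $M^T$ is non-negative, $\sup_n \mathbb{E}|M^T_n| = \sup_n \mathbb{E}(M^T_n) \le \mathbb{E}(M_0) < \infty$. Doob's supermartingale convergence theorem (e.g.~\cite{Williams91}) therefore gives an almost sure limit $\lim_n M^T_n$ in $[0,\infty)$. I take this limit as the definition of $M_T$, exactly as in Remark~\ref{remark: infinite stopping time}. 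This step is the main conceptual obstacle. Everything else is elementary, but here one must invoke the convergence theorem, whose proof via the upcrossing inequality I would cite rather than reproduce.

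\emph{Step 3 (Fatou).} The variables $M^T_n$ are non-negative and converge almost surely to $M_T$. Fatou's lemma and Step 1 then give
\[
\mathbb{E}(M_T) = \mathbb{E}\bigl(\liminf_{n} M^T_n\bigr) \le \liminf_{n} \mathbb{E}(M^T_n) \le \mathbb{E}(M_0).
\]
In particular $M_T$ is non-negative with finite expectation, hence integrable, which completes the argument.
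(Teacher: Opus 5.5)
The paper does not prove this statement; it is quoted from the literature in Appendix~\ref{sec: appendix ost}. Your argument is correct and is the standard proof.

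Its skeleton matches the paper's proof of Theorem~\ref{thm: ost} in Appendix~\ref{sec: background background}. Both stop the process, note that $\mathbb{E}(M_{\min(n,T)}) \le \mathbb{E}(M_0)$ for every $n$, and pass to the limit. They differ only in how the limit is taken:
\begin{itemize}
\item The paper uses the integrable dominating function to get uniform integrability and hence $L^1$ convergence. This gives $\mathbb{E}(M_T) = \lim_n \mathbb{E}(M_{\min(n,T)})$, and therefore also equality for martingales.
\item You replace domination by non-negativity and apply Fatou's lemma. This gives only $\mathbb{E}(M_T) \le \liminf_n \mathbb{E}(M_{\min(n,T)})$.
\end{itemize}
The inequality is exactly what a supermartingale needs. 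It cannot be sharpened to an equality under non-negativity alone. In Example~\ref{example: gambling}, the stopped process $1 - x_2^{[\min(k,T)]}$ is a non-negative martingale that equals $1$ at $k=0$ and $0$ at time $T$.

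One minor point concerns Doob's convergence theorem in your Step 2. It is needed only to make sense of the paper's convention $M_T = M_\infty$ on $\{T=\infty\}$. The inequality itself follows from Fatou's lemma applied to $\liminf_n M_{\min(n,T)}$, with no convergence theorem. So that step is less of an obstacle than you suggest.
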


    Moreover, a uniformly integrable martingale satisfies the Optional Stopping Theorem for every stopping time. 

    \begin{definition}
        Let $(\Omega, \calF, \mathbb{P})$ be a probability space and let $M = \{M_n\}_{n \in \mathbb{N}}$ be a martingale adapted to a filtration $\{\calF_n\}_{n \in \mathbb{N}}$.  
        Then we say that $M = \{M_n\}_{n \in \mathbb{N}}$ is {\em uniformly integrable}, if for all $\varepsilon > 0$ there exists a $K > 0$ such that 
        \begin{align*}
            \mathbb{E} \left( \left| M_n \right| \cdot 1_{\{ |M_n| > K \}} \right) < \varepsilon
        \end{align*}
        holds for all $n \in \mathbb{N}$. 
    \end{definition}
    
    \begin{theorem}[{e.g.~\cite[Theorem 4.6.7]{cohen}}]
    \label{thm: ost for ui martingales}
        Let $(\Omega, \calF, \mathbb{P})$ be a probability space, and let the process $M = \{M_n\}_{n \in \mathbb{N}}$ be a uniformly integrable martingale adapted to a filtration $\{\calF_n\}_{n \in \mathbb{N}}$. 
        Moreover, let $T: \Omega \rightarrow \mathbb{N} \cup \{\infty\}$\footnote{
            \label{footnote: diverging runs appendix 2}
            Note that as in Remark~\ref{remark: infinite stopping time}, if $\mathbb{P}(T = \infty) > 0$, 
            then on the event $\{T = \infty\}$ there exists an integrable limit function $M_{\infty}$ such that $\lim_{k \rightarrow \infty} M_k = M_{\infty}$ almost surely. 
        } be a stopping time. 
        Then $\mathbb{E} \left( M_T \right) = \mathbb{E} \left(M_0 \right)$.
    \end{theorem}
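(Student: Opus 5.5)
The plan is to reduce the statement to the convergence theory of uniformly integrable martingales, and then pass to the limit in the stopped process $\{M_{\min(n,T)}\}_{n \in \mathbb{N}}$. The overall shape is: first identify a terminal variable $M_\infty$ that \emph{closes} the martingale. Then show that the stopped martingale is uniformly integrable, because it is a family of conditional expectations of $M_\infty$. Finally, use Vitali's convergence theorem to exchange limit and expectation in $\mathbb{E}(M_{\min(n,T)}) = \mathbb{E}(M_0)$.

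\emph{Step 1 (closing the martingale).} Uniform integrability implies $\sup_n \mathbb{E}|M_n| < \infty$. By Doob's martingale convergence theorem there is an integrable $M_\infty$ with $M_n \to M_\infty$ almost surely. Almost sure convergence together with uniform integrability gives convergence in $L^1$ (Vitali). For fixed $n$ and all $m \geq n$ we have $\mathbb{E}[M_m \mid \calF_n] = M_n$. Conditional expectation is an $L^1$-contraction (triangle inequality in Theorem~\ref{thm: properties conditional expectation}), so letting $m \to \infty$ yields $M_n = \mathbb{E}[M_\infty \mid \calF_n]$ almost surely. This is also the integrable limit mentioned in the footnote, used to define $M_T$ on $\{T=\infty\}$.

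\emph{Step 2 (stopped process as conditional expectations).} Let $\calF_{S}$ denote the stopped $\sigma$-algebra of a bounded stopping time $S = \min(n,T)$. I will show $M_{S} = \mathbb{E}[M_n \mid \calF_{S}]$ by decomposing over the events $\{S = j\}$ for $j \leq n$. For $A \in \calF_S$, the set $A \cap \{S=j\}$ lies in $\calF_j$, and therefore $\mathbb{E}[1_{A\cap\{S=j\}} M_n] = \mathbb{E}[1_{A\cap\{S=j\}} M_j]$ by the martingale property. Summing over $j$ gives the claim. Combining with Step 1 and the tower property (using $\calF_S \subseteq \calF_n$), we get $M_{\min(n,T)} = \mathbb{E}[M_\infty \mid \calF_{\min(n,T)}]$.

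\emph{Step 3 (uniform integrability of the stopped family).} I will use the standard fact that $\{\mathbb{E}[X \mid \calG] : \calG \subseteq \calF\}$ is uniformly integrable for a fixed integrable $X$. Its proof runs as follows. Write $Y = \mathbb{E}[X\mid\calG]$; then $\mathbb{E}[|Y| 1_{\{|Y|>K\}}] \leq \mathbb{E}[|X| 1_{\{|Y|>K\}}]$, since the event $\{|Y|>K\}$ is $\calG$-measurable. Markov's inequality gives $\mathbb{P}(|Y|>K) \leq \mathbb{E}|X|/K$. Absolute continuity of $A \mapsto \mathbb{E}[|X|1_A]$ then makes the right-hand side small uniformly in $\calG$ once $K$ is large. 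I expect this step, together with the measurability bookkeeping in Step 2, to be the main technical point. Everything else is quoting standard theorems.

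\emph{Step 4 (conclusion).} Almost surely $M_{\min(n,T)} \to M_T$, where on $\{T<\infty\}$ the sequence is eventually constant and on $\{T=\infty\}$ the limit is $M_\infty$ by Step 1. By Step 3 and Vitali's theorem the convergence holds in $L^1$, so $M_T$ is integrable and $\mathbb{E}(M_T) = \lim_n \mathbb{E}(M_{\min(n,T)})$. Finally, by Remark~\ref{rem: stopped martingale} the stopped process is a martingale, so $\mathbb{E}(M_{\min(n,T)}) = \mathbb{E}(M_0)$ for every $n$. This gives $\mathbb{E}(M_T)=\mathbb{E}(M_0)$.
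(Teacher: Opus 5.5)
Your proof is correct. The paper itself gives no proof of this statement: it is quoted from \cite{cohen} in Appendix~\ref{sec: appendix ost} as one entry in a catalogue of known versions of the OST.

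The closest argument in the paper is its proof of Theorem~\ref{thm: ost} (the {\dui} precondition), and your Step~4 is essentially that proof. In both, uniform integrability of $\{M_{\min(n,T)}\}_{n\in\mathbb{N}}$ gives $L^1$ convergence to $M_T$, and one passes to the limit in $\mathbb{E}(M_{\min(n,T)})=\mathbb{E}(M_0)$. The paper gets the limit from the convergence theorem for uniformly integrable martingales \cite[Theorem 11.7]{klenke} and identifies it with $M_T$ via convergence in distribution. Your pointwise identification is the cleaner way to do this: the sequence is eventually constant on $\{T<\infty\}$ and equals $\lim_n M_n$ on $\{T=\infty\}$.

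The genuine difference is where uniform integrability of the \emph{stopped} family comes from. Under {\dui} it is immediate, because the stopped family is dominated by a single integrable variable \cite[Theorem 6.18]{klenke}. Here the hypothesis concerns $M$ itself, so it must be transferred, and your Steps~1--3 do exactly that:
\begin{enumerate}
\item you close the martingale by $M_\infty$;
\item you write $M_{\min(n,T)}=\mathbb{E}[M_\infty\mid\calF_{\min(n,T)}]$ via optional sampling at a bounded time;
\item you use that conditional expectations of one integrable variable form a uniformly integrable family.
\end{enumerate}
As a by-product, this constructs the limit $M_\infty$ that the footnote simply presupposes.

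Two optional simplifications. First, Step~2 already gives $\mathbb{E}(M_{\min(n,T)})=\mathbb{E}(M_n)=\mathbb{E}(M_0)$, so Remark~\ref{rem: stopped martingale} is not needed. Second, you can run the decomposition of Step~2 over $\{T=j\}$, $j\in\mathbb{N}\cup\{\infty\}$, using $M_j=\mathbb{E}[M_\infty\mid\calF_j]$. This gives $M_T=\mathbb{E}[M_\infty\mid\calF_T]$ directly, hence $\mathbb{E}(M_T)=\mathbb{E}(M_\infty)=\mathbb{E}(M_0)$, without Step~3 or the limit passage in Step~4.
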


    Based on this, two new versions of the Optional Stopping Theorem have been proven. 
    The first one was shown in \cite{Wang2021ExpectedCostAF} and has successfully been applied in \cite{wang2021}.

    \begin{theorem}[{\cite[Theorem 3.12]{Wang2021ExpectedCostAF}}]
        Let $(\Omega, \calF, \mathbb{P})$ be a probability space, and let $M = \{M_n\}_{n \in \mathbb{N}}$ be a (sub/super-)martingale adapted to a filtration $\{\calF_n\}_{n \in \mathbb{N}}$.         
        Let $T: \Omega \rightarrow \mathbb{N} \cup \{\infty\}$ be a stopping time. 
        Then $\{M^T_n\}_{n \in \mathbb{N}}$, i.e.~$M_n^T := M_{\min(n, T)}$, is uniformly integrable, if $\mathbb{E} \left( p(T) \right) < \infty$ and $|M_n| \leq p(n)$ a.s.~on $\{T \geq n\}$, holds for all $n \in \mathbb{N}$, for some non-decreasing function $p$.
    \end{theorem}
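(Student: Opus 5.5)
The plan is to reduce the statement to the \dui{} precondition of Theorem~\ref{thm: ost}: I will exhibit the explicit dominating random variable $X := p(T)$ and then derive uniform integrability directly from the domination.

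\emph{Step 1: pathwise bound.} Fix $n \in \mathbb{N}$ and $\omega$, and split according to the value of $T(\omega)$.
\begin{itemize}
\item On $\{T \geq n\}$ we have $\min(n,T) = n$. The hypothesis then gives $|M_{\min(n,T)}| = |M_n| \leq p(n)$, and since $p$ is non-decreasing and $n \leq T$, this is at most $p(T)$.
\item On $\{T = j\}$ with $j < n$ we have $\min(n,T) = j$. Because $\{T = j\} \subseteq \{T \geq j\}$, the hypothesis applied at index $j$ yields $|M_j| \leq p(j) = p(T)$.
\end{itemize}
Intersecting countably many almost-sure events gives $\sup_{n} |M_{\min(n,T)}| \leq p(T)$ almost surely. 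Since $|M_n| \geq 0$, the function $p$ is nonnegative wherever the bound is used, so $X := p(T)$ is a nonnegative random variable.

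\emph{Step 2: diverging runs.} The only subtle point is the event $\{T = \infty\}$. Here I interpret $p(\infty) := \lim_{n\to\infty} p(n) = \sup_n p(n)$, which exists in $[0,\infty]$ by monotonicity.
\begin{itemize}
\item If $\mathbb{P}(T=\infty) > 0$, then $\mathbb{E}(p(T)) < \infty$ forces $p(\infty) < \infty$.
\item On $\{T = \infty\}$ the hypothesis gives $|M_n| \leq p(n) \leq p(\infty)$ for every $n$, so the bound $\sup_n |M_{\min(n,T)}| \leq p(T)$ still holds.
\item If $\mathbb{P}(T=\infty) = 0$, this event is null and can be ignored.
\end{itemize}
I expect this case to be the only real obstacle, and it is resolved by monotonicity of $p$ together with integrability of $p(T)$.

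\emph{Step 3: uniform integrability.} With $X = p(T)$ integrable and dominating every $|M_{\min(n,T)}|$, for each $K > 0$ we get, uniformly in $n$,
\[
\mathbb{E}\big(|M_{\min(n,T)}| \cdot 1_{\{|M_{\min(n,T)}| > K\}}\big) \leq \mathbb{E}\big(X \cdot 1_{\{X > K\}}\big).
\]
The right-hand side tends to $0$ as $K \to \infty$ by dominated convergence, since $X < \infty$ almost surely and $X \cdot 1_{\{X > K\}} \leq X$. This is exactly the definition of uniform integrability of $\{M_n^T\}_{n\in\mathbb{N}}$.

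\emph{Conclusion.} The same bound verifies \dui{}, so Theorem~\ref{thm: ost} additionally gives the optional-stopping conclusion. Consequently, this earlier result is a special case of \dui{} in which the dominating function depends on $\omega$ only through $T(\omega)$.
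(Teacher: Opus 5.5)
Your proof is correct: the pathwise bound $\sup_{n}|M_{\min(n,T)}| \leq p(T)$ holds on $\{T=\infty\}$ too, taking $p(\infty) := \sup_n p(n)$ or any non-decreasing extension. Combined with $\mathbb{E}(p(T))<\infty$, it yields uniform integrability directly, without using the martingale or stopping-time properties. The paper gives no proof of its own here and only cites Wang et al.; your argument is exactly the reduction the paper alludes to when it calls such bounds a special case of \dui{}, with dominating variable $X = p(T)$ and the same domination-implies-uniform-integrability step as in the proof of Theorem~\ref{thm: ost}.
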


    Specifically, the following corollary is applied in \cite{wang2021}.

    \begin{theorem}[{\cite[Corollary 3.13]{Wang2021ExpectedCostAF}}]
        Let $(\Omega, \calF, \mathbb{P})$ be a probability space, and let $M = \{M_n\}_{n \in \mathbb{N}}$ be a (sub/super-)martingale adapted to a filtration $\{\calF_n\}_{n \in \mathbb{N}}$.         
        Let $T: \Omega \rightarrow \mathbb{N} \cup \{\infty\}$ be a stopping time. 
        Then $\{M^T_n\}_{n \in \mathbb{N}}$, i.e.~$M_n^T := M_{\min(n, T)}$, is uniformly integrable, if there exist $l \in \mathbb{N}$ and $C \geq 0$ such that $\mathbb{E} \left( T^l \right) < \infty$ and for all $n \in \mathbb{N}$, $|M_n| \leq C \cdot (n + 1)^l$ a.s.~on $\{T \geq n\}$. 
    \end{theorem}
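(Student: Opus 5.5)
The plan is to deduce this corollary from the preceding theorem (\cite[Theorem 3.12]{Wang2021ExpectedCostAF}) by the choice $p(n) := C\cdot(n+1)^l$. As a cross-check, I would also give a direct argument via a dominating function in the spirit of the \dui precondition of Theorem~\ref{thm: ost}.

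\textbf{Step 1: hypotheses of the previous theorem.} The function $p(n) = C(n+1)^l$ is non-decreasing on $\mathbb{N}$, since $C \geq 0$ and $l \in \mathbb{N}$. The pointwise hypothesis $|M_n| \leq p(n)$ a.s.\ on $\{T \geq n\}$ is exactly what is assumed. It remains to show $\mathbb{E}(p(T)) < \infty$.

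\textbf{Step 2: integrability of $p(T)$.} If $l = 0$, then $p \equiv C$ and this is trivial. If $l \geq 1$, then $\mathbb{E}(T^l) < \infty$ forces $T < \infty$ almost surely, so $p(T)$ is a.s.\ well defined. Convexity of $t \mapsto t^l$ gives $(T+1)^l \leq 2^{l-1}(T^l + 1)$. Hence
\[
\mathbb{E}(p(T)) \leq C\, 2^{l-1}\bigl(\mathbb{E}(T^l) + 1\bigr) < \infty.
\]
The previous theorem then yields uniform integrability of $\{M_{\min(n,T)}\}_{n \in \mathbb{N}}$.

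\textbf{Step 3: direct route via a dominating function.} Alternatively, set $X := C(T+1)^l$. This is integrable by Step 2; in the case $l = 0$, $X = C$ even if $T = \infty$ with positive probability. For every $k$ and $\omega$, write $j := \min(k, T(\omega))$. Then $j \leq T(\omega)$, so $\omega \in \{T \geq j\}$, and therefore a.s.
\[
|M_{j}| \leq C(j+1)^l \leq C(T+1)^l = X .
\]
Thus $\sup_k |M_{\min(k,T)}| \leq X$ a.s. A family dominated by a single integrable variable is uniformly integrable, because
\[
\mathbb{E}\bigl(|M_{\min(k,T)}|\, 1_{\{|M_{\min(k,T)}|>K\}}\bigr) \leq \mathbb{E}\bigl(X\, 1_{\{X > K\}}\bigr) \to 0 \quad (K \to \infty)
\]
by dominated convergence, uniformly in $k$.

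\textbf{Main obstacle.} The only delicate points are minor. The bound must be used at the index $n = \min(k,T)$, which does lie in $\{T \geq n\}$. The case $l = 0$ must be treated separately, because there finiteness of $T$ is not implied by the moment condition. I expect no real obstacle beyond this bookkeeping.
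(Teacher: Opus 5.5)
Your argument is correct and matches what the paper intends. The paper gives no proof of its own and only cites this corollary right after \cite[Theorem 3.12]{Wang2021ExpectedCostAF}, from which it follows exactly by your choice $p(n)=C(n+1)^l$ together with $(T+1)^l\le 2^{l-1}(T^l+1)$; your Step~3 is the same domination-implies-uniform-integrability argument the paper uses for its {\dui} precondition (Theorem~\ref{thm: ost}). The one detail you leave implicit is that the index $\min(k,T(\omega))$ is random, so you should discard the union over all $n\in\mathbb{N}$ of the null sets on which the hypothesis at $n$ fails; this union is still a null set.
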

    
    \begin{theorem}[{\cite[Theorem 5.2]{costanalysis}}] 
        Let $(\Omega, \calF, \mathbb{P})$ be a probability space, and let $M = \{M_n\}_{n \in \mathbb{N}}$ be a martingale (resp.~supermartingale) adapted to a filtration $\{\calF_n\}_{n \in \mathbb{N}}$.         
        Let $T$ be a stopping time. 
        Then the following conditions are sufficient to ensure that $\mathbb{E} \left( |M_T| \right) < \infty$ 
        and $\mathbb{E} \left( M_T \right) = \mathbb{E} \left( M_0 \right)$ (resp.~$\mathbb{E} \left( M_T \right) \leq \mathbb{E} \left( M_0 \right)$): 
        \begin{enumerate}
            \item There exist constants $c_1, c_2 > 0$ such that  
            $\mathbb{P} \left( T > n \right) \leq c_1 \cdot e^{-c_2 \cdot n}$ holds for sufficiently large $n \in \mathbb{N}$. 
            \item 
            There exist constants $K, d > 0$ such that for all $n \in \mathbb{N}$, $\left| M_{n + 1} - M_n \right| \leq K \cdot n^d$ holds almost surely. 
        \end{enumerate}
    \end{theorem}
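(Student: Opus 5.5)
The plan is to reduce the statement to the \dui precondition of Theorem~\ref{thm: ost}, which already covers both the martingale and the supermartingale case. The whole task is then to exhibit an integrable random variable $X$ with $\sup_{k\in\mathbb{N}} |M_{\min(k,T)}| \leq X$ almost surely. The natural candidate comes from telescoping the martingale differences up to the stopping time.

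\textbf{Step 1 (dominating function).} For every $k$ write $M_{\min(k,T)} = M_0 + \sum_{n=0}^{\min(k,T)-1} (M_{n+1}-M_n)$. By the triangle inequality and condition~2, almost surely
\[
|M_{\min(k,T)}| \leq |M_0| + \sum_{n=0}^{T-1} K n^d \leq |M_0| + K\, T^{d+1},
\]
since each of the at most $T$ summands satisfies $n^d \leq T^d$. The right-hand side does not depend on $k$, so I set $X := |M_0| + K T^{d+1}$. On $\{T=\infty\}$ the bound is vacuous, but this event will be shown to be null.

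\textbf{Step 2 (integrability of $X$).} The term $M_0$ is integrable because (super)martingales are integrable by definition. For the other term I use the tail-sum formula for nonnegative integer-valued variables:
\[
\mathbb{E}(T^{d+1}) = \sum_{n\geq 0} \big((n+1)^{d+1}-n^{d+1}\big)\,\mathbb{P}(T>n) \leq \sum_{n\geq 0} (d+1)(n+1)^{d}\,\mathbb{P}(T>n).
\]
Let $n_0$ be such that the exponential bound holds for all $n \geq n_0$. The finitely many terms with $n<n_0$ are each finite, since $\mathbb{P}(T>n) \leq 1$. The tail is dominated by $c_1 (d+1)\sum_n (n+1)^d e^{-c_2 n}$, which converges. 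The same exponential bound gives $\mathbb{P}(T=\infty) = \lim_n \mathbb{P}(T>n) = 0$, so $X$ is finite almost surely and $\mathbb{E}(X) < \infty$.

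\textbf{Step 3 (conclusion).} Theorem~\ref{thm: ost} now applies with this $X$. It yields that $M_T$ is integrable, so $\mathbb{E}(|M_T|)<\infty$. It also gives $\mathbb{E}(M_T) = \mathbb{E}(M_0)$ in the martingale case and $\mathbb{E}(M_T)\leq\mathbb{E}(M_0)$ in the supermartingale case.

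I expect no real obstacle here. The only delicate points are bookkeeping. First, the exponential tail bound holds only for large $n$, which is handled by the finite initial segment in Step 2. Second, $d$ need not be an integer, but the inequalities $(n+1)^{d+1}-n^{d+1}\leq (d+1)(n+1)^d$ (mean value theorem) and $n^d \leq T^d$ hold for real $d>0$. Third, the difference bound is required only along the run up to $T$, and those are the only differences entering the telescoping sum.
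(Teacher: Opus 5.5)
Your proof is correct. The paper itself gives no proof of this statement: it is only quoted from \cite{costanalysis} in the survey of Appendix~\ref{sec: appendix ost}, so there is no in-paper argument to compare against.

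Your route has three steps. You dominate $\sup_k |M_{\min(k,T)}|$ by $|M_0| + K\,T^{d+1}$, you get $\mathbb{E}(T^{d+1})<\infty$ from the exponential tail via the tail-sum formula, and you then invoke Theorem~\ref{thm: ost}. This exhibits the cited result as a special case of \dui, which is exactly the subsumption the paper asserts informally when it says its precondition generalises earlier mixed preconditions.

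Since condition~1 already gives $\mathbb{P}(T<\infty)=1$, you could instead bypass Theorem~\ref{thm: ost}. On $\{T<\infty\}$ you have $M_{\min(k,T)}\to M_T$ pointwise, and the stopped process satisfies $\mathbb{E}(M_{\min(k,T)})=\mathbb{E}(M_0)$ (resp.~$\leq$). Dominated convergence with the same dominating variable then finishes the proof. That direct argument is more elementary. Your version reuses the paper's general theorem rather than redoing the limit interchange.

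One cosmetic fix: Theorem~\ref{thm: ost} asks for a real-valued $X:\Omega\to\mathbb{R}_+$. So set $X:=0$ on the null set $\{T=\infty\}$ rather than letting it equal $+\infty$ there.
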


    The latter result has been strengthened in \cite{wang2024static}. 
    This stronger version is applied in \cite{piecewise_analysis_prob2026}.

    \begin{theorem}[{\cite[Theorem 4.5]{wang2024static}}]
        Let $(\Omega, \calF, \mathbb{P})$ be a probability space, and let $M = \{M_n\}_{n \in \mathbb{N}}$ be a supermartingale adapted to a filtration $\{\calF_n\}_{n \in \mathbb{N}}$.        
        Let $T$ be a stopping time w.r.t.~$\{\calF_n\}_{n = 0}^{\infty}$. 
        Suppose that there exist positive real numbers $b_1, b_2, c_1, c_2, c_3$ such that $c_2 > c_3$ and 
        \begin{description}
            \item[A1] $\mathbb{P} \left( T > n \right) \leq c_1 \cdot e^{-c_2 \cdot n}$ for sufficiently large $n \in \mathbb{N}$, and 
            \item[A2] for all $n \in \mathbb{N}$, $\left| M_{n+1} - M_n \right| \leq b_1 \cdot n^{b_2} \cdot e^{c_3 \cdot n}$ holds almost surely.
        \end{description}
        Then we have that $\mathbb{E} \left( \left| M_T \right| \right) < \infty$ and $\mathbb{E} \left( M_T \right) \leq \mathbb{E} \left( M_0 \right)$.
    \end{theorem}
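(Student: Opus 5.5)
The plan is to derive this from Theorem~\ref{thm: ost}, by exhibiting an explicit integrable dominating function for the stopped supermartingale, so that the {\dui} precondition holds. Write $D_n := M_{n+1} - M_n$ for the increments. For every $\omega$ and every $k \in \mathbb{N}$, telescoping and the triangle inequality give
\[
    \big| M_{\min(k,T)} \big| \;\leq\; |M_0| + \sum_{n=0}^{\min(k,T)-1} |D_n| \;\leq\; |M_0| + \sum_{n=0}^{\infty} |D_n| \cdot 1_{\{T > n\}} \;=:\; X .
\]
The right-hand side does not depend on $k$, so $\sup_{k\in\mathbb{N}} |M_{\min(k,T)}| \leq X$ holds pointwise. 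This also covers $\omega$ with $T(\omega)=\infty$, although by \textbf{A1} that event has probability zero.

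It remains to show $\mathbb{E}(X) < \infty$. First, $M_0$ is integrable because a supermartingale is integrable by definition. For the series, all terms are non-negative, so Tonelli's theorem lets us exchange expectation and summation. Using \textbf{A2} almost surely, we get
\[
    \mathbb{E}\Big(\sum_{n} |D_n| 1_{\{T>n\}}\Big) \;\leq\; \sum_{n=0}^{\infty} b_1 n^{b_2} e^{c_3 n}\, \mathbb{P}(T > n).
\]
Let $n_0$ be such that \textbf{A1} holds for all $n \geq n_0$. Split the sum at $n_0$:
\begin{itemize}
    \item For the finitely many indices $n < n_0$, bound $\mathbb{P}(T>n) \leq 1$; these terms contribute a finite constant.
    \item For $n \geq n_0$, the terms are at most $b_1 c_1\, n^{b_2} e^{-(c_2 - c_3) n}$.
\end{itemize}
Since $c_2 > c_3$, the second part is a polynomial times a decaying geometric sequence, so it is summable (for instance by the ratio test). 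Hence $\mathbb{E}(X) < \infty$.

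Finally, apply Theorem~\ref{thm: ost} in its supermartingale form. This yields that $M_T$ is integrable, i.e.\ $\mathbb{E}(|M_T|) < \infty$, and that $\mathbb{E}(M_T) \leq \mathbb{E}(M_0)$. Since \textbf{A1} gives $T<\infty$ almost surely, $M_T$ coincides with the limit in Remark~\ref{remark: infinite stopping time}, so no extra care is needed for diverging runs.

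I expect the only delicate step to be the domination bound. The bound in \textbf{A2} is only an almost-sure statement, so the inequality $X \leq |M_0| + \sum_n b_1 n^{b_2} e^{c_3 n} 1_{\{T>n\}}$ should be stated on the full-measure set where \textbf{A2} holds for all $n$ simultaneously. That set has full measure as a countable intersection of full-measure sets. On this set the expectation exchange above is justified by monotone convergence.
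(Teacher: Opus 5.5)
Your proof is correct. The paper gives no proof of this statement: it appears in Appendix~\ref{sec: appendix ost} only as a citation of \cite[Theorem 4.5]{wang2024static}, in the list of known OST variants.

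Your route derives it as a corollary of the paper's own Theorem~\ref{thm: ost}, using the explicit dominating function $X = |M_0| + \sum_{n} |M_{n+1}-M_n|\,1_{\{T>n\}}$. This is the same domination argument classically used for the variant with $\mathbb{E}(T)<\infty$ and bounded increments. It gains two things over the bare citation:
\begin{itemize}
\item It shows concretely that this step-wise-bound precondition is a special case of \dui, which the paper only claims informally for earlier preconditions.
\item It shows that A1 and A2 are used only to make $\sum_n \mathbb{E}\big(|M_{n+1}-M_n|\,1_{\{T>n\}}\big)$ converge. That weaker condition needs no almost-sure bound on the increments. Each term can instead be controlled by H\"older's inequality, in the spirit of Lemma~\ref{lemmaboundrandomsumdependent}. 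So it is compatible with unbounded-support samples, which is exactly the paper's motivation.
\end{itemize}

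Two cosmetic points:
\begin{itemize}
\item Your $X$ may equal $+\infty$ on the null set $\{T=\infty\}$. Redefine it there so that it meets the requirement $X:\Omega\to\mathbb{R}_+$ in \dui.
\item Since A1 gives $T<\infty$ almost surely, you could also finish directly with dominated convergence applied to $M_{\min(k,T)}\to M_T$, bypassing uniform integrability.
\end{itemize}
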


    \clearpage

    \section{Proof of the \dui-precondition}
	\label{sec: background background}

\thmOptionalStopping*
	\begin{proof}
		The stopped-at-$T$ process, $\{M_{n \wedge T}\}_{n \in \mathbb{N}}$, is a (super)martingale. 
		%if $\{M_{n}\}_{n \in \mathbb{N}}$ is a (super)martingale.
		Since $X$ is integrable, the family $\{X\} \subseteq L^1$ is uniformly integrable by \cite[Theorem~6.18(i)]{klenke}.
		It follows (from \cite[Theorem 6.18(iii)]{klenke}) that $\{M_{n \wedge T}\}_{n \in \mathbb{N}}$ is a uniformly integrable (super)martingale.
		By \cite[Theorem 11.7]{klenke} there is an integrable random variable $M_\infty$ with $\lim_{n \to \infty} M_{n \wedge T} = M_\infty$ a.s.~and in $L^1$. 
		Since $M_{n \wedge T}$ converges %point-wise 
		to $M_{T}$ in distribution, we have $M_T = M_\infty$. 
		It then follows from convergence in $L^1$ that
		\(
		\lim_{n \rightarrow \infty} \mathbb{E}(M_{n \wedge T}) = \mathbb{E}(M_{T}).
		\)	
		Thus, by \cite[p.~99]{Williams91}, 
		in case $\{M_{n \wedge T}\}_{n \in \mathbb{N}}$ is a martingale: for all $n \in \mathbb{N}$, 
		$\mathbb{E}(M_{0}) = \mathbb{E}(M_{n \wedge T})$,
		and so $\mathbb{E}(M_{0}) = \mathbb{E}(M_{T})$; in case $\{M_{n \wedge T}\}_{n \in \mathbb{N}}$ is a supermartingale: for all $n \in \mathbb{N}$, 
		$\mathbb{E}(M_{n \wedge T})\leq \mathbb{E}(M_{0})$.
		Limits preserve inequalities, so
		\(
        \mathbb{E}(M_{T})=\lim_{n \rightarrow \infty} \mathbb{E}(M_{n \wedge T}) \leq \lim_{n \rightarrow \infty} \mathbb{E}(M_{0}) =\mathbb{E}(M_{0}).
		\)	
	\end{proof}

	\bigskip{}

	\section{Operational Semantics and the proofs of Section~\ref{sec: program}}
    \label{sec: operational semantics}
    
	\begin{definition}[Operational Semantics]\rm 
		\label{def: operational semantics PTS}%~\\
		%\changed[lo]{
		The {\em operational semantics} of a PTS $\Pi$ is a sequence of random variables $\{L^{[k]}, X^{[k]}\}_{k \in \mathbb{N}}$, where $L^{[k]}: \Omega \rightarrow L$ and $X^{[k]} : \Omega \rightarrow \mathbb{R}^n$ satisfy
		\begin{enumerate}[(i)]
			\item  
			$L^{[0]}(\omega) = l_0$ is the initial location for all $\omega \in \Omega$, 
			\item
			$X^{[0]}(\omega) := \vec{x}_0 \in \mathbb{R}^n$ for all $\omega \in \Omega$, where $\vec{x}_0$ is a vector containing the initial values of the program variables $X = \{x_1, x_2, \cdots, x_n\}$,  
			%is a random variable with distribution $D_0$ as defined in Def.~\ref{def: pts},
			\item 
			Let $\omega \in \Omega$, $k \in \mathbb{N}$ and let $\tau(\omega)$ be the unique transition which is enabled at the configuration $(L^{[k]}(\omega), X^{[k]}(\omega))$.
            Then $L^{[k+1]}(\omega) = \dloc{\tau(\omega)}$ and 
			$X^{[k+1]}(\omega) = f_{\tau(\omega)} (X^{[k]}(\omega), R^{[k+1]}(\omega))$.
		\end{enumerate}
		where $R^{[k]}(\omega)$ are the 
%		vector of 
		random samples which are sampled in the $k$-th step of the computation.
    \end{definition}

	We note that  $\{L^{[k]}, X^{[k]}\}_{k \in \mathbb{N}}$ is adapted to the (natural) filtration $\{\calF_k\}_{k \in \mathbb{N}}$, i.e.~$\calF_k$ is the smallest $\sigma$-algebra such that the random variables $X^{[0]}, R^{[1]}, \cdots, R^{[k]}$ are adapted to $\calF_k$. 
    This is denoted as $\calF_k := \sigma(X^{[0]}, R^{[1]}, \cdots, R^{[k]})$.  
     
    Intuitively, $\calF_k$ contains information on the initial value $\left(l_0, \vec{x}_0 \right)$ and the first $k$ random samples. %  $R^1, R^2, \cdots, R^n$. 
	In other words, $\calF_{k}$ stores the information of the program run up to the $k$-th step of the computation.
	In particular,  
	$\left\lbrace L^{[k]}, X^{[k]} \right\rbrace_{k \in \mathbb{N}}$ being adapted to ${\calF_k}$ states that $\left\lbrace L^{[k]}, X^{[k]} \right\rbrace_{k \in \mathbb{N}}$ is fully determined by the information held by $\calF_{k}$.

	\begin{lemma}
		\label{lemma: pts,measurable}
		%~\\
		Let $\Pi$ be a PTS and let $\{L^{[k]}, X^{[k]}\}_{k \in \mathbb{N}}$ be the operational semantics of $\Pi$ and let $\calF_k = \sigma(X^{[0]}, R^{[1]}, \cdots, R^{[k]})$ be our standard filtration. 
		Then the operational semantics 
        %$\{X^{[k]}, L^{[k]}\}_{k \in \mathbb{N}}$ 
        is $\mathcal{P}(L) \otimes \mathbb{B}(\mathbb{R}^n)$-$\calF_k$-measurable\footnote{
			Here $\mathcal{P}$ denotes the powerset and $\mathbb{B}(\mathbb{R}^n)$ denotes the Borel $\sigma$-algebra on $\mathbb{R}^n$.
		}.
	\end{lemma}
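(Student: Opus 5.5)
We have to prove Lemma~\ref{lemma: pts,measurable}: the pair $(L^{[k]}, X^{[k]})$ is $\mathcal{P}(L) \otimes \mathbb{B}(\mathbb{R}^n)$-$\calF_k$-measurable for every $k$. The plan is induction on $k$, writing $(L^{[k+1]}, X^{[k+1]})$ as a measurable function of $(L^{[k]}, X^{[k]}, R^{[k+1]})$.

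\emph{Base case.} For $k=0$, $L^{[0]} \equiv l_0$ and $X^{[0]} \equiv \vec{x}_0$ are constant. Constants are measurable with respect to any $\sigma$-algebra, in particular $\calF_0 = \sigma(X^{[0]})$.

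\emph{Step function.} Define the one-step map $G : L \times \mathbb{R}^n \times \mathbb{R}^m \to L \times \mathbb{R}^n$ by
\[
G(l,\vec{x},\vec{r}) = \sum_{\tau \in \trans} [l = \sloc{\tau}]\,[\vec{x} \models \phi_\tau]\,\bigl(\dloc{\tau}, f_\tau(\vec{x},\vec{r})\bigr).
\]
This is well defined because of non-demonism: exactly one summand is active, so the ``sum'' is really a case distinction. By Definition~\ref{def: operational semantics PTS}(iii), $(L^{[k+1]},X^{[k+1]}) = G(L^{[k]},X^{[k]},R^{[k+1]})$ pointwise in $\omega$.

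\emph{Measurability of $G$ (the main technical point).} Each guard $\phi_\tau$ is a boolean combination of polynomial inequalities. Each set $\{\vec{x} : p(\vec{x}) \geq 0\}$ (or with $>$) is closed or open, hence Borel, and Borel sets are closed under finite boolean operations. So $\{\vec{x} \models \phi_\tau\}$ is Borel. Since $L$ is finite with the power set $\sigma$-algebra, $\{(l,\vec{x},\vec{r}) : l = \sloc{\tau},\ \vec{x}\models\phi_\tau\}$ is measurable in the product $\sigma$-algebra. The update $f_\tau$ is a polynomial, hence continuous, hence Borel. Therefore $G$ restricted to each of the finitely many measurable pieces of this partition is measurable, and gluing along a finite measurable partition preserves measurability.

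\emph{Inductive step.} By the induction hypothesis, $(L^{[k]},X^{[k]})$ is $\calF_k$-measurable, hence $\calF_{k+1}$-measurable since $\calF_k \subseteq \calF_{k+1}$. By definition of $\calF_{k+1}$, $R^{[k+1]}$ is $\calF_{k+1}$-measurable. A map into a product is measurable with respect to the product $\sigma$-algebra if and only if its components are, so the triple $(L^{[k]},X^{[k]},R^{[k+1]})$ is $\calF_{k+1}$-measurable into $\mathcal{P}(L)\otimes\mathbb{B}(\mathbb{R}^n)\otimes\mathbb{B}(\mathbb{R}^m)$. Here we use $\mathbb{B}(\mathbb{R}^n)\otimes\mathbb{B}(\mathbb{R}^m)=\mathbb{B}(\mathbb{R}^{n+m})$, which holds by second countability. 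Composing with the measurable map $G$ shows that $(L^{[k+1]},X^{[k+1]})$ is $\calF_{k+1}$-measurable, which completes the induction.
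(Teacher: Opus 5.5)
Your proof is correct and follows the paper's route: induction on $k$, where the next configuration is obtained from the current one and the fresh sample by a guard-based transition selection (measurable because the polynomial guards define Borel sets) followed by a continuous polynomial update. Packaging this into one jointly measurable step map $G$, glued over the finite partition $\{l=\sloc{\tau},\ \vec{x}\models\phi_\tau\}$, is a more careful version of the paper's inductive step. The paper infers measurability of $f_{\tau_{k+1}}(X^{[k]},R^{[k+1]})$ from continuity of each $f_\tau$ alone, even though the index $\tau_{k+1}$ is itself random; your gluing argument supplies the step that inference omits.
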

	
	\begin{proof} %~\\
		%As $\Pi$ is non-demonic, there is exactly one transition enabled for each $(x_n, l_n) \in D \times L$.
		%Hence $\tau_{k+1}$ is a $\mathbb{B}(D) \otimes P(L)$-measurable function.
		We prove the claim by induction.
		
		For the base case $k$ = 0:
		Because $\calF_0 = \sigma(X^{[0]})$,  $X^{[0]}$ is $\mathbb{B}(\mathbb{R}^n)$-$\calF_0$-measurable by definition. 
		As a constant function, $L^{[0]}$ is trivially $\mathcal{P}(L)$-$\calF_0$-measurable.
		
		For the inductive case, we have, by the induction hypothesis that 
		\begin{center}
			\begin{tabular}{lll}
				$g_k$: &$\Omega$ &$\rightarrow L \times \mathbb{R}^n$\\
				&$\omega$ &$\mapsto (L^{[k]}(\omega), X^{[k]}(\omega))$
			\end{tabular}
		\end{center}
		is $\mathbb{B}(\mathbb{R}^n) \otimes \mathcal{P}(L)$-$\calF_k$-measurable.
	
		In particular, $\tau_{k+1}$ is uniquely determined by $L^{[k]}, X^{[k]}$ based on polynomial guards on $\mathbb{R}^n$.
		This can be represented by a $\mathcal{P}(L) \otimes \mathbb{B}(\mathbb{R}^n)$-$\mathcal{P}(L)$-mea\-sur\-able selection function $\psi$.
		Hence, the composition $\tau_{k+1} = \psi \circ g_k$ is 
		$\mathcal{P}(L)$-$\calF_k$-measurable. 
        Moreover, 
		$L^{[k+1]}$ is uniquely defined by the transition $\tau_{k+1}$ and $L^{[k]}$. 
        Hence, the random variable $L^{[k+1]}$ is $\mathcal{P}(L)$-$\calF_k$-measurable  
		%So $L^{[k+1]}$ is $\mathcal{P}(L)$-$\calF_{k+1}$-measurable 
        and therefore, as $\calF_k \subseteq \calF_{k + 1}$, it is also $\mathcal{P}(L)$-$\calF_{k+1}$-measurable.
        Furthermore, 
		$X^{[k+1]} = f_{\tau_{k+1}}(X^{[k]}, R^{[k+1]})$ by Definition~\ref{def: pts} and 
		the function $f_{\tau_{k+1}}$ is continuous.
		Thus, this implies that the random variable $X^{[k+1]}$ is
		$\mathbb{B}(\mathbb{R}^{n})$-$\calF_{k+1}$-measurable.
		%Note that $L^{[k+1]} = l_{\tau_{k+1}}$ does only depend on $\tau_{k+1}$.
        
		In conclusion, $(L^{[k+1]}, X^{[k+1]})$ is $\mathcal{P}(L) \otimes \mathbb{B}(\mathbb{R}^{n})$-$\calF_{k+1}$-measurable as claimed. 
		%\qed		
	\end{proof}
	
	\lemmatilde*
	
	\begin{proof} 
		Let $k \in \mathbb{N}$. Note that $\calF_k = \sigma(X^{[0]}, R^{[1]}, R^{[2]}, \cdots, R^{[k]})$.  
        Therefore, by Theorem~\ref{thm: properties conditional expectation},
		\[\begin{array}{cl}
		%\begin{split}
		&\mathbb{E}[h(L^{[k+1]}, X^{[k+1]}, k+1) \mid \calF_k]
		\\
		= & 
		\mathbb{E}\Big[\Big( \sum_{i = 1}^p [(L^{[k]} = \sloc{\tau_i}) \wedge (X^{[k]} \models \phi_i)] \Big) \cdot h(L^{[k+1]}, X^{[k+1]}, k+1) \mid \calF_k \Big]
		\\
		= &
		\mathbb{E}\Big[\Big( \sum_{i = 1}^p [(L^{[k]} = \sloc{\tau_i}) \wedge (X^{[k]} \models \phi_i)] \Big) \cdot h(\dloc{\tau_{i}}, f_{\tau_i}(X^{[k]}, R^{[k+1]}), k+1) \mid \calF_k \Big] 
		\\
		= &
		\sum_{i = 1}^p [(L^{[k]} = \sloc{\tau_i}) \wedge (X^{[k]} \models \phi_i)]  \cdot
		\mathbb{E}\Big[  h(\dloc{\tau_{i}}, f_{\tau_i}(X^{[k]}, R^{[k+1]}), k+1) \mid \calF_k \Big] 
		\\
		&\hspace*{12pt} \mathrm{ \ as \ }
		X^{[k]}, L^{[k]} \mathrm{\ are \ } \calF_{k}\mathrm{-measurable}
		%\end{split}
		\end{array}\]
		Moreover, by assumption $h_{\dloc{\tau_i}}: \mathbb{R}^n \times \mathbb{N} \rightarrow \mathbb{R}$ defined by $h_{\dloc{\tau_i}} \left( \vec{x}, k \right) := h(\dloc{\tau_i}, \vec{x}, k)$ is a polynomial. 
        The random samples are independent of $\calF_k$ and the random variable $X^{[k]}$ is $\calF_k$-measurable. 
        Consequently, by \cite[9.10 {(a)} and {(b)}, page 91 and 92]{Williams91} we can compute that 
		\[\begin{array}{ll}
		%\begin{split}
		 \mathbb{E}\Big[  h(\dloc{\tau_{i}}, f_{\tau_i}(X^{[k]}, R^{[k+1]}), k+1) \mid \calF_k \Big]
%		\\
		&= 
		\mathbb{E}\Big[ 
		h_{\dloc{\tau_i}}(f_{\tau_i}(X^{[k]}, R^{[k+1]}),  k+1) \mid \calF_k \Big]
		\\
		&= 
		\mathbb{E}_R\Big[  h_{\dloc{\tau_i}}(f_{\tau_i}(X^{[k]}, R),  k+1) \Big],
		\end{array}
		\]
		where $X^{[k]}$ is treated like a constant (by virtue of being $\calF_k$-measurable) and the expectation is taken over the distributions of $R^{[k+1]} \sim R^{[1]}$. 
		This proves \eqref{item: lemma tilde i} by the definition of the pre-expectation and by Theorem~\ref{thm: properties conditional expectation}.	
		
		As for part \eqref{item: lemma tilde ii}, $\mathbb{E}[h(L^{[k+1]}, X^{[k+1]}, k+1) \mid \calF_k] = \mathrm{pre}\mathbb{E}(h)(L^{[k]}, X^{[k]}, k+1) \leq h(L^{[k]}, X^{[k]}, k)$ holds by part \eqref{item: lemma tilde i}. 
        This concludes the proof. 
		%\qed
	\end{proof}

    \bigskip{}
    
 	\section{Proofs of Section~\ref{sec: linear}}
    \label{sec: proofs linear}
 	
	The norm $\| \ \|_2$ on $\mathbb{R}^n$ is defined by $\|x\|_2 = \sqrt{\sum_{i = 1}^n x_i^2}$.
	This vector norm induces a norm on  matrices $A \in \mathbb{R}^{n \times n}$ 
	by $\|A\|_2 := \sup_{x \neq 0} \frac{\|A x\|_2}{\|x\|_2}$.
	In particular, this matrix norm is sub-multiplicative, i.e.~$\| A B \|_2\leq \| A \|_2 \cdot \|B\|_2$ for two matrices $A, B \in \mathbb{R}^{n \times n}$.

    ~
    
	For this section we fix a linear PTS as defined in Definition~\ref{def:triangular-linear-update-loop}. 

    %\thmdominatingfunctionlinearpts*
    
    \begin{theorem}[Almost Sure Bounds]
        \label{thm:dominating_function_for_alpha_1:appendix}
        %Define the random variable $Z_i$, and 
        Fix a program variable $x_i$. Let  $A_i, B_i$ be the update matrices for the variables $x_i$ depends on. Let $A_i=P_iJ_iP_i^{-1}$ be a Jordan normal form of $A_i$, and $m_i$ be the size of the largest Jordan block of $J_i$:

        \begin{align*}
            Z_i:= \delta_i(T^{m_i-1}+\gamma_i)\|\vec{x}^{[0]}\|_2+\delta_i(T^{m_i-1}+\gamma_i)\beta_i\sum_{l=1}^T \|\vec{r}^{[l]}\|_2
        \end{align*}
        with  $\delta_i := a^{(m_i-1)}_i m_i \|P_i\|_2\|P_i^{-1}\|_2$, $\beta_i=\|B_i\|_2$, $\gamma_i:=(2(m_i-1))^{m_i-1}$ and $a_i$ being the largest absolute value within $J_i$ (but at least $1$).
        Then we have for all $k \in \mathbb{N}$ almost surely,
        \begin{align*}
            0 \leq \left| x_i^{[\min(k, T)]} \right| \leq Z_i,
        \end{align*}
        given that the maximum absolute value of an eigenvalue of $A$ is at most $1$. Notably, $\beta_i=0$ when $x_i$ does not depend on any random variable.
    \end{theorem}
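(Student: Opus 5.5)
We unroll the linear recurrence restricted to the variables that $x_i$ depends on. Since only indices in $I_i$ influence $x_i$, write $\vec{y}^{[k]}$ for the subvector of $\vec{x}^{[k]}$ with indices in $I_i$; then $\vec{y}^{[k+1]} = A_i\vec{y}^{[k]} + B_i\vec{r}^{[k+1]}$. Induction on $k$ gives the closed form
\[
\vec{y}^{[k]} = A_i^k \vec{y}^{[0]} + \sum_{l=1}^{k} A_i^{k-l} B_i \vec{r}^{[l]}.
\]
Since $|x_i^{[k]}| \le \|\vec{y}^{[k]}\|_2$ and $\|\vec{y}^{[0]}\|_2\le\|\vec{x}^{[0]}\|_2$, sub-multiplicativity of $\|\cdot\|_2$ and the triangle inequality yield
\[
|x_i^{[k]}| \le \Big(\max_{0\le j\le k}\|A_i^{j}\|_2\Big)\Big(\|\vec{x}^{[0]}\|_2 + \beta_i\sum_{l=1}^{k}\|\vec{r}^{[l]}\|_2\Big).
\]

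The key step is to bound $\|A_i^j\|_2$ polynomially in $j$. Using $A_i^j = P_i J_i^j P_i^{-1}$, we get $\|A_i^j\|_2 \le \|P_i\|_2\|P_i^{-1}\|_2\|J_i^j\|_2$. The eigenvalues of $A_i$ are among those of $A$ (after a permutation, $A_i$ is a diagonal block of a block-triangular form, since $I_i$ is closed under dependence), so $|\lambda|\le 1$. For a Jordan block $\lambda I + N$ of size $s\le m_i$, expand $(\lambda I+N)^j=\sum_{t=0}^{s-1}\binom{j}{t}\lambda^{j-t}N^t$. Each entry is then bounded by $\sum_{t<m_i}\binom{j}{t}$, and the operator norm of the block is bounded by this sum, since $\|N^t\|_2\le1$. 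Using $\binom{j}{t}\le j^t$ together with a crude estimate on $\sum_{t=0}^{m_i-1} j^t$, we aim for
\[
\|J_i^j\|_2 \le a_i^{m_i-1} m_i\,(j^{m_i-1}+\gamma_i),
\]
where $a_i\ge1$ absorbs any coefficient bound, and the constant $\gamma_i=(2(m_i-1))^{m_i-1}$ handles small $j$. The cleanest route is the case split $j\ge 2(m_i-1)$ versus $j<2(m_i-1)$: in the first case $j^t\le j^{m_i-1}$, and in the second $j^t\le \gamma_i$. This gives $\|A_i^j\|_2\le \delta_i(j^{m_i-1}+\gamma_i)$, which is monotone in $j$.

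Finally, we substitute $k\mapsto\min(k,T)$. Both factors of the bound are nondecreasing in $k$: the sum has nonnegative terms and $j^{m_i-1}+\gamma_i$ is increasing. Hence for every $k$ and every $\omega$ with $T(\omega)<\infty$, replacing $\min(k,T)$ by $T$ only increases the bound, giving $|x_i^{[\min(k,T)]}|\le Z_i$. On $\{T=\infty\}$, $Z_i=\infty$ and the inequality holds trivially. Nonnegativity is immediate, and $\beta_i=0$ when $B_i=0$.

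The main obstacle is the Jordan-power estimate with exactly the stated constants, in particular choosing $\gamma_i$ so that the small-$j$ regime is covered. Verifying that the eigenvalue hypothesis transfers from $A$ to $A_i$ also needs care.
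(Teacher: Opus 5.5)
Your proof is correct and has the same skeleton as the paper's: unroll the recurrence on the dependency-closed subvector, conjugate to Jordan form, use sub-multiplicativity, bound $\|J_i^j\|_2$ polynomially, and replace $\min(k,T)$ by $T$ via monotonicity. Where you differ is the key Jordan-power estimate.

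The paper dominates $J_i$ entrywise by $I+a_iN$ and bounds $\|(I+a_iN)^n\|_2$ by the Frobenius norm. It then uses monotonicity of binomial coefficients, $\binom{n}{j}\le\binom{\max(n,2(m-1))}{m-1}$ for $j\le m-1$. This is where the factors $a_i^{m_i-1}$, $m_i$ and the constant $\gamma_i=(2(m_i-1))^{m_i-1}$ come from.

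You instead expand $(\lambda I+N)^j$ binomially and apply the triangle inequality with $|\lambda|\le 1$ and $\|N^t\|_2\le 1$. This is shorter and sharper: it gives $\|J_i^j\|_2\le\sum_{t<m_i}\binom{j}{t}\le m_i\max(1,j)^{m_i-1}$ directly. So the stated constants are pure slack for you, and your case split at $j=2(m_i-1)$ is not needed: for $j\ge 1$ you already have $j^t\le j^{m_i-1}$, and for $j=0$ the sum is $1$.

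Your eigenvalue transfer is also more robust than the paper's. The paper appeals to $A$ being upper triangular, which is not part of Definition~\ref{def:triangular-linear-update-loop}. Your observation that $I_i$ is closed under dependence, so that $A$ is block triangular with $A_i$ as a diagonal block, needs no such assumption.

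One small slip: on $\{T=\infty\}$ it is not true in general that $Z_i=\infty$. For example, when $m_i=1$ and $\beta_i=0$, $Z_i$ is a finite constant. The inequality there still holds, but by the same monotonicity argument, reading $Z_i$ as the nondecreasing limit of the finite-$k$ bounds, not trivially.
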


    \begin{proof}
        By definition the program variables satisfy
        \[x^{[k]}=A^kx^{[0]}+\sum_{l=1}^{k} A^{k-l}(Br^{[l]}).\]
        We denote now with $y^{[k]}$ the sub-vector of $x^{[k]}$, which consists of the variables on which $x_i$ depends, 
        and obtain
        \[y^{[k]}=A_i^ky^{[0]}+\sum_{l=1}^{k} A_i^{k-l}(B_ir^{[l]}).\]
        Using the Jordan normal form $A_i=P_iJ_iP_i^{-1}$, this becomes
        \[y^{[k]}=P_iJ_i^kP_i^{-1}y^{[0]}+\sum_{l=1}^k P_iJ_i^{k-l}P_i^{-1}(B_ir^{[l]}).\]
        Notice that since $x_i^{[k]}$ is an element of $y^{[k]}$, and $y^{[0]}$ is a subvector of $x^{[0]}$, it holds that
        \begin{align*}
        |x_i^{[k]}|=\|x_i^{[k]}\|_2&\leq \|y^{[k]}\|_2=\|P_iJ_i^kP_i^{-1}y^{[0]}+\sum_{l=1}^k P_iJ_i^{k-l}P_i^{-1}(B_ir^{[l]})\|_2\\
        &\leq\|P_i\|_2\|J_i^k\|_2\|P_i^{-1}\|_2\|y^{[0]}\|_2+\sum_{l=1}^k \|P_i\|_2\|J_i^{k-l}\|_2\|P_i^{-1}\|_2\|B_ir^{[l]}\|_2\\
        &\leq\|P_i\|_2\|J_i^k\|_2\|P_i^{-1}\|_2\|x^{[0]}\|_2+\sum_{l=1}^k \|P_i\|_2\|J_i^{k-l}\|_2\|P_i^{-1}\|_2\|B_ir^{[l]}\|_2.
        \end{align*}
    Notice also that since $A$ is an upper triangular matrix with eigenvalues having absolute value at most $1$, the same applies to $A_i$ as it is obtained by removing rows and columns corresponding to variables not in the recurrence of $x_i$.
    Due to the restriction on the eigenvalues, we can rewrite $\|J_i^n\|_2= \|(I+(J_i-I))^n\|_2\leq \|(I+a_iN)^n\|_2$, where $a_i=\max(1,\sup_{j,o}|(J_i)_{j,o}|)$, and $N$ is an indicator matrix, where $N_{j,o}$ is $1$, when $(J_i)_{j,o}$ belongs to any Jordan block (excluding the diagonal) and $0$ otherwise. The inequality follows from $(I+a_iN)$ being element-wise larger in absolute value than $J_i$.

    Computing now the $n$-th power, we can look at the Jordan blocks separately, and for a Jordan block $(J_i)_m$ (with $N_m$ being the strictly triangular upper right one-matrix) with size $m_i$ (in the following $m$ for brevity) we obtain:
    \begin{align*}
        \|(J_i)_m^n\|_2\leq\|(I+a_iN_m)^n\|_2=\left\|\begin{pmatrix}
            1 & {n\choose{1}}a_i & {n\choose 2}a_i^2 &\dots & {n\choose{m-1}}a_i^{m-1}\\ 
            0& 1 & {n\choose{1}}a_i & \dots & {n\choose{m-2}}a_i^{m-2}\\ 
            \vdots&\vdots &\vdots & \ddots &\vdots\\
            0&0&0&1&{n\choose 1} a_i\\
            0&0&0&0& 1          
        \end{pmatrix}\right\|_2
    \end{align*}
    The Frobenius-norm~\cite{Horn_Johnson_1985_frobenius} forms an upper bound for the Euclidean norm, in particular for every matrix $\|M\|_2^2\leq \|M\|_F^2 = \sum_{i,j}M_{i,j}^2$. Therefore:
    \begin{align*}
        \|(J_i)_m^n\|_2^2&\leq \sum_{j=0}^{m-1}(m-j) \left({n\choose j}a_i^{j}\right)^2 
        \leq a_i^{2(m-1)}m \sum_{j=0}^{m-1} {n\choose j}^2\\
        &\leq a_i^{2(m-1)}m\left( \sum_{j=0}^{m-1} {\max(n,2(m-1))\choose m-1}^2 \right) 
        = a_i^{2(m-1)}m^2 {\max(n,2(m-1))\choose m-1}^2.
    \end{align*}
    Using the fact that ${n\choose{m}}\leq n^{m}$, we can conclude that 
    \begin{align*}
        \|(J_i)_m^n\|_2^2&\leq a_i^{2(m-1)}m^2 {\max(n,2(m-1))\choose m-1}^2 
        &\leq a_i^{2(m-1)}m^2 \left({n^{m-1}}+(2(m-1))^{m-1}\right)^2.
    \end{align*}
    Since the Euclidean norm of a matrix $M$ is defined as the square-root of the largest eigenvalue of $M^TM$, and $J_i^TJ_i$ is block-diagonal, its largest eigenvalue is the maximum of the eigenvalues of the blocks~\cite{george2024course_eigenvalues_block_diagonal}. Since $m$ is the dimension of the largest block, we conclude that:
    \[\|(J_i)^n\|_2\leq a_i^{(m-1)}m \left({n^{m-1}}+(2(m-1))^{m-1}\right).\]
    And with  $\delta_i := a^{(m-1)}_im \|P_i\|_2\|P_i^{-1}\|_2$, $\beta_i=\|B_i\|_2$ and $\gamma_i:=(2(m-1))^{m-1}$:
    \begin{align*}
        |x_i^{[k]}|&\leq\|P_i\|_2\|J_i^k\|_2\|P_i^{-1}\|_2\|x^{[0]}\|_2+\sum_{l=1}^k \|P_i\|_2\|J_i^{k-l}\|_2\|P_i^{-1}\|_2\|B_ir^{[l]}\|_2\\
        &\leq \delta_i(k^{m-1}+\gamma_i)\|x^{[0]}\|_2+\sum_{l=1}^k \delta_i((k-l)^{m-1}+\gamma_i)\beta_i\|r^{[l]}\|_2
    \end{align*}
    Applying this now to $\min(k,T)$, we get that,
    \begin{align*}
        |x_i^{[\min(k,T)]}|&\leq \delta_i(T^{m-1}+\gamma_i)\|x^{[0]}\|_2+\sum_{l=1}^T \delta_i((T-l)^{m-1}+\gamma_i)\beta_i\|r^{[l]}\|_2\\
        &\leq \delta_i(T^{m-1}+\gamma_i)\|x^{[0]}\|_2+\delta_i(T^{m-1}+\gamma_i)\beta_i\sum_{l=1}^T \|r^{[l]}\|_2.
    \end{align*}
    \end{proof}

    \theoremuniformintegrabilitylts*
    
    \begin{proof}
        We will first show that $(x^\alpha)^{[\min(k,T)]}$ is uniformly integrable.
            We do so by leveraging Theorem~\ref{thm:dominating_function_for_alpha_1}. W.l.o.g.~we assume $T\geq 1$:
            \begin{align*}&(x^\alpha)^{[\min(k,T)]}=\prod_{\alpha_i\in\alpha} \left(x_i^{[\min(k,T)]}\right)^{\alpha_i}\leq \prod_{\alpha_i\in\alpha}Z_i^{\alpha_i}\\
            &\leq \prod_{\alpha_i\in\alpha}\left(\delta_i(T^{m_i-1}+\gamma_i)\|x^{[0]}\|_2+\delta_i(T^{m_i-1}+\gamma_i)\beta_i\sum_{l=1}^T \|r^{[l]}\|_2\right)^{\alpha_i}\\
            &\leq \left( \prod_{\alpha_i\in\alpha}\delta_i^{\alpha_i}\right) \prod_{\alpha_i\in\alpha}\left((T^{m_i-1}+\gamma_i)\|x^{[0]}\|_2+(T^{m_i-1}+\gamma_i)\beta_i\sum_{l=1}^T \|r^{[l]}\|_2\right)^{\alpha_i}\\
            &\leq \left( \prod_{\alpha_i\in\alpha}\delta_i^{\alpha_i}\left(T^{m_i-1}\right)^{\alpha_i}\right)\prod_{\alpha_i\in\alpha}\left((1+\gamma_i)\|x^{[0]}\|_2+(1+\gamma_i)\beta_i\sum_{l=1}^T \|r^{[l]}\|_2\right)^{\alpha_i}\\
            &\leq \left( \prod_{\alpha_i\in\alpha}\delta_i^{\alpha_i}\left(T^{m_i-1}\right)^{\alpha_i}\right)\prod_{\alpha_i\in\alpha}2^{\alpha_i}\left(\left((1+\gamma_i)\|x^{[0]}\|_2\right)^{\alpha_i}+\left((1+\gamma_i)\beta_i\sum_{l=1}^T \|r^{[l]}\|_2\right)^{\alpha_i}\right)\\  
            &\leq \left( \prod_{\alpha_i\in\alpha}\delta_i^{\alpha_i}\left(T^{m_i-1}\right)^{\alpha_i}\right)\prod_{\alpha_i\in\alpha}(2+2\gamma_i)^{\alpha_i}\left(\left(\|x^{[0]}\|_2\right)^{\alpha_i}+\left(\beta_i\sum_{l=1}^T \|r^{[l]}\|_2\right)^{\alpha_i}\right)\\ 
            &\leq \left( \prod_{\alpha_i\in\alpha}\delta_i^{\alpha_i}(2+2\gamma_i)^{\alpha_i}\right)\prod_{\alpha_i\in\alpha}\left(\left(T^{m_i-1}\|x^{[0]}\|_2\right)^{\alpha_i}+\left(\beta_iT^{m_i-1}\sum_{l=1}^T \|r^{[l]}\|_2\right)^{\alpha_i}\right)\\
            &\leq \left( \prod_{\alpha_i\in\alpha}\delta_i^{\alpha_i}(2+2\gamma_i)^{\alpha_i}\max(1,\|x^{[0]}\|_2^{\alpha_i})\right)\prod_{\alpha_i\in\alpha}\left(\left(T^{m_i-1}\right)^{\alpha_i}+\left(\beta_iT^{m_i-1}\sum_{l=1}^T \|r^{[l]}\|_2\right)^{\alpha_i}\right)\\    
            \end{align*}
            Focusing now on the right product (the left one is just a number), and let $\hat{\beta}=\max_j\beta_j$. Then we can upper bound every term of the resulting expression:
    \begin{multline*}
        \prod_{\alpha_i\in\alpha}\left( \left(T^{m_i-1}\right)^{\alpha_i} +\left(\beta_iT^{m_i-1}\sum_{l=1}^T \|r^{[l]}\|_2\right)^{\alpha_i}\right)\\\leq 2^{|\alpha|}\left(T^{\sum_{\alpha_i\in\alpha}\alpha_i(m_i-1)}+T^{\sum_{\alpha_i\in\alpha}\alpha_i(m_i-1)}\left(\hat{\beta}\sum_{l=1}^T \|r^{[l]}\|_2\right)^{\sum_{\alpha_i\in\alpha}\alpha_i\mathbf{1}_{\{\beta_i>0\}}}\right).
    \end{multline*}
    Let now $N_\alpha = \sum\alpha_i(m_i-1)$ and $N_\beta=\sum_{\alpha_i\in\alpha}\alpha_i\mathbf{1}_{\{\beta_i>0\}}$. Notice that $N=N_\alpha+N_\beta$.

    Then by Theorem~\ref{thm: ost}, the sequence $\{(x^\alpha)^{[\min(k,T)]}\}$ is uniformly integrable, when $\mathbb{E}(T^{N_\alpha})$ is finite, and when $\mathbb{E}\left(T^{N_\alpha} \left(\sum_{l=1}^T \|r^{[l]}\|_2\right)^{N_\beta}\right)$ is finite. 
    This holds in fact by 
    Lemma~\ref{lemmaboundrandomsumindependent} or 
    Lemma~\ref{lemmaboundrandomsumdependent} 
    and our assumptions. 
    \end{proof}

    \begin{restatable}{lemma}{lemmaboundrandomsumindependent} 
    \label{lemmaboundrandomsumindependent}
        Let $N, M \in \mathbb{N}$. 
        Assume $\mathbb{E} \left( T^{N+M} \right) < \infty$ and that the random samples $r_j$ are stochastically independent of the runtime $T$ and $\mathbb{E} \left( \left|r_j \right|^{N} \right) < \infty$ for all $j \in \{1, 2, \cdots, m\}$. 
        Then 
        \[
            \mathbb{E} \left( T^M \left( \sum_{l = 1}^T \| r^{[l]} \|_2 \right)^N \right) < \infty.
        \]
    \end{restatable}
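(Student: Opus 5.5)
We will bound the random sum by a power-mean (Jensen/Hölder) inequality, condition on $T$, and exploit the independence of the samples from $T$. First, for every realisation with $T\geq 1$, the convexity of $t\mapsto t^N$ gives
\[
\Big(\sum_{l=1}^T \|r^{[l]}\|_2\Big)^N \leq T^{N-1}\sum_{l=1}^T \|r^{[l]}\|_2^N ,
\]
and for $T=0$ both sides vanish (for $N\geq 1$; the case $N=0$ is trivial since then the expression is $T^M$ and $\mathbb{E}(T^M)<\infty$ by assumption). Hence it suffices to show
\[
\mathbb{E}\Big(T^{M+N-1}\sum_{l=1}^T \|r^{[l]}\|_2^N\Big)<\infty .
\]

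Next we control the moments of a single sample vector. Since $\|r\|_2\leq \sum_{j=1}^m |r_j|$, the same convexity bound gives $\|r^{[l]}\|_2^N\leq m^{N-1}\sum_{j=1}^m |r^{[l]}_j|^N$. As the samples are identically distributed over time, $C:=\mathbb{E}(\|r^{[l]}\|_2^N)\leq m^{N-1}\sum_j\mathbb{E}(|R_j|^N)<\infty$ is independent of $l$.

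Then we rewrite the random sum as $\sum_{l=1}^\infty 1_{\{T\geq l\}}\|r^{[l]}\|_2^N$. By Tonelli (all terms are nonnegative),
\[
\mathbb{E}\Big(T^{M+N-1}\sum_{l=1}^T \|r^{[l]}\|_2^N\Big)=\sum_{l=1}^\infty \mathbb{E}\big(T^{M+N-1}1_{\{T\geq l\}}\,\|r^{[l]}\|_2^N\big).
\]
Using the hypothesis that the samples are independent of $T$, each summand factorises as $\mathbb{E}(T^{M+N-1}1_{\{T\geq l\}})\cdot C$. Summing over $l$ yields $C\,\mathbb{E}(T^{M+N-1}\cdot T)=C\,\mathbb{E}(T^{M+N})<\infty$. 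On $\{T=\infty\}$ there is no issue: $\mathbb{E}(T^{N+M})<\infty$ forces $\mathbb{P}(T=\infty)=0$ whenever $N+M\geq 1$, and the case $N=M=0$ is trivial.

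The main obstacle is the factorisation step. Read literally, $T$ is defined through the program variables, which depend on the samples, so "independent of $T$" must be interpreted appropriately. The lemma assumes this independence outright, and we will use precisely the joint independence of $r^{[l]}$ from $T$ for each fixed $l$ (this is the situation of Theorem~\ref{thm:moment-monomials}(i), where the guard depends only on samples unrelated to those entering the sum). If only independence of $r^{[l]}$ from $\calF_{l-1}$ were available, we would instead need $\{T\geq l\}\in\calF_{l-1}$ together with a bound on $T^{M+N-1}$. That route does not factorise, and this is why the dependent case, Lemma~\ref{lemmaboundrandomsumdependent}, requires one extra moment.
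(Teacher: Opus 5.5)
Your proof is correct and follows the paper's argument. You use the same power-mean bound $\bigl(\sum_{l=1}^T \|r^{[l]}\|_2\bigr)^N \leq T^{N-1}\sum_{l=1}^T \|r^{[l]}\|_2^N$, the same moment bound $\mathbb{E}(\|r^{[1]}\|_2^N)\leq m^{N-1}\sum_j \mathbb{E}(|r_j|^N)$, and the same independence factorisation, arriving at $C\,\mathbb{E}(T^{N+M})$. The only differences are that you sum over $l$ with $1_{\{T\geq l\}}$ where the paper sums over $k$ with $1_{\{T=k\}}$ (a Tonelli reordering of the same computation), and that you also handle the edge cases $N=0$ and $\{T=\infty\}$, which the paper passes over.
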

    \begin{proof}
        First, by the H\"older inequality on vectors, see e.g.~\cite[Theorem 10.23]{AnalyticInequalities}, for all $k \in \mathbb{N}$,
        \begin{align}
        \label{eq: Hoelder trick}
        \begin{split}
            \sum_{l = 1}^k \|r^{[l]} \|_2 
            &\leq 
            \left( \sum_{l = 1}^k \|r^{[l]} \|_2^N \right)^{1/N} \left( \sum_{l = 1}^k 1 \right)^{(N - 1)/N} 
            %\\
            %&
            = 
            k^{(N - 1)/N} 
            \left( \sum_{l = 1}^k \|r^{[l]} \|_2^N \right)^{1/N}.
            \end{split}
        \end{align}
        This proves that for all $k \in \mathbb{N}$, 
        \begin{align*}
            \left( \sum_{l = 1}^k \|r^{[l]} \|_2 \right)^N 
            \leq k^{N - 1} \sum_{l = 1}^k \|r^{[l]} \|_2^N. 
        \end{align*}
        In particular, this implies 
        \begin{align*}
            &\mathbb{E} \left(T^M \left( \sum_{l = 1}^T \|r^{[l]} \|_2 \right)^N \right) 
            = 
            \mathbb{E} \left( \sum_{k = 1}^{\infty} 1_{\{T = k\}} \cdot k^M \left( \sum_{l = 1}^k \|r^{[l]} \|_2 \right)^N \right) 
            \\
            \leq \ 
            &\mathbb{E} \left( \sum_{k = 1}^{\infty} 1_{\{T = k\}} \cdot k^{M + N - 1}  \sum_{l = 1}^k \|r^{[l]} \|_2^N \right) 
            = 
            \sum_{k = 1}^{\infty} k^{M + N - 1} \sum_{l = 1}^k \mathbb{E} \left( 1_{\{T = k\}} \cdot \|r^{[l]} \|_2^N \right).
        \end{align*}
        By assumption the random variables $1_{\{T = k\}}$ and $\|r^{[l]} \|_2^N$ are stochastically independent for all $k, l \in \mathbb{N}$. 
        Hence, by e.g.~\cite[Theorem 5.4]{klenke}, for all $k, l \in \mathbb{N}$, 
        \begin{align*}
            \mathbb{E} \left( 1_{\{T = k\}} \cdot \|r^{[l]} \|_2^N \right)
            =
            \mathbb{E} \left(1_{\{T = k\}} \right) \cdot
            \mathbb{E} \left( \|r^{[l]} \|_2^N \right)
            =
            \mathbb{P} \left( T = k \right) \cdot 
            \mathbb{E} \left( \|r^{[1]} \|_2^N \right),
        \end{align*}
        where for the last equality we have used that the random samples $\left( r^{[l]} \right)_{l \in \mathbb{N}}$ are i.i.d.. Therefore, we have proven that 
        \begin{align*}
            &\mathbb{E} \left(T^M \left( \sum_{l = 1}^T \|r^{[l]} \|_2 \right)^N \right) 
            \leq 
            \mathbb{E} \left( \|r^{[1]} \|_2^N \right)
            \sum_{k = 1}^{\infty} k^{M + N - 1} \sum_{l = 1}^k \mathbb{P} \left( T = k \right)
            \\
            &= 
            \mathbb{E} \left( \|r^{[1]} \|_2^N \right)
            \sum_{k = 1}^{\infty} k^{M + N} \cdot \mathbb{P} \left( T = k \right) 
            = 
            \mathbb{E} \left( \|r^{[1]} \|_2^N \right) \cdot \mathbb{E} \left( T^{N+M} \right).
        \end{align*}
        Now we have $\mathbb{E} \left( T^{N+M} \right) < \infty$ and 
        \begin{align*}
             \mathbb{E} \left( \|r^{[1]} \|_2^N \right) 
             &=\mathbb{E} \left( \left( \sum_{j = 1}^m |r_j^{[1]}|^2 \right)^{N/2} \right) 
             \leq 
             \mathbb{E} \left( \left( \sum_{j = 1}^m |r_j^{[1]}| \right)^{N} \right) 
             %\\ 
             %&
             \leq m^{N - 1} \sum_{j = 1}^m \mathbb{E} \left( |r_j^{[1]}|^N \right) < \infty,  
        \end{align*}
        where we have used a similar argument as \eqref{eq: Hoelder trick} and the assumption. 

        This concludes the proof. 
    \end{proof}

    \begin{restatable}{lemma}{lemmaboundrandomsumdependent}
    \label{lemmaboundrandomsumdependent}
                Let $N, M \in \mathbb{N}$. 
                Suppose $\mathbb{E} \left( T^{N+M+1} \right) < \infty$ and that the random samples $r_j$ satisfy for all $j \in \{1, 2, \cdots, m\}$, $\mathbb{E} \left( \left|r_j \right|^{\alpha} \right) < \infty$  
                where 
                \[
                    \alpha = \left\lceil \frac{N(N + M + 1)^2 + N}{M + N} \right\rceil.
                \]
        Then 
        \[
            \mathbb{E} \left(T^M \left( \sum_{l = 1}^T \|r^{[l]} \|_2 \right)^N \right) < \infty.
        \]
    \end{restatable}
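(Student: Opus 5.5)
We have to prove Lemma~\ref{lemmaboundrandomsumdependent}: $\mathbb{E}(T^M S_T^N)<\infty$, where $S_k := \sum_{l=1}^k \|r^{[l]}\|_2$ and $T$ may now depend on the samples. The plan is to reduce to a sum over the events $\{T = k\}$, as in the proof of Lemma~\ref{lemmaboundrandomsumindependent}. Without independence, each term $\mathbb{E}(1_{\{T=k\}}\|r^{[l]}\|_2^N)$ no longer factorises, so we decouple it with H\"older's inequality instead.

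First, exactly as in \eqref{eq: Hoelder trick}, we have $S_k^N \le k^{N-1}\sum_{l=1}^k \|r^{[l]}\|_2^N$. Hence
\[
\mathbb{E}\left(T^M S_T^N\right) \le \sum_{k=1}^\infty k^{M+N-1}\sum_{l=1}^k \mathbb{E}\left(1_{\{T=k\}}\|r^{[l]}\|_2^N\right).
\]
For each term we apply H\"older with exponents $p,q$ satisfying $1/p+1/q=1$:
\[
\mathbb{E}\left(1_{\{T=k\}}\|r^{[l]}\|_2^N\right) \le \mathbb{P}(T=k)^{1/q}\,\mathbb{E}\left(\|r^{[1]}\|_2^{Np}\right)^{1/p}.
\]
The second factor is a finite constant $C_p$ whenever $\mathbb{E}(|r_j|^{Np})<\infty$ for all $j$; this follows from the same argument used at the end of the previous lemma, together with the fact that the samples are identically distributed. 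Summing over $l$ leaves us to bound $C_p\sum_k k^{M+N}\mathbb{P}(T=k)^{1/q}$.

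The remaining task, which is the main obstacle, is to show that $\sum_k k^{M+N}\mathbb{P}(T=k)^{1/q}$ is finite using only $\mathbb{E}(T^{K})<\infty$ with $K:=N+M+1$. By Markov's inequality, $\mathbb{P}(T=k)\le \mathbb{P}(T\ge k)\le \mathbb{E}(T^K)k^{-K}$. The sum is therefore dominated by $\mathbb{E}(T^K)^{1/q}\sum_k k^{M+N-K/q}$, which converges if and only if $K/q > M+N+1 = K$. That can never hold for $q \ge 1$, so this crude bound fails. I would therefore refine it by splitting $\mathbb{P}(T=k)^{1/q} = \mathbb{P}(T=k)\cdot\mathbb{P}(T=k)^{1/q-1}$, or by weighting the Hölder step with powers of $k$. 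Concretely, I would write $k^{M+N}1_{\{T=k\}} = T^{M+N}1_{\{T=k\}}$ and apply H\"older to the whole sum at once:
\[
\mathbb{E}\left(T^{M+N-1}\sum_{l\le T}\|r^{[l]}\|_2^N\right).
\]
The idea is to bound $\sum_{l\le T} \|r^{[l]}\|_2^N\le \sum_{l=1}^\infty 1_{\{T\ge l\}}\|r^{[l]}\|_2^N$ and, for each $l$, apply H\"older to $T^{M+N-1}1_{\{T\ge l\}}\cdot\|r^{[l]}\|_2^N$. With $q$ close to $1$, namely $q = K/(K-\varepsilon)$ for a suitable $\varepsilon$ that yields $p$ with $Np\le\alpha$, the factor $\mathbb{E}(T^{(M+N-1)q}1_{\{T\ge l\}})^{1/q}$ is at most $\mathbb{E}(T^K)^{1/q}\, l^{-((K-(M+N-1)q)/q)}$. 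This follows from $T^{(M+N-1)q}1_{\{T\ge l\}}\le T^K l^{(M+N-1)q-K}$. The resulting exponent of $l$ is $-(K/q-(M+N-1))$, which is close to $-2$ when $q$ is near $1$, so the sum over $l$ converges. I would choose $q$ so that this exponent is below $-1$, which is satisfied when $q< K/(M+N)$. This yields $p = q/(q-1) > K$. I would then verify that the stated $\alpha=\lceil (N K^2+N)/(M+N)\rceil$ dominates the required moment order $Np$ for an admissible choice of $q$; the specific form of $\alpha$ suggests that the authors took a choice of this kind. Checking this arithmetic is the step I expect to need the most care, and if the exponents do not match I would adjust how the power of $T$ is distributed between the two H\"older factors.
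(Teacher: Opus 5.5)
Your proposal is correct, but after the common first step it takes a different route from the paper. The paper keeps the decomposition over the events $\{T=k\}$. It applies H\"older to $\mathbb{E}(1_{\{T=k\}}\|r^{[l]}\|_2^N)$ with an exponent $p$ near $1$ on the indicator, which leads to exactly the series $\sum_k k^{N+M}\mathbb{P}(T=k)^{1/p}$ that you abandoned. It then shows this series is finite by a second H\"older inequality on sequences, with conjugate exponents $(N+M+1)/(N+M)$ and $N+M+1$. The factors $k^{N+M}\mathbb{P}(T=k)^{(N+M)/(N+M+1)}$ contribute $\mathbb{E}(T^{N+M+1})$, and Markov is used only on the small leftover power of $\mathbb{P}(T=k)$. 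Your suggested split $\mathbb{P}(T=k)\cdot\mathbb{P}(T=k)^{1/q-1}$ would not achieve this, since $1/q-1<0$.

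Your version instead uses the tail decomposition $\sum_{l\ge 1}1_{\{T\ge l\}}$ and keeps the weight $T^{M+N-1}$ inside a single H\"older step. It replaces the series H\"older by the pointwise Markov-type bound $T^{(M+N-1)q}1_{\{T\ge l\}}\le T^{N+M+1}\,l^{(M+N-1)q-(N+M+1)}$. This bound is legitimate because $(M+N-1)q<N+M+1$ for every admissible $q<(N+M+1)/(N+M)$. Your route is shorter and needs only $\mathbb{E}(|r_j|^{Np})<\infty$ for some $p>N+M+1$, i.e.\ moments of order just above $N(N+M+1)$. The stated $\alpha$ is roughly $N(N+M+2)$, so your argument proves a slightly stronger lemma.

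The arithmetic you left open does close. You need $p\in(N+M+1,\alpha/N]$, and this interval is nonempty since $\alpha/N\ge((N+M+1)^2+1)/(N+M)>N+M+1$. In fact $p=((N+M+1)^2+1)/(N+M)$, which is exactly the paper's exponent on the samples, works. Writing $K=N+M+1$, it makes the exponent of $l$ equal to $-\frac{K^2+K+2}{K^2+1}<-1$. The required lower moment of order $Np\le\alpha$ then follows from $\mathbb{E}(|r_j|^{\alpha})<\infty$.
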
 

    \begin{proof}
        First, by the H\"older inequality on vectors, see e.g.~\cite[Theorem 10.23]{AnalyticInequalities}, for all $k \in \mathbb{N}$,
        \begin{align}
        \label{eq: Hoelder trick 2}
            \begin{split}
            \sum_{l = 1}^k \|r^{[l]} \|_2 
            &\leq 
            \left( \sum_{l = 1}^k \|r^{[l]} \|_2^N \right)^{1/N} \left( \sum_{l = 1}^k 1 \right)^{(N - 1)/N} 
            %\\ 
            %&
            = 
            k^{(N - 1)/N} 
            \left( \sum_{l = 1}^k \|r^{[l]} \|_2^N \right)^{1/N}.
            \end{split}
        \end{align}
        This proves that for all $k \in \mathbb{N}$, 
        \begin{align*}
            \left( \sum_{l = 1}^k \|r^{[l]} \|_2 \right)^N 
            \leq k^{N - 1} \sum_{l = 1}^k \|r^{[l]} \|_2^N. 
        \end{align*}
        In particular, this implies 
        \begin{align*}
            &\mathbb{E} \left( T^M \left( \sum_{l = 1}^T \|r^{[l]} \|_2 \right)^N \right) 
            = 
            \mathbb{E} \left( \sum_{k = 1}^{\infty} 1_{\{T = k\}} \cdot k^M \left( \sum_{l = 1}^k \|r^{[l]} \|_2 \right)^N \right) 
            \\
            \leq \ 
            &\mathbb{E} \left( \sum_{k = 1}^{\infty} 1_{\{T = k\}} \cdot k^{N + M - 1}  \sum_{l = 1}^k \|r^{[l]} \|_2^N \right) 
            = 
            \sum_{k = 1}^{\infty} k^{N + M - 1} \sum_{l = 1}^k \mathbb{E} \left( 1_{\{T = k\}} \cdot \|r^{[l]} \|_2^N \right).
        \end{align*}
        Now, choose 
        \begin{align}
            \label{eq: p}
            p &= \frac{(N + M + 1)^2 + 1}{(N + M)^2 + N + M + 2} \in (1, \infty),
            \\
            q &= \frac{(N + M + 1)^2 + 1}{N + M} \in (1, \infty).
            \label{eq: q}
        \end{align}
        In particular, $\frac{1}{p} + \frac{1}{q} = 1$. 
        Then by H\"older's Inequality for Random Variables, see for instance \cite[Theorem 7.16]{klenke}, we have for all $k, l \in \mathbb{N}$, 
        \begin{align*}
            \mathbb{E} \left( 1_{\{T = k\}} \cdot \|r^{[l]} \|_2^N \right) 
            &\leq 
            \mathbb{E} \left( 1_{\{T = k\}}^p \right)^{1/p} 
            \cdot \mathbb{E} \left( \|r^{[l]} \|_2^{q N} \right)^{1/ q} 
            = 
            \mathbb{P} \left( T = k \right)^{1/p} 
            \cdot \mathbb{E} \left( \|r^{[1]} \|_2^{q N} \right)^{1/ q}, 
        \end{align*}
        where we have used that the random samples $\left( r^{[l]} \right)_{l \in \mathbb{N}}$ are i.i.d..
        Combining both estimates, 
        \begin{align*}
            \mathbb{E} \left( T^M \left( \sum_{l = 1}^T \|r^{[l]} \|_2 \right)^N \right) 
            &\leq 
            \sum_{k = 1}^{\infty} k^{N + M - 1} \sum_{l = 1}^k \mathbb{E} \left( 1_{\{T = k\}} \cdot \|r^{[l]} \|_2^N \right) 
            \\
            &\leq 
            \mathbb{E} \left( \|r^{[1]} \|_2^{q N} \right)^{1/ q}
            \sum_{k = 1}^{\infty} k^{N + M} \cdot \mathbb{P} \left( T = k \right)^{1/p}. 
        \end{align*}
        Regarding the random samples we have by an argument similar to \eqref{eq: Hoelder trick 2},
        \begin{align*}
             \mathbb{E} \left( \|r^{[1]} \|_2^{qN} \right) 
             &=  \mathbb{E} \left( \left( \sum_{j = 1}^m |r_j^{[1]}|^2 \right)^{qN/2} \right) 
             \leq 
             \mathbb{E} \left( \left( \sum_{j = 1}^m |r_j^{[1]}| \right)^{qN} \right) 
             %\\ 
             %&
             \leq 
             m^{qN - 1} \sum_{j = 1}^m \mathbb{E} \left( |r_j^{[1]}|^{qN} \right) 
             \\ 
             &= 
             m^{(N(N + M + 1)^2 - M) / (N + M)} \sum_{j = 1}^m \mathbb{E} \left( |r_j^{[1]}|^{(N(N + M + 1)^2 + N) / (N + M)} \right) 
             < \infty,  
        \end{align*}
        where have used \eqref{eq: q} and that the last inequality holds by assumption.

        Therefore, it remains to show that 
        \begin{align}
        \label{eq: series bound}
            \sum_{k = 1}^{\infty} k^{N + M} \cdot \mathbb{P} \left( T = k \right)^{1/p} < \infty.
        \end{align}
        Now choose $a := (N + M + 1)/(N + M) \in (1, \infty)$ and $b := N + M + 1$, i.e.~$\frac{1}{a} + \frac{1}{b} = 1$.
        Then by H\"older's Inequality for power series, see e.g.~\cite[Theorem 10.23]{AnalyticInequalities},
        \begin{align*}
            &\sum_{k = 1}^{\infty} k^{N + M}  \mathbb{P} \left( T = k \right)^{1/p} 
            = 
            \sum_{k = 1}^{\infty} \left( k^{N + M} \mathbb{P} \left( T = k \right)^{(N + M)/(N + M + 1)} \right) \mathbb{P} \left( T = k \right)^{1/p - (N + M)/(N + M + 1)}
            \\
            &\leq 
            \left( \sum_{k = 1}^{\infty} \left( k^{N+M} \mathbb{P} \left( T = k \right)^{(N+M)/(N + M + 1)} \right)^a \right)^{1/a} 
            \left( \sum_{k = 1}^{\infty} \mathbb{P} \left( T = k \right)^{b (1/p - (N+M)/(N + M + 1))} \right)^{1/b}.
        \end{align*}
        For the first term note that as $a = (N + M + 1)/(N + M)$ and as $\mathbb{E}(T^{N+M+1}) < \infty$,  
        \begin{align*}
            \sum_{k = 1}^{\infty} \left( k^{N+M} \cdot \mathbb{P} \left( T = k \right)^{(N + M)/(N + M + 1)} \right)^a 
            &= 
            \sum_{k = 1}^{\infty} k^{N + M +1} \cdot \mathbb{P} \left( T = k \right) 
            = \mathbb{E}(T^{N+M+1}) < \infty.
        \end{align*}
    %    Hence, this proves that
    %    \begin{align*}
    %        \sum_{k = 1}^{\infty} k^{N + M} \cdot \mathbb{P} \left( T = k \right)^{1/p}  
     %       \leq 
     %       \mathbb{E}(T^{N+M+1})^{(N+M)/(N+M+1)} 
     %       \left( \sum_{k = 1}^{\infty} \mathbb{P} \left( T = k \right)^{b (1/p - N/(N + 1))} \right)^{1/b}. 
     %   \end{align*}
     %
        For the second term note that as $b = N + M + 1$, 
        \begin{align}
        \label{eq: series bound 0}
            \sum_{k = 1}^{\infty} \mathbb{P} \left( T = k \right)^{b (1/p - (N + M)/(N + M + 1))} 
            =
            \sum_{k = 1}^{\infty} \mathbb{P} \left( T = k \right)^{ (N + M + 1)/p - (N + M)}.
        \end{align}
        Moreover, by \eqref{eq: p} we have 
        \begin{align*}
            \frac{N + M + 1}{p} - N - M
            &=
            \frac{(N+M+1)((N + M)^2 + N + M + 2)}{(N+M +1)^2 + 1} - N - M
            \\
            &=
            \frac{N + M + 2}{(N + M + 1)^2 + 1} > 0.
        \end{align*}
        Furthermore, by Markov's Inequality, see for instance \cite[Theorem 5.11]{klenke}, we have for all $k \in \mathbb{N}$, 
        \begin{align*}
            \mathbb{P} \left( T = k \right) 
            \leq 
            \mathbb{P} \left( T \geq k \right) 
            \leq 
            \frac{\mathbb{E}(T^{N+M+1})}{k^{N + M + 1}}.
        \end{align*}
        Combining these two equations we obtain 
        \begin{align*}
            \sum_{k = 1}^{\infty} \mathbb{P} \left( T = k \right)^{ (N + M + 1)/p - N - M}
            \leq 
            \mathbb{E} \left( T^{N+M+1} \right)^{ \frac{N + M + 2}{(N + M +1)^2 + 1}} 
            \sum_{k = 1}^{\infty}
            k^{-\frac{(N + M + 1)(N + M + 2)}{(N + M + 1)^2 + 1}}.
         \end{align*}
         And, 
         \begin{align*}
        \frac{(N + M + 1)(N + M + 2)}{(N + M + 1)^2 + 1} 
         &= \frac{(N + M)^2 + 3N + 3M + 2}{(N + M)^2 + 2N + 2M + 2} 
         \\
         &= 1 + \frac{N + M}{(N + M)^2 + 2N + 2M + 2} > 1.
         \end{align*}
         This proves indeed that 
        \begin{align*}
            &\sum_{k = 1}^{\infty} \mathbb{P} \left( T = k \right)^{ (N + M + 1)/p - N - M}
            %\\
            %&
            \leq 
            \mathbb{E} \left(T^{N+M+1}\right)^{ \frac{N + M + 2}{(N + M +1)^2 + 1}} 
            \sum_{k = 1}^{\infty}
            k^{-\frac{(N + M + 1)(N + M + 2)}{(N + M + 1)^2 + 1}} < \infty. 
         \end{align*}
        Hence, by \eqref{eq: series bound 0} we have shown that 
        \begin{align*}
            \sum_{k = 1}^{\infty} \mathbb{P} \left( T = k \right)^{b (1/p - (N + M)/(N + M + 1))} < \infty.
        \end{align*}
         This is the second inequality we had to prove. 
         This concludes the proof of \eqref{eq: series bound}. 
         In particular, this also proves the claim overall. 
    \end{proof}

    \clearpage
    \section{Supplementary Material for Bound Derivation}
    \label{appendix:derivation-rules}
    In this section we prove that the derivation rules are probability-theoretically sound.
        \begin{mathpar}
              \inferrule*[right=(jensen-lb-1)]
        {a\leq \mathbb{E}(\vec{x}^\alpha), 0 \leq a}
        {a^2\leq \mathbb{E}(\vec{x}^{2\alpha})}     
        \and
        \inferrule*[right=(jensen-lb-2)]
        {\mathbb{E}(\vec{x}^\alpha)\leq a, a\leq 0}
        {a^2\leq \mathbb{E}(\vec{x}^{2\alpha})}
        \end{mathpar}

        \begin{lemma}
            The rules (jensen-lb-1) and (jensen-lb-2) are sound.
        \end{lemma}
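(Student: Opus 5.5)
Both rules follow from Jensen's inequality applied to the convex function $t\mapsto t^2$, combined with monotonicity of squaring on the appropriate half-line. Write $Y:=\vec{x}^\alpha$ for the (terminal) monomial, so that $Y^2=\vec{x}^{2\alpha}$ and $\vec{x}^{2\alpha}\geq 0$. I would first settle integrability. If $\mathbb{E}(Y^2)=\infty$, both conclusions $a^2\leq \mathbb{E}(\vec{x}^{2\alpha})$ hold trivially, because $Y^2\ge 0$ makes the expectation well defined in $[0,\infty]$. Otherwise $Y\in L^2\subseteq L^1$ on the probability space, so $\mathbb{E}(Y)$ is a finite real number. In the intended use, integrability of $Y$ is in any case guaranteed by the premise itself, since the premise asserts a bound on $\mathbb{E}(Y)$.

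The key inequality is Jensen's, e.g.~\cite{klenke}, for the convex map $\varphi(t)=t^2$. It gives $\mathbb{E}(Y)^2\leq \mathbb{E}(Y^2)$. Alternatively one can derive this directly from $0\leq \mathbb{E}((Y-\mathbb{E}Y)^2)=\mathbb{E}(Y^2)-\mathbb{E}(Y)^2$.

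Next I would relate $a^2$ to $\mathbb{E}(Y)^2$. For (jensen-lb-1) we have $0\leq a\leq \mathbb{E}(Y)$. Since $t\mapsto t^2$ is nondecreasing on $[0,\infty)$, this yields $a^2\leq \mathbb{E}(Y)^2$. For (jensen-lb-2) we have $\mathbb{E}(Y)\leq a\leq 0$, so $|a|\leq |\mathbb{E}(Y)|$ and hence $a^2\leq \mathbb{E}(Y)^2$. Chaining each of these with Jensen's inequality gives $a^2\leq \mathbb{E}(\vec{x}^{2\alpha})$ in both cases.

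There is no real obstacle here. The only points needing care are the sign conditions, which are exactly what make squaring monotone, and the integrability caveat handled in the first step.
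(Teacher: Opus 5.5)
Your proof is correct and follows the same argument as the paper's: Jensen's inequality for the convex map $t\mapsto t^2$ gives $\mathbb{E}(Y)^2\leq\mathbb{E}(Y^2)$, and the sign conditions make squaring monotone on the relevant half-line, so $a^2\leq\mathbb{E}(Y)^2$. You spell out the monotonicity step and the integrability case split that the paper's two-line proof leaves implicit. Your side remark that the premise itself guarantees integrability is not quite accurate, since a one-sided bound such as $a\leq\mathbb{E}(Y)$ allows $\mathbb{E}(Y)=+\infty$, but your case split on whether $\mathbb{E}(Y^2)$ is finite already covers that situation.
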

        \begin{proof}
            This follows from Jensen's inequality~\cite{klenke}, and the fact that the square function $f(X)=X^{2}$ is convex. 

            We therefore have $a^2\leq\mathbb{E}(X^2)$ whenever $a$ is a positive lower bound for $X$, or when $a$ is a negative upper bound.
        \end{proof}

        \begin{mathpar}
            \inferrule*[right=(rv-mul-1)]
        {a\leq \vec{x}_T^\alpha \\ a\leq 0 \\ 0\leq \vec{x}_T^\beta\\ \mathbb{E}(\vec{x}_T^\beta) \leq b}
        {ab\leq \mathbb{E}(\vec{x}_T^{\alpha+\beta})}
        \end{mathpar}
        \begin{lemma}
            The rule (rv-mul-1) is sound.
        \end{lemma}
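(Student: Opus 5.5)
The rule (rv-mul-1) should follow from a pointwise inequality between random variables, after which we take expectations using monotonicity and linearity. No martingale argument is needed; only the sign information in the premises matters. Write $U := \vec{x}_T^\alpha$ and $V := \vec{x}_T^\beta$, so that $\vec{x}_T^{\alpha+\beta} = U V$ pointwise, since exponents add for monomials evaluated at the same time $T$.

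First, I would establish the almost-sure bound $U V \geq a V$. By assumption $U \geq a$ and $V \geq 0$ almost surely, so multiplying $U - a \geq 0$ by the nonnegative $V$ gives $(U - a)V \geq 0$, that is, $UV \geq aV$ almost surely.

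Second, I would take expectations. By monotonicity $\mathbb{E}(UV) \geq \mathbb{E}(aV) = a\,\mathbb{E}(V)$. To pass to $b$, use $a \leq 0$ together with $\mathbb{E}(V) \leq b$: multiplying by the nonpositive $a$ reverses the inequality, giving $a\,\mathbb{E}(V) \geq a b$. Chaining these yields $\mathbb{E}(\vec{x}_T^{\alpha+\beta}) \geq ab$.

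The only delicate point, and the main obstacle, is whether $\mathbb{E}(UV)$ is well defined. Here $V \geq 0$ and $\mathbb{E}(V) \leq b < \infty$, so $V$ is integrable. Hence $aV$ is integrable, and $UV$ has an integrable lower bound. Its negative part is therefore integrable, so $\mathbb{E}(UV)$ exists in $(-\infty,\infty]$, and the monotonicity step is legitimate. If $\mathbb{E}(UV)$ is infinite, the conclusion holds trivially. In the intended application, integrability of $UV$ is in any case guaranteed by Theorem~\ref{thm:moment-monomials}.
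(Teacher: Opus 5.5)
Your proof is correct and follows the paper's argument: the pointwise bound $aV \leq UV$ from $U \geq a$ and $V \geq 0$, then taking expectations and using $a \leq 0$ to get $ab \leq a\,\mathbb{E}(V) \leq \mathbb{E}(UV)$. Your extra check that $\mathbb{E}(UV)$ exists in $(-\infty,\infty]$, because $UV$ is bounded below by the integrable $aV$, is a small gain in rigour that the paper omits.
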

        \begin{proof}
            Let us denote by $X$ the random variable $\vec{x}_T^\alpha$, and with $Y$ the random variable $\vec{x}_T^\beta$. By the premise we have that $\forall\omega\in\Omega:a\leq X(\omega)$, and $\forall\omega\in\Omega:0\leq Y(\omega)$.
        
            It holds then $\forall\omega\in\Omega:aY(\omega)\leq X(\omega)Y(\omega)$ (notice that positivity of $Y(\omega)$ and negativity of $a$ are used here for this inequality to hold).
            Taking now the expected value on both sides, we get then that $ab\leq a\mathbb{E}(Y)\leq \mathbb{E}(XY)$.
        \end{proof}

        \begin{mathpar}
        \inferrule*[right=(rv-mul-2)]
        {\vec{x}_T^\alpha\leq a \\ a\geq 0 \\ 0\leq \vec{x}_T^\beta\\ \mathbb{E}(\vec{x}_T^\beta) \leq b\\ 0\leq b}
        {\mathbb{E}(\vec{x}_T^{\alpha+\beta})\leq ab}
        \end{mathpar}
        \begin{lemma}
            The rule (rv-mul-2) is sound.
        \end{lemma}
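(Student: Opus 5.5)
The statement to prove is the soundness of (rv-mul-2). I will argue pointwise and then take expectations, mirroring the proof of (rv-mul-1). Write $X := \vec{x}_T^\alpha$ and $Y := \vec{x}_T^\beta$. The premises give $X(\omega) \leq a$ with $a \geq 0$ and $Y(\omega) \geq 0$ for all $\omega \in \Omega$ (or almost surely, which suffices). They also give $\mathbb{E}(Y) \leq b$ with $b \geq 0$.

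The first step is the pointwise inequality $X(\omega)Y(\omega) \leq aY(\omega)$. It follows by multiplying $X(\omega) \leq a$ by the nonnegative number $Y(\omega)$; the sign condition on $Y$ is exactly what keeps the inequality in the same direction. In particular $(XY)^+ \leq aY$.

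The second step is to take expectations. By monotonicity of the expectation, $\mathbb{E}(XY) \leq a\,\mathbb{E}(Y)$. Since $a \geq 0$ and $\mathbb{E}(Y) \leq b$, we get $a\,\mathbb{E}(Y) \leq ab$, which is the conclusion $\mathbb{E}(\vec{x}_T^{\alpha+\beta}) \leq ab$.

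The only delicate point is that $\mathbb{E}(XY)$ must make sense. The positive part of $XY$ is dominated by $aY$, and $aY$ is integrable because $0 \leq \mathbb{E}(Y) \leq b < \infty$. So $\mathbb{E}(XY)$ is well defined in $[-\infty, \infty)$, and the inequality holds in any case. In the intended use the monomial $\vec{x}^{\alpha+\beta}_T$ is moreover integrable, by Theorem~\ref{thm:moment-monomials} together with Theorem~\ref{thm: ost}.

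Note that the hypothesis $b \geq 0$ is automatic, since $Y \geq 0$ forces $b \geq \mathbb{E}(Y) \geq 0$. What is really used is $a \geq 0$, in the step $a\,\mathbb{E}(Y) \leq ab$.
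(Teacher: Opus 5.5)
Your proof is correct and matches the paper's argument: the pointwise bound $XY \leq aY$ from $Y \geq 0$, then monotonicity of expectation and $a \geq 0$ to get $\mathbb{E}(XY) \leq a\,\mathbb{E}(Y) \leq ab$. Your extra remarks are accurate refinements: $\mathbb{E}(XY)$ is well-defined, $b \geq 0$ is automatic, and $a \geq 0$ is needed only in the final step, not in the pointwise step as the paper's parenthetical suggests.
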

        \begin{proof}
            Let us denote by $X$ the random variable $\vec{x}_T^\alpha$, and with $Y$ the random variable $\vec{x}_T^\beta$. By the premise we have that $\forall\omega\in\Omega:X(\omega) \leq a$ with $a\geq 0$ and $\forall\omega\in\Omega:0\leq Y(\omega)$.

            It then holds that $\forall\omega\in\Omega: X(\omega)Y(\omega)\leq aY(\omega)$ (we use positivity of $Y(\omega)$ and $a$ in this inequality). Taking then the expected value, we get $\mathbb{E}(XY)\leq a\mathbb{E}(Y)\leq ab$.
        \end{proof}

        \begin{mathpar}
            \inferrule*[right=(rv-mul-3)]
        {a\leq \vec{x}_T^\alpha \\ a\geq 0 \\ 0\leq \vec{x}_T^\beta\\ b\leq\mathbb{E}(\vec{x}_T^\beta)\\ 0\leq b}
        {ab\leq\mathbb{E}(\vec{x}_T^{\alpha+\beta})}
        \end{mathpar}
        \begin{lemma}
            The rule (rv-mul-3) is sound.
        \end{lemma}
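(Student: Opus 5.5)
We argue exactly as in the soundness proofs of (rv-mul-1) and (rv-mul-2): first a pointwise inequality between random variables, then monotonicity of the expectation, then the premise on $\mathbb{E}(\vec{x}_T^\beta)$. Write $X := \vec{x}_T^\alpha$ and $Y := \vec{x}_T^\beta$. The premises give, almost surely, $a \leq X(\omega)$ and $0 \leq Y(\omega)$. They also give $a \geq 0$ and $b \leq \mathbb{E}(Y)$ with $b \geq 0$.

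\emph{Step 1 (pointwise bound).} Multiply the inequality $a \leq X(\omega)$ by the non-negative number $Y(\omega)$. This gives $a\,Y(\omega) \leq X(\omega)Y(\omega)$ for almost all $\omega \in \Omega$. Only the sign of $Y$ is needed here.

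\emph{Step 2 (taking expectations).} By monotonicity and linearity of the expectation, $a\,\mathbb{E}(Y) \leq \mathbb{E}(XY)$. The only point requiring care is that these expectations exist. In our application the monomials in question are the uniformly integrable ones from Theorem~\ref{thm:moment-monomials}, so they are integrable. Independently of that, $aY \geq 0$, so $\mathbb{E}(aY)$ is always defined in $[0,\infty]$. Moreover $XY \geq aY \geq 0$, so $\mathbb{E}(XY)$ is also defined in $[0,\infty]$. Hence the inequality is meaningful even without integrability, and it is trivially true if $\mathbb{E}(XY) = \infty$.

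\emph{Step 3 (using the lower bound on $\mathbb{E}(Y)$).} Since $a \geq 0$ and $b \leq \mathbb{E}(Y)$, we have $ab \leq a\,\mathbb{E}(Y)$. Here the sign of $a$ is used, which is exactly why the rule assumes $a \geq 0$. Chaining this with Step 2 yields $ab \leq \mathbb{E}(XY) = \mathbb{E}(\vec{x}_T^{\alpha+\beta})$, which is the conclusion of the rule.

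The premise $b \geq 0$ is not actually needed for this chain of inequalities. The argument is routine, and the only mild obstacle is the integrability bookkeeping in Step 2, which the non-negativity of $XY$ settles.
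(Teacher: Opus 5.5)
Your proof is correct and follows the paper's own argument: the pointwise bound $aY \leq XY$ (from $Y \geq 0$), followed by the chain $ab \leq a\,\mathbb{E}(Y) \leq \mathbb{E}(XY)$ (using $a \geq 0$). Your additional observations are accurate refinements that the paper does not spell out: $XY \geq aY \geq 0$ makes the expectation well defined in $[0,\infty]$, and the premise $b \geq 0$ is superfluous.
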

        \begin{proof}
            Let us again denote by $X$ the random variable $\vec{x}_T^\alpha$, and with $Y$ the random variable $\vec{x}_T^\beta$. By the premise we have that $\forall\omega\in\Omega:0\leq a\leq X(\omega)$ and $\forall\omega\in\Omega:0\leq Y(\omega)$.

            It then holds that $\forall\omega\in\Omega: aY(\omega)\leq X(\omega)Y(\omega)$ (using again positivity of $Y(\omega)$ and $a$).
            Therefore $ab\leq a\mathbb{E}(Y)\leq\mathbb{E}(XY)$.
        \end{proof}

    \begin{mathpar}
        \inferrule*[right=(cs-lb)]
        {\mathbb{E}((a\vec{x}^\alpha+b\vec{x}^\beta)^2) \leq g\\ \mathbb{E}(\vec{x}^{2\alpha})\leq h}
        {-\frac{\sqrt{gh}}{|b|}-\frac{h}{2}\left(\left|\frac{a}{b}\right|+\frac{a}{b}\right)\leq \mathbb{E}(\vec{x}^{\alpha+\beta})}
    \end{mathpar}
    \begin{lemma}
    \label{lemma:rule-cs-lb-sound}
        The rule (cs-lb) is sound.
    \end{lemma}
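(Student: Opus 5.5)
We need to prove that the rule (cs-lb) is sound. Write $X := \vec{x}^\alpha$ and $Y := \vec{x}^\beta$ (evaluated at termination), and assume $b \neq 0$. The premises are $\mathbb{E}((aX+bY)^2)\le g$ and $\mathbb{E}(X^2)\le h$. We want a lower bound on $\mathbb{E}(XY)$.

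The plan is to isolate $Y$ through the combination $W := aX+bY$. Multiplying by $X$ gives $XW = aX^2 + bXY$, so
\[
\mathbb{E}(XY) = \frac{1}{b}\left(\mathbb{E}(XW) - a\,\mathbb{E}(X^2)\right).
\]
All these expectations are finite, since $X^2$ and $W^2$ are integrable. This follows from the premises, or from the uniform integrability supplied by Theorem~\ref{thm:moment-monomials} together with Theorem~\ref{thm: ost}. Cauchy--Schwarz then gives
\[
|\mathbb{E}(XW)| \le \sqrt{\mathbb{E}(X^2)\,\mathbb{E}(W^2)} \le \sqrt{gh}.
\]
Dividing by $|b|$ bounds the first term below by $-\sqrt{gh}/|b|$.

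The second term is $-\frac{a}{b}\mathbb{E}(X^2)$. Since $0\le \mathbb{E}(X^2)\le h$, it is at least $-\frac{a}{b}h$ when $a/b\ge 0$, and at least $0$ when $a/b<0$. Both cases are captured by
\[
-\frac{a}{b}\,\mathbb{E}(X^2) \;\ge\; -\frac{h}{2}\left(\left|\frac{a}{b}\right|+\frac{a}{b}\right),
\]
because the right-hand side equals $-\frac{a}{b}h$ when $a/b\ge0$ and $0$ otherwise. Adding the two estimates yields the conclusion of (cs-lb).

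The only delicate points are the sign bookkeeping in dividing by $b$ and the integrability of $XY$. For $XY$, note that $|XY|\le \tfrac12(X^2+Y^2)$, and $Y^2$ is integrable since $bY = W - aX$. The rest is routine.
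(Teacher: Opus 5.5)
Your proof is correct and matches the paper's: the same decomposition $\mathbb{E}(XY)=\frac{1}{b}\mathbb{E}(X(aX+bY))-\frac{a}{b}\mathbb{E}(X^2)$, Cauchy--Schwarz on the first term, and the sign case split on $a/b$ for the second term. Your integrability check (via $|XY|\le\frac12(X^2+Y^2)$ and $bY=W-aX$, with $b\neq 0$) is a small piece of extra rigor that the paper leaves implicit.
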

    \begin{proof}
        Let $X$ and $Y$ be the random variables $\vec{x}^\alpha$ and $\vec{x}^\beta$ respectively. It follows that \[\mathbb{E}(XY)=\mathbb{E}\left(X\frac{aX+bY-aX}{b}\right)=\frac{1}{b}\mathbb{E}(X(aX+bY))-\frac{a}{b}\mathbb{E}(X^2)\]
        By the Cauchy-Schwarz inequality, $|\mathbb{E}(X(aX+bY))|\leq\sqrt{\mathbb{E}(X^2)\mathbb{E}((aX+bY)^2)}\leq\sqrt{gh}$, thus $\frac{1}{b}\mathbb{E}(X(aX+bY))\geq -\frac{1}{|b|}\sqrt{gh}$.
        We now proceed with the second term: Since $\mathbb{E}(X^2)\geq0$, $-\frac{a}{b}\mathbb{E}(X^2)\geq -\max(0,\frac{a}{b}\mathbb{E}(X^2))=-\frac{1}{2}(\left|\frac{a}{b}\right|+\frac{a}{b})\mathbb{E}(X^2)\geq- \frac{h}{2}(\left|\frac{a}{b}\right|+\frac{a}{b})$.

        Putting this together, we get $\mathbb{E}(\vec{x}^{\alpha+\beta})=\mathbb{E}(XY)\geq -\frac{\sqrt{gh}}{|b|}-\frac{h}{2}(\left|\frac{a}{b}\right|+\frac{a}{b})$.
    \end{proof}

    \begin{mathpar}
        \inferrule*[right=(cs-ub)]
        {\mathbb{E}((a\vec{x}^\alpha+b\vec{x}^\beta)^2) \leq g\\ \mathbb{E}(\vec{x}^{2\alpha})\leq h}
        {\mathbb{E}(\vec{x}^{\alpha+\beta})\leq \frac{\sqrt{gh}}{|b|}+\frac{h}{2}\left(\left|\frac{a}{b}\right|-\frac{a}{b}\right)}
    \end{mathpar}
    \begin{lemma}
        The rule (cs-ub) is sound.
    \end{lemma}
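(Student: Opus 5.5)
We argue exactly as in the proof of Lemma~\ref{lemma:rule-cs-lb-sound}, writing $X := \vec{x}^\alpha$ and $Y := \vec{x}^\beta$. We assume $b \neq 0$, which is implicit in the rule since $b$ appears in a denominator. First we settle integrability. The premises give $\mathbb{E}(X^2) \leq h < \infty$ and $\mathbb{E}((aX+bY)^2) \leq g < \infty$. Hence $X$ and $aX+bY$ lie in $L^2$, and so does $Y = \frac{1}{b}\big((aX+bY) - aX\big)$. Consequently $XY$, $X(aX+bY)$ and $X^2$ are all integrable, and linearity of expectation may be used freely.

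The key step is the same algebraic decomposition as before:
\[
\mathbb{E}(XY) = \mathbb{E}\Big(X \cdot \tfrac{(aX+bY) - aX}{b}\Big) = \tfrac{1}{b}\,\mathbb{E}\big(X(aX+bY)\big) - \tfrac{a}{b}\,\mathbb{E}(X^2).
\]
We then bound the two summands from above separately.

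For the first summand we apply the Cauchy--Schwarz inequality~\cite{durrett}:
\[
\big|\mathbb{E}(X(aX+bY))\big| \leq \sqrt{\mathbb{E}(X^2)\,\mathbb{E}((aX+bY)^2)} \leq \sqrt{gh}.
\]
This gives $\tfrac{1}{b}\mathbb{E}(X(aX+bY)) \leq \tfrac{\sqrt{gh}}{|b|}$ regardless of the sign of $b$.

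For the second summand we use $\mathbb{E}(X^2) \geq 0$. This yields
\[
-\tfrac{a}{b}\mathbb{E}(X^2) \leq \max\big(0, -\tfrac{a}{b}\big)\,\mathbb{E}(X^2) = \tfrac{1}{2}\Big(\big|\tfrac{a}{b}\big| - \tfrac{a}{b}\Big)\mathbb{E}(X^2).
\]
The coefficient $\tfrac{1}{2}(|a/b| - a/b)$ is nonnegative, so we may further bound this by $\tfrac{h}{2}\big(|\tfrac{a}{b}| - \tfrac{a}{b}\big)$ using $\mathbb{E}(X^2) \leq h$.

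Adding the two estimates gives the conclusion of (cs-ub). The only delicate point is the preliminary integrability and $b \neq 0$ bookkeeping; the rest mirrors (cs-lb) with the signs reversed.
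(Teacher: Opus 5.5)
Your proposal is correct and follows essentially the same route as the paper's proof: the same decomposition $\mathbb{E}(XY)=\frac{1}{b}\mathbb{E}(X(aX+bY))-\frac{a}{b}\mathbb{E}(X^2)$, Cauchy--Schwarz for the first summand, and the $\max(0,-\frac{a}{b})\,\mathbb{E}(X^2)$ bound for the second. Your extra checks ($b\neq 0$, and integrability of $XY$ via $X, aX+bY\in L^2$) are not in the paper's proof and are a harmless but welcome addition.
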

    \begin{proof}
        Similar to the proof of Lemma~\ref{lemma:rule-cs-lb-sound}, we start with
        \[\mathbb{E}(XY)=\frac{1}{b}\mathbb{E}(X(aX+bY))-\frac{a}{b}\mathbb{E}(X^2)\]
        and by Cauchy-Schwarz $|\mathbb{E}(X(aX+bY))|\leq\sqrt{gh}$, thus $\frac{1}{b}\mathbb{E}(X(aX+bY))\leq \frac{1}{|b|}\sqrt{gh}$.
        Again since $\mathbb{E}(X^2)\geq 0$, we get that $-\frac{a}{b}\mathbb{E}(X^2)\leq \max(0,-\frac{a}{b}\mathbb{E}(X^2))=\frac{1}{2}(|\frac{a}{b}|-\frac{a}{b})\mathbb{E}(X^2)\leq\frac{h}{2}(|\frac{a}{b}|-\frac{a}{b})$.
        Again, putting this together we get $\mathbb{E}(\vec{x}^{\alpha+\beta})=\mathbb{E}(XY)\leq \frac{\sqrt{gh}}{|b|}+\frac{h}{2}\left(\left|\frac{a}{b}\right|-\frac{a}{b}\right)$
    \end{proof}

    \begin{mathpar}
        \inferrule*[right=(mk-ub)]
        {\mathbb{E}((a\vec{x}^\frac{\alpha}{2}+b\vec{x}^\frac{\beta}{2})^2) \leq u\\ \mathbb{E}(\vec{x}^\beta)\leq v}
    {\mathbb{E}(\vec{x}^\alpha) \leq \frac{1}{a^2}u + 2 \frac{|b|}{a^2}\sqrt{u}\sqrt{v} + \frac{b^2}{a^2}v}
    \end{mathpar}
    \begin{lemma}
        The rule (mk-ub) is sound.
    \end{lemma}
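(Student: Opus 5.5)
We would prove soundness of (mk-ub) by a single application of the Minkowski (triangle) inequality in $L^2$. The argument follows the proofs of (cs-lb) and (cs-ub), with Cauchy--Schwarz replaced by its $L^2$-norm form.

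First we fix notation. Put $X := \vec{x}^{\alpha/2}$, $Y := \vec{x}^{\beta/2}$ and $W := aX + bY$. We implicitly assume, as the rule's notation suggests, that $\alpha/2$ and $\beta/2$ are multi-indices. Then $\vec{x}^\alpha = X^2$ and $\vec{x}^\beta = Y^2$, and the premises read $\mathbb{E}(W^2) \leq u$ and $\mathbb{E}(Y^2) \leq v$. We also assume $a \neq 0$, as the conclusion requires.

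If $u$ or $v$ is infinite there is nothing to prove. Otherwise $W, Y \in L^2$, hence $aX = W - bY \in L^2$ and $\mathbb{E}(X^2) < \infty$. So every expectation below is well defined.

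The key step is to write $aX = W - bY$ and apply Minkowski's inequality (e.g.~\cite{klenke}):
\[
|a| \, \mathbb{E}(X^2)^{1/2} \leq \mathbb{E}(W^2)^{1/2} + |b| \, \mathbb{E}(Y^2)^{1/2} \leq \sqrt{u} + |b|\sqrt{v}.
\]
Both sides are non-negative, so we may square them:
\[
a^2 \, \mathbb{E}(X^2) \leq u + 2|b|\sqrt{u}\sqrt{v} + b^2 v.
\]
Dividing by $a^2 > 0$ gives exactly the conclusion $\mathbb{E}(\vec{x}^\alpha) \leq \frac{1}{a^2}u + 2\frac{|b|}{a^2}\sqrt{u}\sqrt{v} + \frac{b^2}{a^2}v$.

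A variant avoids Minkowski. Expand
\[
a^2 \mathbb{E}(X^2) = \mathbb{E}\big((W - bY)^2\big) = \mathbb{E}(W^2) - 2b\,\mathbb{E}(WY) + b^2 \mathbb{E}(Y^2),
\]
and bound the cross term by Cauchy--Schwarz, $|\mathbb{E}(WY)| \leq \sqrt{\mathbb{E}(W^2)\,\mathbb{E}(Y^2)} \leq \sqrt{uv}$. This yields the same estimate.

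No step is a genuine obstacle. The only points needing care are the side conditions: $a \neq 0$, integrability, which follows from finiteness of $u$ and $v$ as noted above, and the implicit requirement that $\alpha/2$ and $\beta/2$ are genuine multi-indices, so that $\vec{x}^\alpha$ and $\vec{x}^\beta$ are squares.
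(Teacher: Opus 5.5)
Your proof is correct and is essentially the paper's argument: the paper also writes $aX = (aX+bY) - bY$, applies Minkowski's inequality in $L^2$ to get $\|X\|_2 \leq (\sqrt{u}+|b|\sqrt{v})/|a|$, and then squares. Your explicit handling of $a \neq 0$ and of integrability, which the paper leaves implicit, and your Cauchy--Schwarz variant are harmless additions.
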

    \begin{proof}
        Let us denote $X:=x^\frac{\alpha}{2}$ and $Y:=x^\frac{\beta}{2}$.
        Then by the definition of the $L^2$ norm we have $\|aX+bY\|_2\leq \sqrt{u}$ and furthermore $\|Y\|_2\leq\sqrt{v}$ (notice that $u$ and $v$ must be positive, as they are upper bounds of squares, since $\beta$ is implicitly an even multi-index, as $x^{\frac{\beta}{2}}$ is a monomial).

        It holds in general that $aX = (aX+bY) -bY$. By the Minkowski inequality~\cite{klenke}, it then holds that $\|aX\|_2\leq\|aX+bY\|_2+\|-bY\|_2 =\sqrt{u}+|b|\sqrt{v}$. Hence, $\|X\|_2\leq \frac{\sqrt{u}+|b|\sqrt{v}}{|a|}$.

        Since $\mathbb{E}(x^\alpha) = \mathbb{E}(X^2)=\|X\|_2^2\leq (\frac{\sqrt{u}+|b|\sqrt{v}}{|a|})^2=\frac{u+2|b|\sqrt{u}\sqrt{v}+b^2 v}{a^2}$, we have proven the claim.
    \end{proof}

    \subsection{Limitations of Bound Derivation}
    \label{appendix:limitations-of-implementation}
    The current implementation scales very badly when multiple bounds exist for the same monomial. This is the result of the martingale-rules being applied for all different combinations of bounds. To deal with that, it is important to have effective checks that prevent unnecessary bounds from being added/kept. This is an even bigger challenge, when adding additional rules that introduce terms that are not polynomials.

    Another challenge is the synthesis of martingale expressions. The equation system defined in Eq.~\eqref{eq:linear-system-of-equations} is usually under-determined, and not all solutions are equally useful for deriving bounds. One option would be to enumerate all solutions, with a unique set of zero-coefficients, but this enumeration is costly, and would also introduce a lot of rules. We therefore use mixed-integer linear programming and the \texttt{CBC}-solver~\cite{cbc_solver}, to iteratively fix one monomial to be nonzero, and compute all (or a user-provided maximum number of) solutions with a maximum number of zero-coefficients.

    Lastly, we note that the introduced multiplication rules for random variables require strict knowledge about the sign of the random variables involved. In our example, information about this sign can easily be obtained from the negated loop guard. When generalising our approach to a loop guard of form $x>b$, one would also have to consider expressions of the form $(x-b)$ and $\mathbb{E}((x-b))$, in order to make use of the multiplication rules. The presence of such expressions should also be considered in the martingale synthesis process.

    %\clearpage
    \section{Termination Proofs of Examples}
    \label{appendix:stopping-times-of-examples}
    In this section we show that the stopping times of the examples behave as required in Table~\ref{tab:experimental-results}.
    \paragraph{Example in Figure~\ref{fig:running-example-linear-loop}.} We define a stochastic process with $\{U_n\}_{n\in\mathbb{N}}$ being a sequence of i.i.d.~random variables drawn from $\text{Uniform}(-1,0)$. We then define $\{X_n\}_{n\in\mathbb{N}}=x_0+\sum_{i=1}^n U_i$.
    It is evident that $\{X_n\}$ corresponds to $x$ in the $n$-th iteration of Figure~\ref{fig:running-example-linear-loop} as long as the loop is still running. Therefore, $T:=\inf \{n\in\mathbb{N} \mid X_n<0\}$ captures its termination behaviour.

    We can now construct a martingale $\{M_n\}_{n\in\mathbb{N}}$, where $M_n=X_n+\frac{n}{2}$ and use Azuma's inequality~\cite{klenke} to derive the following bound: 
    \begin{align*}
        \mathbb{P}(X_n\geq 0) = \mathbb{P} \left( M_n\geq \frac{n}{2}-x_0 \right) \leq\begin{cases}
            \exp\left(\frac{-(\frac{n}{2}-x_0)^2}{2\sum_{k=1}^n c_k^2}\right) & \text{if $\frac{n}{2}-x_0\geq0$}\\
            1 &\text{else},
        \end{cases} 
    \end{align*}
    where $c_k$ is an almost sure difference bound of the martingale. In our case this bound is $0.5$.
    Hence, 
    \begin{align*}
        \mathbb{P}(X_n\geq 0) \leq
        \begin{cases}
           \exp\left(\frac{-\frac{n^2}{4}+x_0n-x_0^2}{n/2}\right) & \text{if $\frac{n}{2}-x_0\geq0$}\\
            1 &\text{else}
        \end{cases}
    \end{align*}
    This immediately gives us a bound on $\mathbb{P}(T\geq n)$, since $T \geq n$ implies that $X_n\geq 0$.
    Hence,
    \[
        \mathbb{E}\left(T^N\right)\leq \sum_{n=1}^\infty \mathbb{P}(T\geq \sqrt[N]{n}) \leq (2x_0)^N+\sum_{n=(2x_0)^N+1}^\infty \exp\left(\frac{-\frac{\sqrt[N]{n^2}}{4}+x_0\sqrt[N]{n}-x_0^2}{\sqrt[N]{n}/2}\right).
    \]
    Notice that the exponent is in $\exp(\mathcal{O}(-\sqrt[N]{n}))$, thus the sum converges for any fixed $N$, hence every moment of the stopping time is finite.

    \paragraph{Example in Figure~\ref{fig:example-unbounded-update-for-lg}.} 
    Define $\{Z_n\}_{n\in\mathbb{N}}$ to be a sequence of random variables from $\text{Normal}(0,1)$. Let $\{B_n\}_{n\in\mathbb{N}}$ be a sequence of variables, which are $-1$ with probability $\frac{7}{10}$, and $1$ else. Notice that then again the sequence $\{X_n\}_{n\in\mathbb{N}}$ with $X_n=x_0+\sum_{i=1}^n Z_i +\sum_{i=1}^n B_i$ describes the random variable $x$ in the $n$-th iteration before the loop has stopped. Therefore, $T:=\inf \{n\in\mathbb{N} \mid X_n<0\}$ captures the loop's termination behaviour.

    First we compute the moment generating function of $X_n - \mathbb{E}(X_n)$. 
    Note that by definition of the moment generating function and the independence of the variables $Z_i$, $X_i$, for all $t \geq 0$, 
    \begin{align}
    \label{eq: moment generating function}
        \begin{split}
        \mathcal{M}_{X_n-\mathbb{E}(X_n)}(t) 
        &= \mathbb{E} \left[ \exp \left( \left( X_n-\mathbb{E}(X_n) \right) t \right) \right] 
        \\
        &= \mathbb{E} \left[ \exp \left( \left( \sum_{i = 1}^n Z_i \right) t \right) \right] \cdot \prod_{i = 1}^n \mathbb{E} \left[ \exp \left( \left( B_i - \mathbb{E}(B_i) \right) t \right) \right].
        \end{split}
    \end{align}
    Now by using the additivity of Normal distributions we have $\sum_{i = 1}^n Z_i \sim \text{Normal}(0,n)$. 
    Therefore, by known properties of Gaussian random variables we have for all $t \geq 0$,
    \begin{align*}
        \mathbb{E} \left[ \exp \left( \left( \sum_{i = 1}^n Z_i \right) t \right) \right] = \exp \left( \frac{n t^2}{2} \right)
    \end{align*}
    Moreover, by Hoeffding's Lemma~\cite{massart2007concentration} applied to the random variable $B_i$ we obtain for all $t \geq 0$,
    \begin{align*}
        \mathbb{E} \left[ \exp \left( \left( B_i - \mathbb{E}(B_i) \right) t \right) \right] 
        = \mathcal{M}_{B_i-\mathbb{E}(B_i)}(t) 
        \leq \exp \left( \frac{4 t^2}{8} \right).
    \end{align*}
    Then by inserting this into \eqref{eq: moment generating function} we obtain that for all $t \geq 0$,
%
   % We then can find an upper bound of the moment generating function of $X_n$ using the additivity of Normal distributions, and Hoeffding's Lemma~\cite{massart2007concentration} for the sequence $\{B_n\}$.
%
    \begin{align*}
        \mathcal{M}_{X_n-\mathbb{E}(X_n)}(t) &\leq \exp\left(\frac{nt^2}{2}\right) \prod_{i=1}^n\exp\left(\frac{t^24}{8}\right) 
        =\exp\left(\frac{nt^2}{2}+\frac{4nt^2}{8}\right) = \exp\left(\frac{2nt^2}{2}\right)
    \end{align*}
    This means that $X_n-\mathbb{E}(X_n)$ is sub-Gaussian with variance proxy $2n$.
    
    We further have $\mathbb{E}(X_n)=n \left(-\frac{7}{10}+\frac{3}{10}\right)=-\frac{2}{5}n$. %and $\sigma^2_{B_n} = n+n(\frac{7}{10}+\frac{3}{10})=2n$.

    Hence, by the Chernoff Bound, we have that
    \begin{align*}
        \mathbb{P}(X_n\geq0) 
        &= \mathbb{P}\left(X_n-x_0+\frac{2n}{5} \geq \frac{2n}{5}-x_0\right) 
        = \mathbb{P}\left(X_n - \mathbb{E}(X_n) \geq \frac{2n}{5}-x_0 \right) 
        \\
        &\leq \inf_{t \geq 0} \left( \mathcal{M}_{X_n-\mathbb{E}(X_n)}(t) \cdot \exp \left( -t  \left( \frac{2n}{5}-x_0 \right) \right) \right)
        \\ 
        &\leq \inf_{t \geq 0} \exp \left( n t^2 -t  \left( \frac{2n}{5}-x_0 \right) \right) 
        \leq \exp \left( - \frac{(2n - 5 x_0)^2}{100n} \right).
        %\exp\left(\frac{-\frac{(2n-5x_0)^2}{25}}{4n}\right)
    \end{align*}
    %$\mathbb{P}(X_n\geq0)=\mathbb{P}(X_n-x_0+\frac{2n}{5}\geq \frac{2n}{5}-x_0)\leq \exp\left(\frac{-\frac{(2n-5x_0)^2}{25}}{4n}\right)$.
    Similarly to the previous bound, we get that:
$\mathbb{E}(T^N)\leq \sum_{n=1}^\infty \mathbb{P}(T\geq \sqrt[N]{n}) \leq \sum_{n=1}^\infty\exp(\mathcal{O}(-\sqrt[N]{n}))$, thus the sum converges for any fixed $N$, hence every moment of the stopping time is finite. 
    
\end{document}